\documentclass[reprint,aps,twocolumn,notitlepage,pra,superscriptaddress,footinbib]{revtex4-2}

\usepackage[usenames,dvipsnames]{color}

\usepackage{physics}
\usepackage{amsmath}
\usepackage{amssymb}
\usepackage{amsthm}
\usepackage{bm}
\usepackage{bbm}
\usepackage{latexsym} 
\usepackage{mathtools}
\usepackage{mathrsfs}
\usepackage{tensor}
\usepackage{tcolorbox}
\usepackage{CJKutf8}
\usepackage{microtype}

\usepackage{thmtools}
\declaretheorem[
]{theorem}
\declaretheorem[
]{lemma}

\declaretheorem[
]{proposition}
\declaretheorem[
]{corollary}

\usepackage{CJKutf8}

\newcommand{\defint}[3]{\int_{#1}^{#2}{#3}}
\newcommand{\defsum}[3]{\sum_{#1}^{#2}{#3}}

\usepackage{graphicx} % for figures
\usepackage[caption=false]{subfig} % subfigures (via \subfloat[]{})
\usepackage[inline]{enumitem} % in-line lists and \setlist{} (below)
\setlist[enumerate,1]{label={(\roman*)}} % default to roman numbering
\setlist{nolistsep} % more compact spacing between environments
\usepackage{comment} % comment out large sections of text

\usepackage{multirow}
\usepackage{hhline}

\usepackage{natbib}
\definecolor{themecolor}{HTML}{d6791c}
\usepackage[colorlinks=true,allcolors=themecolor]{hyperref}
\usepackage[capitalize,noabbrev,nameinlink]{cleveref}
\crefname{appendix}{App.}{Apps.}
\crefname{definition}{Def.}{Defs.}
\crefname{equation}{Eq.}{Eqs.}
\crefname{figure}{Fig.}{Figs.}
\crefname{proposition}{Prop.}{Props.}
\crefname{table}{Tab.}{Tabs.}
\Crefname{table}{Table}{Tables}
\crefname{theorem}{Thm.}{Thms.}
\crefname{section}{Sec.}{Secs.}

\let\originalleft\left
\let\originalright\right
\renewcommand{\left}{\mathopen{}\mathclose\bgroup\originalleft}
\renewcommand{\right}{\aftergroup\egroup\originalright}

\begin{document}
\begin{CJK}{UTF8}{gbsn}
\renewcommand{\labelenumii}{\theenumii}
\renewcommand{\theenumii}{\theenumi.\arabic{enumii}.}

\newcommand{\Blair}{Montgomery Blair High School, Silver Spring, MD 20901, USA}
\newcommand{\QuICS}{Joint Center for Quantum Information and Computer Science, National Institute of Standards and Technology and University of Maryland, College Park, MD 20742, USA}
\newcommand{\JQI}{Joint Quantum Institute, National Institute of Standards and Technology and University of Maryland, College Park, MD 20742, USA}
\newcommand{\UMDPhysics}{Department of Physics, University of Maryland, College Park, MD 20742, USA}
\newcommand{\UMDCS}{Department of Computer Science and Institute for Advanced Computer Studies, University of Maryland, College Park, Maryland 20742, USA}

\newcommand{\thetitle}{Moving quantum sensors}
\title{\thetitle}
\author{Evan Zhang}
\affiliation{\Blair}
\affiliation{\QuICS}
\affiliation{\JQI}
\author{Erfan Abbasgholinejad}
\affiliation{\QuICS}
\affiliation{\JQI}
\affiliation{\UMDPhysics}
\author{\begin{CJK}{UTF8}{gbsn}Yu-Xin Wang (王语馨)\end{CJK}}
\affiliation{\QuICS}
\author{Ali~Fahimniya}
\affiliation{\QuICS}
\affiliation{\JQI}
\author{Daniel~J.~Spencer}
\affiliation{\QuICS}
\affiliation{\JQI}
\affiliation{\UMDPhysics}
\author{Sean R. Muleady}
\affiliation{\QuICS}
\affiliation{\JQI}
\author{Mohammad Hafezi}
\affiliation{\JQI}
\affiliation{\QuICS}
\author{Andrew~M.~Childs}
\affiliation{\QuICS}
\affiliation{\UMDCS}
\author{Alexey V. Gorshkov}
\affiliation{\QuICS}
\affiliation{\JQI}

\date{\today}
\begin{abstract}
    We investigate whether controlled motion of a quantum sensor offers a metrological advantage in a field-detection task. We consider a localized field of known strength $B$ occupying an unknown contiguous region making up a fraction $1/N$ of the total search space, with a qubit sensor controlled for a total time $T$ to maximize the probability of detecting the field. We study the regime in which the characteristic timescale of the field is slow compared to the time to move the sensor between any two regions, and we distinguish classical motion, where the sensor follows a definite trajectory, from quantum motion, where the sensor can be in a coherent superposition of locations. For a single classically moving sensor, we derive an optimal multi-stop protocol and obtain the exact maximum detection probability. Allowing coherent spatial superposition maps the single-sensor problem onto Hamiltonian search and reduces the interrogation time for successfully detecting the field from $\Theta(N/B)$ to $\Theta(\sqrt{N}/B)$. We then extend the problem to $M$ sensors and identify a hierarchy of optimal time scalings for constant success probability determined by the available spatial and inter-sensor quantum resources. Unentangled coherently moving sensors achieve $T=\Theta(\sqrt{N/M}/B)$ by searching disjoint regions in parallel, while entanglement among the sensors can further enhance the effective field coupling and achieve $T=\Theta(\sqrt{N}/(MB))$. Matching lower bounds establish the optimality of these scalings in the presence of arbitrary controls and ancilla systems. Our results separate the roles of classical motion, spatial coherence, and entanglement in spatial field detection.
\end{abstract}

\maketitle
\end{CJK}

\section{Introduction}\label{sec:intro}
Quantum sensors enable precision measurement of weak fields using coherent quantum dynamics~\cite{degen2017,Giovannetti2011AdvancesQuantumMetrology}. In standard metrological settings the sensor locations are fixed, and one optimizes over probe states, control pulses, and measurement choices for enhancement in different sensing tasks~\cite{boixo2007generalized,Giovannetti2006QuantumMetrology,degen2017,Giovannetti2011AdvancesQuantumMetrology,khan2025quantum}. A less studied complementary control knob is \emph{where} the sensors are placed in space and how---if at all---they are moved during the interrogation time~\cite{aigner2025power, eldredge2018,abbasgholinejad2025optimally}.

Experimentally, the controlled motion of qubits is rapidly becoming routine across several platforms. In solid-state sensing, single-spin magnetometers based on nitrogen vacancy centers can be integrated into scanning probes, enabling nanoscale spatial mapping by physically translating the sensor relative to a sample~\cite{Maletinsky2012ScanningNV,Schirhagl2014NVReview,Rondin2014NVReview}. In trapped-ion architectures, qubits can be shuttled between zones of segmented traps with high fidelity, motivated originally by scalable multi-zone designs and demonstrated in increasingly optimized transport protocols~\cite{Kielpinski2002IonTrapArchitecture,Blakestad2009XJunction,Bowler2012DiabaticTransport,Parazzoli2022Shuttling,ransford_98-qubit_2026}. In neutral-atom systems, coherent transport of single-atom qubits using optical tweezers has been demonstrated, and large-scale atom arrays rely on repeated rearrangement as a core primitive~\cite{Beugnon2007TweezerTransport,Barredo2016Assembler,Endres2016Assembly,BrowaeysLahaye2020RydbergReview, bluvstein2022quantum, salesrodriguez2025magic, constantinides2024optimal}.

Beyond classical transport, there have also been experimental realizations of coherent spatial delocalization of a qubit. For example, motional Schr{\"{o}}dinger-cat states of trapped ions create superpositions of spatially separated wave packets correlated with an internal qubit~\cite{Monroe1996CatState,saner2026generating, leindecker2026state, diaz2026arbitrary}. In optical lattices and tweezer arrays, state-dependent transport coherently splits and displaces atomic wave packets~\cite{Mandel2003SpinDependentTransport,greve_entanglement-enhanced_2022-1, aeppli_hamiltonian_2022, malia_distributed_2022, bluvstein2026fault, pampel2026motional, manetsch2025tweezer}. Related capabilities underlie experimental quantum walks, where an atom/ion coherently explores multiple positions and interferes between paths~\cite{Karski2009QuantumWalk,Schmitz2009IonQuantumWalk,young_tweezer-programmable_2022-1}.

These developments motivate a basic question: \emph{what is the role of sensor motion in metrology and is there any advantage associated with motional quantum coherence?}
To isolate the role of motion, we study a “detection” problem. A localized field (e.g., a magnetic field) of known magnitude $B$ occupies an unknown contiguous segment of the search space. When the sensor enters the field region, the field couples to the sensor's internal state and imprints a detectable phase. We aim to maximize the probability of detecting the field within a fixed total interrogation time $T$, and we ask how this performance depends on the allowed form of sensor motion. 
% whether allowing the sensor(s) to move improves this detection probability compared with stationary sensing strategies.

Crucially, we distinguish two notions of motion. In \emph{classical motion}, the sensor follows a classical trajectory and occupies a definite position at each time; see~\cref{fig:schematic}(a). In \emph{quantum motion}, the sensor’s spatial degree of freedom can remain in coherent superposition and be processed unitarily, enabling interference between different spatial locations; see~\cref{fig:schematic}(b). This distinction is analogous to the separation between classical and quantum search, and raises the question of whether quantum motion can similarly enhance the detection of the field~\cite{Grover1996STOC,Boyer1998TightBounds,Zalka1999Optimal,farhi1998analog,Childs2000PrecisionMeasurement}.

Our results characterize how these different kinds of motion affect the achievable detection performance. After establishing the optimal protocols for a single sensor with classical and quantum motion in \cref{sec:classical,sec:quantum}, we determine the optimal time scalings for $M$ moving sensors to reach constant success probability in \cref{sec:multi}. We show that coherent motion and entanglement among the sensors provide distinct advantages. Specifically, coherent motion yields a quadratic improvement from $N$ to $\sqrt{N}$, as in Grover search, while entanglement improves the dependence on the number of sensors from $1/\sqrt{M}$ to the Heisenberg scaling $1/M$.

\begin{figure}[tb]
    \centering
    \includegraphics[width=\linewidth]{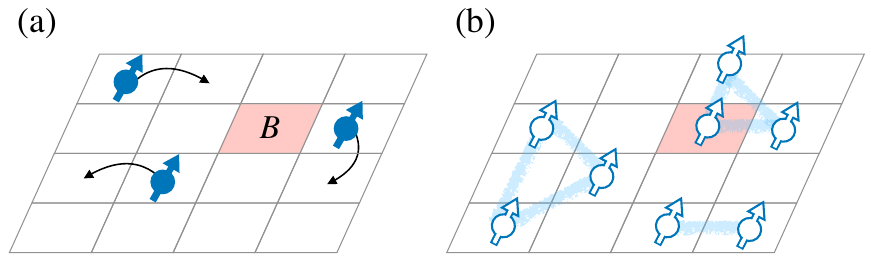}
    \caption{Schematic of the detection task. A field of known strength $B$ occupies an unknown localized region in the search space (red). A qubit sensor (blue) is controlled for a total interrogation time $T$ and then measured; the detection probability depends on the total time the sensor traverses the field region. (a) Classical motion: sensors occupy definite positions at all times and follow classical trajectories between sites. (b) Quantum motion: the sensor’s position degree of freedom can be prepared in a spatial superposition over multiple sites (represented by open arrows connected by thick translucent lines), enabling interference between different spatial locations.}
    \label{fig:schematic}
\end{figure}

\section{Model}\label{sec:setup}
We consider a one-dimensional ring of length $L$ containing a localized magnetic field of known magnitude $B$. The field is supported on the region $[x_0,x_0+\Delta]\text{ mod}(L)$, with an \emph{unknown} starting coordinate $x_{0}\in [0,L)$ with uniform prior distribution. We assume $N \coloneqq L/\Delta \in \mathbb{Z}$. A qubit sensor with an internal Hamiltonian coupling to the field is described by a position degree of freedom and an internal two-level system. When the sensor overlaps the field region, it evolves under a unitary generated by the Pauli-$Z$ operator, which we denote as $\sigma_{z}$. The corresponding Hamiltonian is then
\begin{equation}
    H_{x_{0}}(t) = B\sigma_{z}\otimes \int_{x_{0}}^{x_{0}+\Delta}\ketbra{x}\mathop{dx} + H_c(t),\label{eq:H}
\end{equation}
where $t \in [0,T]$. Here, $\ket{x}$ denotes a position eigenstate and $H_c(t)$ is a time-dependent control Hamiltonian acting on the sensor (including its internal and position Hilbert space) and any ancillary degrees of freedom. The control Hamiltonian is independent of the unknown position $x_0$ and is responsible for the internal dynamics, the motion of the sensor, and entanglement between the sensor and the ancilla. We assume these controls are arbitrarily fast, and in particular, neglect the time taken to move the sensor. This approximation is motivated by recent experiments, where qubit transport can occur on timescales much shorter than the qubit coherence time~\cite{bluvstein2022quantum,salesrodriguez2025magic,ransford_98-qubit_2026}. Thus, for weak fields with $1/B$ approaching the available coherence time, neglecting the transport time is a reasonable idealization.
This differs from locality-constrained search models~\cite{benioff2002space,aaronson2005quantum,childs2004spatial}, which consider a related problem where motion between neighboring sites has comparable cost to interrogating a given site. Moreover, this framework captures adaptive protocols through $H_c(t)$ and coupling the sensor to arbitrary ancillas.

We now consider the task of field detection. The system is initialized at time $t = 0$ and evolves for a total interrogation time $T$ under the Hamiltonian in~\cref{eq:H}. We denote the state at time $T$ by $\ket{\psi_{x_0}(T)}$, which depends on the position of the field $x_0$. In the absence of the field, the system evolves only under $H_c(t)$, yielding the final state $\ket{\psi_{\emptyset}(T)}$. In this case, the probability of successfully detecting the field at $x_0$ is $p_{x_0}(T) := 1 - \abs{\braket{\psi_\emptyset(T)}{\psi_{x_0}(T)}}^2$, and since $x_0$ is uniformly random, the average success probability is given by
\begin{equation}
    P(T) = 1-\frac{1}{L}\int_0^L \abs{\braket{\psi_{\emptyset}(T)}{\psi_{x_0}(T)}}^2\mathop{dx_0}.\label{eq:P_succ}
\end{equation}
We emphasize that this is a one-sided state-discrimination probability and a positive measurement outcome certifies the presence of the field. Its relation to other standard metrics is discussed in~\cref{app:state_discrimination}.

Our goal is to optimize the control term $H_c(t)$ in order to maximize $P(T)$ for a total time $T$, or equivalently to minimize the time $T$ to reach a target success probability. We compare three increasingly powerful models: (i) \emph{classical motion} of a single sensor, where the sensor follows an arbitrary classical trajectory $x(t)$ for $t \in [0,T]$; (ii) \emph{coherent (quantum) motion} of a single sensor, where the spatial degree of freedom is a quantum register that can remain in superposition and be acted on coherently; and (iii) $M$ sensors, where we consider both classical and quantum motion, and  entanglement between the sensors.

\section{Optimal protocol for classical motion}\label{sec:classical}
We first consider a \emph{single} sensor whose motion is classical, i.e., the sensor follows a trajectory $x(t)$ with a definite position at each time. We assume no ancilla system and restrict to single-shot field detection (i.e., no adaptive protocols). In this case, the optimal internal initial state of the sensor is $\ket{+} = (\ket{0}+\ket{1})/\sqrt{2}$, and since the interaction in~\cref{eq:H} depends only on whether the sensor lies inside the (unknown) field region, any classical protocol is fully characterized by the density of \emph{occupation time} $\tau(x)$ it assigns to each location. In particular, it suffices to specify a non-negative function $\tau(x)\ge 0$ for $x\in[0,L]$, with $\int_0^L \tau(x) dx = T$. Hence, the probability to detect the field in~\cref{eq:P_succ} is
\begin{equation}
    P_\mathrm{cl}(T) = \frac{1}{L}\int_0^L \sin^2\left(B\int_{x_0}^{x_0+\Delta}\tau(x)\mathop{dx}\right)\mathop{dx_0}.\label{eq:Pclassical_general}
\end{equation}
The optimization problem therefore reduces to maximizing~\cref{eq:Pclassical_general} subject to the constraints on the function $\tau(x)$.

\begin{theorem}[Optimal classical protocol]\label{thm:classical_opt}
    For $0 \le T \le \pi N/{(2B)}$, the maximum success probability over all classically moving single-sensor protocols is
    \begin{equation}
        P_\mathrm{cl}^\star(T) = \max_{1 \le k \le N}\;\frac{k}{N}\,\sin^2\left(\frac{BT}{k}\right).\label{eq:Pcl_opt}
    \end{equation}
    An optimal protocol achieving~\cref{eq:Pcl_opt} is a \emph{$k$-stop protocol}: choose $k$ distinct sites separated by $\Delta$ and dwell for an equal time $T/k$ at each point. Perfect detection, $P_\mathrm{cl}^\star(T)=1$, is achievable if and only if $T \ge \pi N/{(2B)}$.
\end{theorem}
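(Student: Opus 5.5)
The plan is to reduce the continuous optimization in \cref{eq:Pclassical_general} to a family of finite-dimensional problems, solve the finite problem by a pairwise "smoothing" argument, and then check that the $k$-stop protocol saturates the bound.

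\textbf{Reduction to a finite problem.} Write $\phi(x_0)\coloneqq B\int_{x_0}^{x_0+\Delta}\tau(x)\,dx$. For each shift $s\in[0,\Delta)$, the $N$ windows $[s+j\Delta,\,s+(j+1)\Delta)$, $j=0,\dots,N-1$, tile the ring exactly, because $N=L/\Delta\in\mathbb{Z}$. Hence the phases $\phi_j(s)\coloneqq\phi(s+j\Delta)\ge 0$ satisfy $\sum_j\phi_j(s)=BT\eqqcolon\Phi$ for every $s$. Splitting the $x_0$ integral accordingly gives
\begin{equation*}
P_\mathrm{cl}(T)=\frac{1}{\Delta}\int_0^\Delta \frac{1}{N}\sum_{j=0}^{N-1}\sin^2\phi_j(s)\,ds .
\end{equation*}
It therefore suffices to prove the finite statement
\begin{equation*}
F(\Phi)\coloneqq\max\Big\{\tfrac1N\textstyle\sum_j\sin^2\phi_j:\ \phi_j\ge0,\ \sum_j\phi_j=\Phi\Big\}=\max_{1\le k\le N}\tfrac kN\sin^2(\Phi/k)\quad\text{for }\Phi\le N\pi/2 .
\end{equation*}
The upper bound on $P_\mathrm{cl}$ then follows pointwise in $s$.

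\textbf{Solving the finite problem.} This is the step I expect to be the main obstacle. By compactness a maximizer exists. I would use the two-variable identity
\begin{equation*}
\sin^2a+\sin^2b=1-\cos(a+b)\cos(a-b).
\end{equation*}
With $a+b=\sigma$ held fixed and the difference $a-b$ varied over its allowed range ($a,b\ge0$), this is maximized in one of two ways. If $\cos\sigma<0$, take $a=b$. If $\cos\sigma>0$, make $|a-b|$ extremal, i.e.\ push one variable to $0$; at most it reaches $\pi$, which reduces by periodicity to shifting budget. Applying this to every pair of positive entries of a maximizer shows the following. First, no positive entry can exceed $\pi/2$ usefully: reducing an entry modulo $\pi$ frees budget, and that budget can be reabsorbed, which needs care and is where $\Phi\le N\pi/2$ enters. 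Second, any two positive entries $a,b$ with $a+b<\pi/2$ can be merged into one without loss. Third, any two positive entries with $a+b\in(\pi/2,\pi]$ are equal. Together these should force the positive entries into a common value $\Phi/k$, with at most one exceptional entry. The exceptional entry would then be removed by a one-parameter check: compare with $k$ or $k+1$ equal entries, using the concavity of $\sin^2$ on $[\pi/4,\pi/2]$ together with the merging rule. I would first try to organize this as a KKT argument: every positive entry satisfies $\sin2\phi_j=\lambda$, which allows at most the two values $\phi$ and $\pi/2-\phi$, and the pairwise rule excludes their coexistence except in degenerate cases. The result is $F(\Phi)=\max_k\frac kN\sin^2(\Phi/k)$.

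\textbf{Achievability and the perfect-detection claim.} For the $k$-stop protocol, $\tau$ is a sum of $k$ delta functions of weight $T/k$ placed at spacing $\Delta$. For almost every $x_0$, the window $[x_0,x_0+\Delta]$ contains either exactly one stop or none, and it contains one on a set of measure $k\Delta$. This gives $P_\mathrm{cl}=\frac kN\sin^2(BT/k)$, which matches the bound. For the last claim, note that $P=1$ requires $\sin^2\phi_j(s)=1$ for all $j$ and almost every $s$. That forces each $\phi_j\ge\pi/2$, so $BT=\sum_j\phi_j\ge N\pi/2$. Conversely, the $N$-stop protocol with $T=\pi N/(2B)$ gives phase exactly $\pi/2$ in every window, and for larger $T$ the extra time can be wasted, for example by dwelling an additional multiple of $\pi/B$ at a stop.
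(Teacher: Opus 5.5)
Your solution of the finite problem follows the paper's argument. The identity $\sin^2a+\sin^2b=1-\cos(a+b)\cos(a-b)$ gives three tools: a clipping step, a merge rule for pairs with small sum, and a strict averaging rule for unequal pairs with sum above $\pi/2$. Two details bring it in line with the paper:
\begin{itemize}
\item Clipping should replace $\phi_j$ by the unique $\phi_j'\in[0,\pi/2]$ with $\sin^2\phi_j'=\sin^2\phi_j$, since reduction modulo $\pi$ only reaches $[0,\pi)$. The freed budget is then spread over entries below $\pi/2$ without pushing any past $\pi/2$, which $\Phi\le N\pi/2$ permits.
\item Your exceptional entry arises only because you made the merge rule strict. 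At $a+b=\pi/2$ the pair contributes exactly $1$ however it is split, so merging there costs nothing. Merging to exhaustion with $a+b\le\pi/2$ and then averaging already forces all positive entries to be equal, so no one-parameter check or KKT analysis is needed.
\end{itemize}

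Where you genuinely differ is the passage to the continuum. The paper refines to $rN$ cells and groups the field's starting cell by residue modulo $r$, so that each class is an $N$-site schedule of total time $T$. It then takes $r\to\infty$ through a Riemann-sum limit. Your average over the shift $s\in[0,\Delta)$ is the continuum form of that residue grouping. It bounds $P_\mathrm{cl}$ pointwise in $s$ for every occupation measure, delta-function stops included, with no limit to justify. It is the cleaner route.

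One correction on perfect detection: dwelling an extra multiple of $\pi/B$ only reaches times with $BT-N\pi/2\in\pi\mathbb{Z}$. Your necessity argument in fact shows more. Suppose the entire time $T$ is spent on the ring and read out as in the occupation-time formula. Then $P=1$ forces every $\phi_j(s)\in\pi/2+\pi\mathbb{Z}$, hence $BT\in N\pi/2+\pi\mathbb{Z}_{\ge 0}$. Sufficiency for general $T\ge\pi N/(2B)$ therefore rests on the convention the paper invokes, namely measuring the sensor at time $\pi N/(2B)$. You should state that convention explicitly.
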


\noindent\emph{Proof sketch.}
To prove the result, we first consider a discrete version of the problem. We divide the ring into $N=L/\Delta$ equally spaced sites, where the field occupies exactly one site. A classical sensing protocol is then specified by times $\{\tau_j\}_{j=1}^{N}$, where $\tau_j$ is the total time spent by the sensor at site $j$, with $\tau_j\geq0$ and $\sum_j\tau_j=T$. If the field is located at site $j$, the sensor interacts with it for time $\tau_j$. Thus, the probability in~\cref{eq:Pclassical_general} becomes
% We begin with the discretized problem on $N = L/\Delta$ equally spaced sites, where the field occupies exactly one site. 
% Specifying times $\{\tau_j\}_{j=1}^{N}$ for each point, the probability in~\cref{eq:Pclassical_general} 
$P_\mathrm{cl}(T) = \sum_{j=1}^{N}\sin^2(B \tau_j)/N$. For $T \le \pi N/{(2B)}$, we can restrict to $\tau_j \in[0,\pi/(2B)]$, because any time beyond $\pi/(2B)$ can be reduced without decreasing $\sin^2(Bt_j)$ and redistributed to increase the average. Next, optimizing $P_\mathrm{cl}(T)$ over $\{\tau_j\}_{j=1}^{N}$ subject to $\tau_j \ge 0$ and $\sum_j \tau_j = T$ yields~\cref{eq:Pcl_opt} (see~\cref{app:optimal_classical} for a more detailed derivation).

To pass from this discretization to the continuum functional optimization in~\cref{eq:Pclassical_general}, for any integer $r \ge 1$, we discretize the ring into $rN$ equally spaced sites so that the field spans $r$ consecutive sites. Grouping field starting positions by residue class modulo $r$ reduces the refined problem to $r$ independent copies of the $N$-site problem above, implying that the same optimal value from~\cref{eq:Pcl_opt} holds for all $rN$. Taking $r \to \infty$ then yields the continuous description in~\cref{eq:Pclassical_general}. A complete proof is provided in~\cref{app:optimal_classical}.

The optimal $k$-stop protocol achieves its performance by \emph{sudden jumps} between $\Delta$-separated locations, where we assume the travel time between stops is negligible compared to the interrogation time. In~\cref{app:const_velocity}, we study how moving a sensor at a constant velocity performs (i.e., no sudden jumps), and show that it is generally suboptimal at finite $N$ (see also~\cref{fig:classical_compare}), but its performance converges to the $k$-stop protocol as $N \to \infty$. In this large-$N$ limit, with $BT/N$ held constant, one may treat $k/N$ as a continuous variable. Optimizing~\cref{eq:Pcl_opt} yields $k^* \simeq 0.8580 BT$, and $P_\mathrm{cl}^\star(T) \simeq 0.7246 BT/N$. Since the sensor spends an equal time at each site before moving a distance $\Delta$, the  corresponding optimal velocity is $v^*\simeq \Delta k^*/T\simeq 0.8580 B\Delta$.

\begin{figure}[tb]
    \centering
    \includegraphics[width=\linewidth]{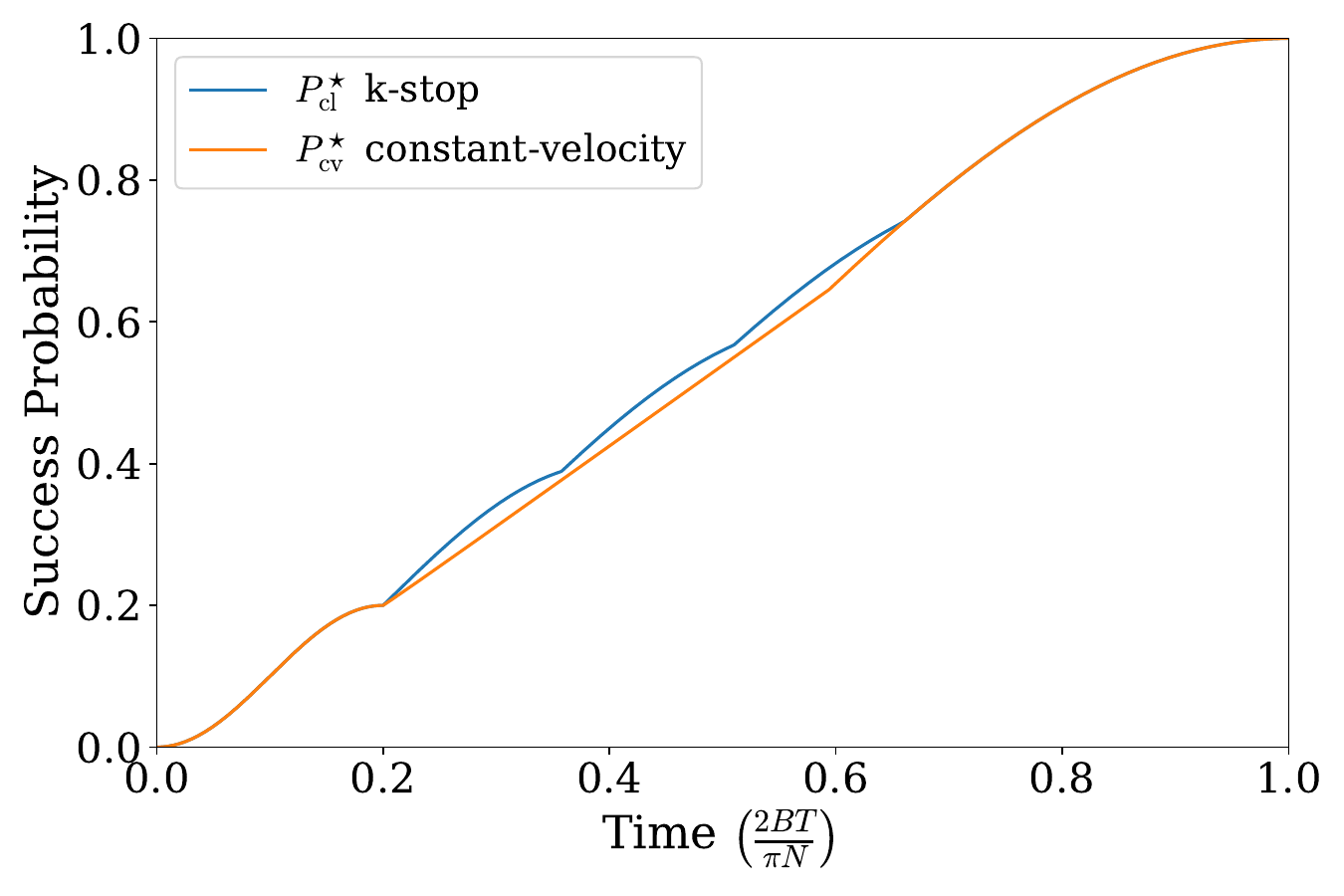}
    \caption{Classical motion protocols for $N=5$. Success probability versus time $T$ for the optimal $k$-stop protocol and the constant-velocity protocol.
    At short times both curves coincide with the stationary strategy ($v=0$, $k=1$), while at long times
    they approach the uniform-coverage limit ($k=N$, one full loop). In the intermediate regime the
    constant-velocity lies below the $k$-stop protocol; this gap shrinks as $N$ increases.}
    \label{fig:classical_compare}
\end{figure}

\section{Coherent motion and Grover-type enhancement}\label{sec:quantum}
We next discuss how allowing coherent spatial motion maps our sensing task to the setting of Refs.~\cite{farhi1998analog, Childs2000PrecisionMeasurement}, where the authors show that Grover search can be viewed as improving distinguishability among a set of Hamiltonians~\cite{Childs2000PrecisionMeasurement,allen2025quantum}. Similarly, we can use the unknown field region in our model as an oracle to achieve a $\Theta(\sqrt{N})$ search speed-up~\cite{Grover1996STOC, Childs2000PrecisionMeasurement, farhi1998analog, Zalka1999Optimal, bennett1997strengths}.

While our results are general, for ease of presentation, we 
work with the discrete model throughout the rest of this paper. All the lower bounds discussed in this Section hold for continuous $x$, as long as we replace $1/N$ by $\Delta /L$ (see App.~\ref{app: grover_proof} for detail). The position space of the sensor thus consists of $N = L/\Delta$ equally spaced sites with orthonormal position basis
$\{\ket{j}\}_{j=1}^{N}$, where the (unknown) field location corresponds to a single marked site $x_0 \in \{1,\dots,N\}$. We review two natural ways to exploit the oracle to localize the sensor at the unknown field location, starting from the uniform superposition state $\ket{s} = \frac{1}{\sqrt{N}}\sum_{j=1}^{N}\ket{j}$~\cite{Grover1996STOC, Childs2000PrecisionMeasurement,farhi1998analog}.

\paragraph{Grover iterations (oracle + reflection).}
Evolving under the Hamiltonian in~\cref{eq:H} with no control for time $\pi/B$ implements a phase oracle on the marked site while leaving the internal state unchanged:
\begin{equation}
    O_{x_0}
    \coloneqq \exp\!\left[-i\frac{\pi}{B}\, \big(B\,\sigma_z\otimes \ket{x_0}\!\bra{x_0}\big)\right] = I-2\,\ketbra{x_0}.\label{eq:oracle_def}
\end{equation}
Thus, the unknown field applies a $-1$ phase to $\ket{x_0}$ and acts trivially on $\ket{j\neq x_0}$, which is essentially the Grover oracle. Therefore, implementing the Grover iterations \cite{Grover1996STOC}, we can localize the sensor at the location of the field in time $\Theta({\sqrt{N}}/{B})$.

\paragraph{Continuous-time coherent-drive.}
Alternatively, we can add a controlled driving term $H_c = B\ketbra{s}$ while keeping the internal qubit in a $\sigma_z$ eigenstate so that the unknown field acts as a spatial potential $B\ketbra{x_0}$ on the position register. The effective Hamiltonian,
\begin{equation}
    H_{x_0} = B\left(\ketbra{x_0}+\ketbra{s}\right),\label{eq:Heff_drive}
\end{equation}
induces Rabi oscillations in the two-dimensional subspace spanned by $\ket{x_0}$ and the component of $\ket{s}$ orthogonal to $\ket{x_0}$, transferring amplitude from $\ket{s}$ to $\ket{x_0}$ in time $\Theta(\sqrt{N}/B)$ \cite{farhi1998analog}. 

After the localization step, we switch the internal state to $\ket{+}$ and apply an additional interaction interval of duration $\pi/(2B)$ to map $\ket{+} \mapsto \ket{-}$ on the sensor, yielding the same $\Theta(\sqrt{N}/B)$ time for detecting the field with constant success probability.

Moreover, the optimality of the $\Theta(\sqrt{N}/B)$ scaling follows directly from the optimality of Grover search and its continuous-time analogue \cite{bennett1997strengths,Zalka1999Optimal,farhi1998analog}.
Thus, we obtain the following result:

\begin{corollary}[Optimal quantum protocol]\label{cor:quantum_single}
    There exists a coherent-motion protocol achieving $P \ge 2/3$ in time $T = \mathcal{O}(\sqrt{N}/B)$. Furthermore, any coherent protocol with a single sensor obeys the lower bound $T = \Omega(\sqrt{N}/B)$ to achieve $P \ge 2/3$.
\end{corollary}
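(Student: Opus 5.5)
The plan has two parts: an explicit protocol for the upper bound, and a hybrid (BBBV/Zalka-type) argument for the lower bound. The lower bound is adapted so that it covers continuous-time Hamiltonian control with arbitrary ancillas, which is the setting of \cref{eq:H}.

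\emph{Achievability.} Start the position register in $\ket{s}$ and the internal qubit in $\ket{0}$. Run $R=\lfloor \frac{\pi}{4}\sqrt{N}\rfloor$ Grover iterations. Each iteration consists of a free evolution of duration $\pi/B$, which by \cref{eq:oracle_def} implements $O_{x_0}$, followed by the instantaneous control reflection $2\ketbra{s}-I$. The position state then satisfies $\abs{\braket{x_0}{\psi}}^2 \ge 1-1/N$.

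Next, switch the internal state to $\ket{+}$ and let the field act for time $\pi/(2B)$. This maps $\ket{x_0}\ket{+}\mapsto \ket{x_0}\ket{-}$, up to a global phase. Finally, undo the localization by running the time-reversed control sequence. Undoing is not actually needed, since $p_{x_0}$ only depends on the overlap with $\ket{\psi_\emptyset(T)}$, which we can compute directly.

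In the field-free branch the controls alone act, and the final state is some fixed $\ket{\psi_\emptyset}$. The simplest way to bound the overlap is to apply the whole protocol, including the $\pi/(2B)$ step, on both branches. The $\ket{x_0}$ component of $\ket{\psi_{x_0}}$ carries internal state $\ket{-}$. Any $\ket{\psi_\emptyset}$ that is a product state has internal state $\ket{+}$, because the control history is unitary and $x_0$-independent. Combining these gives $\abs{\braket{\psi_\emptyset}{\psi_{x_0}}}^2 \le (1-(1-1/N))^{1/2}$ or a similar $O(1/\sqrt N)$ bound, so $P\ge 2/3$ for $N$ larger than a constant. Small $N$ is handled directly by \cref{thm:classical_opt}. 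The total time is $\pi R/B+\pi/(2B)=O(\sqrt N/B)$.

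\emph{Lower bound.} Write the general protocol as evolution under \cref{eq:H} with arbitrary $H_c(t)$ on system plus ancilla. Let $\ket{\phi(t)}$ denote the field-free trajectory and $\ket{\psi_{x_0}(t)}$ the perturbed one. Differentiating $\norm{\psi_{x_0}(t)-\phi(t)}$ in the interaction picture gives
\begin{equation}
\frac{d}{dt}\norm{\psi_{x_0}(t)-\phi(t)} \le \norm{B\sigma_z\otimes\ketbra{x_0}\,\phi(t)} = B\sqrt{q_{x_0}(t)},
\end{equation}
where $q_{x_0}(t)=\norm{(I\otimes\ketbra{x_0})\phi(t)}^2$ is the field-free occupation of site $x_0$, and $\sum_{x_0} q_{x_0}(t)=1$. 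Integrating over $[0,T]$ and summing over $x_0$, then applying Cauchy--Schwarz twice (once in $t$, once in $x_0$), gives
\begin{equation}
\sum_{x_0}\norm{\psi_{x_0}(T)-\phi(T)} \le B\int_0^T \sum_{x_0}\sqrt{q_{x_0}(t)}\,dt \le BT\sqrt N .
\end{equation}
Success $P\ge 2/3$ forces $1-\abs{\braket{\phi}{\psi_{x_0}}}^2 \ge c$ for a constant fraction of $x_0$. For those $x_0$, $\norm{\psi_{x_0}-\phi}\ge c'$ after choosing global phases optimally. Hence $c'' N \le BT\sqrt N$, i.e.\ $T=\Omega(\sqrt N/B)$.

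The main obstacle is this last step: converting the averaged fidelity criterion into a norm-distance bound. The fidelity is phase-insensitive, but the hybrid bound is stated for one specific $\phi$. The fix is to note that $\min_\theta \norm{\psi-e^{i\theta}\phi}^2 = 2-2\abs{\braket{\phi}{\psi}} \ge 1-\abs{\braket{\phi}{\psi}}^2$. It then suffices to justify that the derivative bound still holds for the phase-adjusted comparison state, which it does, since a constant global phase does not affect the evolution. The continuum case follows by replacing sites with $\Delta$-intervals, with $1/N\to\Delta/L$, as indicated in the text.
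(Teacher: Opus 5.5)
Your proposal is correct and follows the paper's route. For achievability you use Grover iterations with the $\pi/B$ free evolution as the phase oracle, then a $\pi/(2B)$ readout interval; the field-free final state is exactly $\ket{s}\ket{+}$, since the reflections fix $\ket{s}$, so the squared overlap is $O(1/N)$. For $T=\Omega(\sqrt{N}/B)$ you use the continuous-time hybrid (Farhi--Gutmann/BBBV) argument, as in \cref{app: grover_proof}.

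The phase ``obstacle'' you flag is not one. The chain $\|\psi_{x_0}(T)-\phi(T)\|\ge\min_\theta\|\psi_{x_0}(T)-e^{i\theta}\phi(T)\|\ge\sqrt{p_{x_0}}$ closes the argument directly. Rerunning the hybrid bound against $e^{i\theta}\phi$ from $t=0$ would actually pick up an initial offset $|1-e^{i\theta}|$; the paper avoids this by accumulating the reference phase $\langle H_{x_0}\rangle\Delta t$ step by step.
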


We emphasize that this speed-up has already been discussed in Refs.~\cite{farhi1998analog,Childs2000PrecisionMeasurement} for Hamiltonian discrimination, and our observation is to identify coherent spatial motion as the physical resource that realizes the Hamiltonian oracle in the present sensing problem. It also allows a direct comparison between the known quantum-search scaling and the exact classical moving-sensor optimum derived in~\cref{sec:classical}. In~\cref{fig:quantum_vs_classical}, we highlight the separation between quantum and classical motion, in which the Grover-style protocol achieves high detection probability at much shorter times than the optimal classical strategy.

\begin{figure}[tb]
    \centering
    \includegraphics[width=\linewidth]{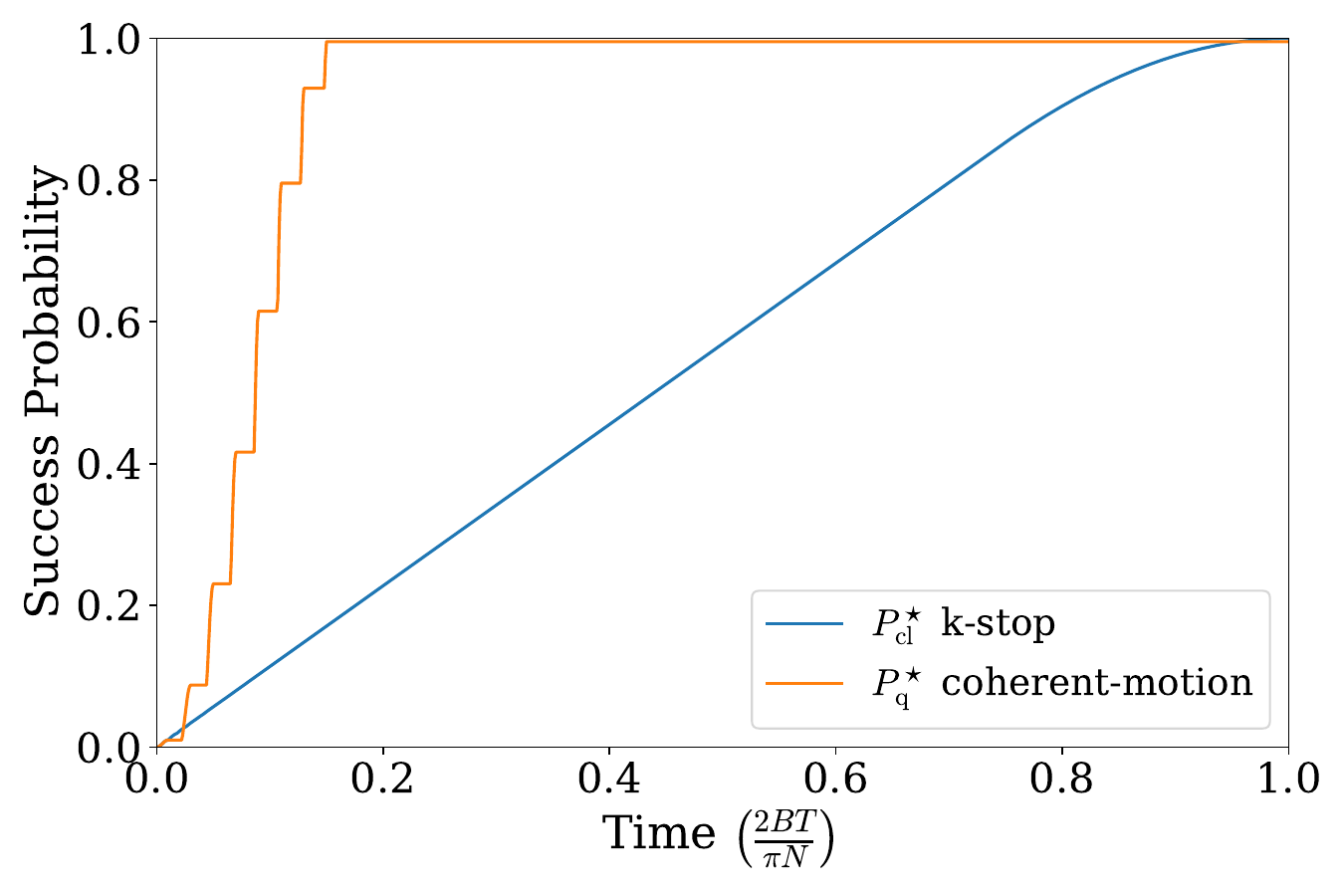}
    \caption{Optimal success probability versus interrogation time $T$ for the optimal classical protocol $P_\mathrm{cl}^\star(T)$ (the $k$-stop protocol) and the coherent-motion protocol $P_\mathrm{q}^\star(T)$ with $N = 100$. The quantum protocol reaches high success probability at shorter times, exhibiting the expected $\Theta(\sqrt{N})$ speedup. In this large-$N$ limit the classical optimum shows linear scaling $P_\mathrm{cl}^\star(T)\propto BT/N$, consistent with the continuous-$k$ optimization of the $k$-stop protocol.}
    \label{fig:quantum_vs_classical}
\end{figure}

\section{Multiple sensors and resource hierarchy}\label{sec:multi}
Finally, we consider $M$ sensors. Each sensor carries an internal qubit and a spatial degree of freedom on the same $N = L/\Delta$ discrete sites. When the field occupies the (unknown) site $x_0$, the interaction Hamiltonian is
\begin{equation}
    H_{x_0}(t) = B\sum_{m=1}^{M}\sigma^{(m)}_{z} \otimes \ketbra{x_0}^{(m)} + H_c(t),\label{eq:Hmulti}
\end{equation}
where $H_c(t)$ denotes an arbitrary control Hamiltonian (which can implement coherent motion and gates) that does not depend on $x_0$. We ask for the minimum interrogation time $T$ required to reach a constant success probability, e.g., $P(T) \ge 2/3$, averaged over the unknown field location $x_0$. The achievable scaling depends on two independent resources: \emph{spatial coherence} (equivalently, superpositions over spatial modes/sites) and \emph{entanglement} between sensors during the protocol. The resulting ``resource hierarchy'' is summarized in~\cref{tab:scalings}.

\begin{table}[b]
    \centering
    \begin{tabular}{|c|c|c|}
        \hline
        Sensor &  Classical motion &  Quantum motion \\
        \hline 
        Entanglement  $\checkmark$  &  $\Theta({N}/{M})$ & $\Theta( {\sqrt{N}}/{M})$ \\
        \hline 
        Entanglement $\times$ & $\Theta({N}/{M })$ & $\Theta(\sqrt{{N}/{M}})$ \\
        \hline 
    \end{tabular}
    \caption{Asymptotic interrogation time $T$ to reach a constant average success probability, e.g., $P \ge 2/3$, when the field occupies one of $N$ sites and there are $M$ sensors.}
    \label{tab:scalings}
\end{table} 

\begin{theorem}[Multi-sensor optimal search time scaling laws]\label{thm:multi_scaling}
    For $M \le N$, the optimal interrogation time scalings in~\cref{tab:scalings} are achievable and optimal (up to constant factors) under the corresponding resource assumptions.
\end{theorem}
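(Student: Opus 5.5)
We prove the four entries of \cref{tab:scalings} by pairing an explicit protocol (upper bound) with a lower bound for each column.

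\textbf{Achievability.} For classical motion, we split the $N$ sites into $M$ blocks of size about $N/M$. Sensor $m$ runs the optimal single-sensor $k$-stop protocol of \cref{thm:classical_opt} on block $m$, with $k=\lceil N/M\rceil$ and dwell time $\pi/(2B)$ per site. This detects the field with certainty in time $T=\lceil N/M\rceil\pi/(2B)=\mathcal{O}(N/(MB))$, and it needs no entanglement.

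For unentangled quantum motion, sensor $m$ runs the single-sensor protocol of \cref{cor:quantum_single} on its own block, a search space of size $N/M$. With probability $1/M$ the field lies in that block, and then sensor $m$ detects it with probability $\ge 2/3$. So $T=\mathcal{O}(\sqrt{N/M}/B)$ gives constant success. Since the $M$ sensors act jointly, they form a single one-sided test that accepts if any sensor flags the field.

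For entangled quantum motion, we put all $M$ sensors in the same spatial superposition, i.e. one shared position register copied coherently: $\sum_j \ket{j}^{\otimes M}/\sqrt N$. The internal qubits are placed in a GHZ state. On the component $\ket{x_0}^{\otimes M}$ the interaction in \cref{eq:Hmulti} then acts as $MB\,Z^{\otimes}$-type collective phase, $e^{-iMBt(\cdot)}$. So the oracle $O_{x_0}$ is realized in time $\pi/(MB)$ rather than $\pi/B$, and so is the final $\pi/(2MB)$ detection step. Running $\Theta(\sqrt N)$ Grover iterations gives $T=\mathcal{O}(\sqrt N/(MB))$. The entanglement stays at $\mathcal{O}(1)$ cost: the control maps the position register to the symmetric copy state and back between oracle calls, and we take it to be free.

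\textbf{Lower bounds.} We use a hybrid (BBBV/Zalka-type) argument with Hamiltonian oracles. Let $\ket{\psi_\emptyset(t)}$ be the field-free trajectory and $\ket{\psi_{x_0}(t)}$ the trajectory with the field. Then
\[
\frac{d}{dt}\big\|\ket{\psi_{x_0}(t)}-\ket{\psi_\emptyset(t)}\big\|\le \big\|V_{x_0}\ket{\psi_\emptyset(t)}\big\|,\qquad V_{x_0}=B\sum_m \sigma_z^{(m)}\otimes\ketbra{x_0}^{(m)}.
\]
Averaging over $x_0$ and using Cauchy--Schwarz gives
\[
\frac1N\sum_{x_0}\|\psi_{x_0}-\psi_\emptyset\|\le T\Big(\frac1N\sum_{x_0}\langle V_{x_0}^2\rangle_\emptyset\Big)^{1/2}.
\]
We then bound $\sum_{x_0}\langle V_{x_0}^2\rangle$ in each regime.

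\emph{Generic (entanglement allowed).} We have $V_{x_0}^2\le B^2 n_{x_0}^2$, where $n_{x_0}$ counts the sensors at $x_0$. Since $n_{x_0}\le M$ and $\sum_{x_0}n_{x_0}=M$, we get $\sum_{x_0} n_{x_0}^2\le M^2$. This yields $\bar{\|\cdot\|}\le TBM/\sqrt N$, hence $T=\Omega(\sqrt N/(MB))$.

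\emph{Unentangled.} The state is a product across sensors, together with any per-sensor ancillas. The cross terms $\langle\sigma_z^{(m)}P^{(m)}\sigma_z^{(m')}P^{(m')}\rangle$ then factorize. Our plan is to use the fidelity $\prod_m F_m$ instead of the norm, which gives $1-\prod_m|\langle\cdot\rangle_m|^2\le\sum_m p^{(m)}_{x_0}$. Each single-sensor term obeys the single-sensor bound $\frac1N\sum_{x_0}p^{(m)}_{x_0}=\mathcal{O}(B^2T^2/N)$. This hybrid bound is quadratic in $T$, as in the proof of \cref{cor:quantum_single}. Summing over sensors gives $P=\mathcal{O}(MB^2T^2/N)$, hence $T=\Omega(\sqrt{N/M}/B)$.

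\emph{Classical motion.} Each sensor occupies a definite site at each time, so the $\sum_{x_0}$ of the total accumulated phase is at most $BMT$. At most $MBT/\epsilon$ sites can receive phase at least $\epsilon$. Sites with small phase contribute detection probability at most $\sin^2$ of that small phase, even with entangled internal states, because a phase $\phi$ applied to a state changes it by $\mathcal{O}(\phi)$ in the quadratic fidelity sense. We get $P\lesssim MBT/N$ up to constants, so $T=\Omega(N/(MB))$. This also shows entanglement cannot help without spatial coherence.

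\textbf{Main obstacle.} The hardest step is the unentangled lower bound. We must justify that local controls and local ancillas keep the global state a product at all times, and that detection is a global measurement on that product. The fidelity-product route above handles this, but it needs the quadratic single-sensor bound to hold uniformly for every sensor's protocol. We will verify this by rerunning Zalka's argument per sensor.
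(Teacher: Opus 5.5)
Your proposal is correct, and most of it coincides with the paper. Your three protocols are the paper's (i)--(iii); as in the paper's appendix, the entangled entry requires all $M$ sensors to be stacked on one site. Your generic and classical-motion lower bounds are the paper's hybrid argument in slightly different form. You use the second moment $\langle V_{x_0}^2\rangle$ instead of the variance, which costs nothing there, since the paper also drops the mean before using $\sum_{x_0} n_{x_0}^2\le M^2$. You also use a per-site phase count, equivalently $p_{x_0}\le\min(1,\Phi(x_0)^2)\le\Phi(x_0)$ with $\sum_{x_0}\Phi(x_0)=BMT$, instead of the paper's $\sum_{x_0}\sqrt{\mathrm{Var}(H_{x_0})}\le 2BM$.

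The genuine difference is the unentangled lower bound. The paper stays inside a single hybrid chain. It uses that, when the instantaneous no-field state is a product across sensor(-ancilla) groups, the variance is additive, so $\sum_{x_0}\mathrm{Var}(H_{x_0})\le 4B^2M$. This step needs the mean-subtracted variance. With the raw second moment the cross terms factorize but do not vanish: $M$ co-located sensors in $\ket{0}$ form a product state with $\sum_{x_0}\langle V_{x_0}^2\rangle=B^2M^2$. So your remark about factorizing cross terms would not suffice on its own, and you were right to switch routes.

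Your route instead factorizes the overlap, $1-\prod_m|\langle\cdot\rangle_m|^2\le\sum_m p^{(m)}_{x_0}$, and applies a per-sensor bound. That bound, $\frac1N\sum_{x_0}p^{(m)}_{x_0}\le\frac1N\sum_{x_0}\|\psi^{(m)}_{x_0}-\psi^{(m)}_\emptyset\|^2\le B^2T^2/N$, follows at once by squaring your own hybrid inequality and applying Cauchy--Schwarz in time. So the "main obstacle" you flag is already settled, and no separate rerun of Zalka's argument is needed.

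What your route buys is a transparent reduction to $M$ independent single-sensor problems, $P\le MB^2T^2/N$. What it costs is a stronger hypothesis: both the field-free and field-on trajectories must stay product, which is a control-based assumption (local controls, per-sensor ancillas, product initial state). The paper's argument only needs the no-field reference states to be product at each time. It therefore also covers controls that are nonlocal but leave the reference trajectory unentangled.
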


\noindent\emph{Proof sketch and protocols.}
We first describe protocols achieving these times and then discuss their optimality.

(i) \emph{No spatial coherence} (classical motion/placement): Without coherent superposition across sites, each sensor can only sample the unknown field location by occupying definite sites. A simple protocol places the $M$ sensors on $M$ distinct sites per round and repeats until successful detection; averaging over all possible $x_0$, this yields $P(T) \sim (MBT)/N$, hence $T = O(N/(MB))$ for this protocol.

(ii) \emph{Spatial coherence without entanglement}: Spatial superposition enables amplitude amplification within a chosen search subspace. With $M$ sensors that remain unentangled with each other, one may partition the $N$ sites into $M$ disjoint blocks of size approximately $N/M$, run the single-sensor coherent protocol of~\cref{sec:quantum} in parallel on each block, and finally measure; this achieves constant success in time
$T = O(\sqrt{N/M}/B)$.

(iii) \emph{Spatial coherence with entanglement}: One can prepare a GHZ-like internal state so that the oracle phase is amplified by a factor $M$, reducing the oracle time from $\pi/B$ to $\pi/(MB)$. Combining this $M$-fold faster oracle with Grover amplification over $N$ locations yields $T = O(\sqrt{N}/(MB))$.

Optimality of these protocols follows from bounding the success probability in~\cref{eq:P_succ}; see~\cref{app: grover_proof}~\footnote{Note that the proof in~\cref{app: grover_proof} also bounds the success probability for general state-discrimination, which upper bounds~\cref{eq:P_succ}.}. The rate of distinguishability growth is controlled by the standard deviation of the unknown local generator $H_{x_0}$ averaged over $x_0$. Without spatial coherence in the joint sensor-ancilla state, the averaged standard deviation is at most $\mathcal{O}(M/N)$, giving the $\Omega(N/(MB))$ lower bound; with spatial coherence but no entanglement between the sensors, the variance is at most $\mathcal{O}(M/N)$, giving $\Omega(\sqrt{N/M}/B)$; and with entanglement among the sensors, the variance can scale as $\mathcal{O}(M^2/N)$, giving $\Omega(\sqrt{N}/(MB))$. Full proofs are given in~\cref{app: grover_proof}.

\Cref{tab:scalings} highlights that spatial coherence is the resource responsible for the quadratic improvement from $N$ to $\sqrt{N}$, while entanglement between the sensor-ancilla systems can provide an additional $\sqrt{M}$ enhancement. However, without spatial coherence, the task remains a classical search over $N$ possible locations, and entanglement does not improve the scaling beyond the linear-in-$M$ speedup already obtained by probing $M$ locations in parallel. Finally, our lower bounds allow arbitrary ancilla systems, adaptive control, and general measurements, implying that the scalings in~\cref{tab:scalings} are optimal for these general strategies, provided that the joint sensor-ancilla states throughout the evolution satisfy the resource constraints of motion and entanglement.

\section{Discussion and outlook}\label{sec:outlook}
We introduced a field detection problem that isolates \emph{motional coherence} as a metrological resource. For a classically moving sensor we identified an optimal $k$-stop protocol and showed that allowing coherent spatial superposition qualitatively changes the picture: the unknown field region acts as an oracle, enabling a quadratic reduction in the interrogation time needed to reach high detection probability. For multiple sensors, the optimal scalings separate into two independent resources: spatial coherence reduces the effective search complexity ($N\to \sqrt{N}$), while entanglement can reduce the effective time per oracle call (from $1/\sqrt{M}B$ to $1/(MB)$). 

Although we have formulated our task in terms of field detection, the protocols considered here can also identify the field \emph{location} without changing their asymptotic interrogation-time scalings. For the classical $k$-stop protocol, this can be achieved by measuring the sensor after each stop, so that a positive detection identifies the corresponding field region. Similarly, the coherent-search protocols amplify the sensor's spatial population at the field location, allowing the location to be determined by a final position measurement.

Our results also extend directly to any search space that can be partitioned into $N$ candidate regions in which the field may reside. Since we do not impose locality constraints on the control Hamiltonian $H_c(t)$, only the number of candidate regions is relevant for the field detection task. Such higher-dimensional problems can therefore be represented as an effective one-dimensional search over $N$ regions.

Several extensions are immediate. First, realistic sensors experience dephasing and control errors, so it is important to quantify how robust the Grover-style enhancement is in the presence of noise, and to identify the crossover point where classical motion becomes preferable~\cite{leng2023quantum,zhou2018achieving,conlon2026fault, demkowicz2017adaptive}. Second, one may consider unknown field strength in addition to unknown location, or multiple disjoint field regions. Third, our model treats motion control abstractly via $H_c(t)$; it would be valuable to incorporate explicit dynamical constraints such as bounded velocity, finite acceleration, or locality of spatial control. Finally, coherent motion may be used to accelerate other spatially structured metrological tasks.

% There are several potential applications of moving sensors. For example, in the life sciences, detection of changes in local environments is often indicative of the presence of some pathology, such as disease, inflammation, or a tumor. There are many techniques that can be used to detect such pathologies, such as imaging techniques like magnetic resonance imaging (MRI) and positron emission tomography (PET) scans, but these expose patients to harmful radiation and so should be used sparingly. Instead, scanning a patient with a probe or sensor that can discern changes in the local physical environment without the need for exposure to radiation may be a viable alternative for pathology detection. For example, some tumors cause changes in temperature to the local environment~\cite{knapp2022tumor}, so sending a sensor moving through the bloodstream that is sensitive to temperature differentials could be used for cancer screening. Such an approach aligns well with the model we assume in this work: we suspect the presence of a signal in a localized region of space, but we are not sure exactly where, and we would just like to determine if there is such a signal or not. If the signal \emph{is} detected, then more advanced techniques can be employed to pinpoint exactly where.

\begin{acknowledgments}
The authors acknowledge the use of ChatGPT and Claude for improving proofs in Appendices B and C and for proofreading. The authors take full responsibility for the content of this publication. E.Z., E.A., A.F., D.J.S., and A.V.G.\ were supported in part by ARL (W911NF-24-2-0107), ONR MURI, AFOSR MURI, NSF STAQ program,  and NQVL:QSTD:Design:FTL. E.Z., E.A., A.F., D.J.S., A.M.C., and A.V.G.\ were supported in part by NSF QLCI (awards No.~OMA-2120757 and No.~OSI-2553574) and DoE ASCR Quantum Testbed Pathfinder program (award No.~DE-SC0024220). E.Z., E.A., A.F., D.J.S., A.M.C., and A.V.G. also acknowledge support from  the U.S.~Department of Energy, Office of Science, National Quantum Information Science Research Centers, Quantum Systems Accelerator (award No.~DE-SCL0000121) and the U.S.\ Department of Energy, Office of Science, Accelerated Research in Quantum Computing, Fundamental Algorithmic Research toward Quantum Utility (FAR-Qu). Y.-X.W.~acknowledges support from a QuICS Hartree Postdoctoral Fellowship. D.J.S.~acknowledges further funding support from a graduate research fellowship from the Joint Quantum Institute at the University of Maryland, College Park. S.R.M. is supported by the NSF QLCI award OMA-2120757.
\end{acknowledgments}

\appendix
\onecolumngrid
\clearpage

\section{Relation to state discrimination}\label{app:state_discrimination}
In the main text, we quantified the performance of a field-detection protocol by the probability of certifiably detecting the field. Here, we relate this figure of merit to standard notions in quantum state discrimination. Let the two hypotheses be
\begin{equation}
    H_0: \text{ no field},\qquad H_1: \text{ field present}.
\end{equation}
For a fixed field position $x_0$, the corresponding final states are
\begin{equation}
    \rho_0(T) = \ketbra{\psi_{\emptyset}(T)},\qquad \rho_1(T) = \ketbra{\psi_{x_0}(T)}.
\end{equation}
A binary measurement is described by a POVM $\{E_0, E_1\}$, where outcome $0$ means ``no field'' and outcome $1$ means ``field present.'' Thus, we have
\begin{equation}
    \Pr(1|0) = \Tr(E_1\rho_0),
    \qquad
    \Pr(0|1) = \Tr(E_0\rho_1).
\end{equation}
The goal is to minimize the error probability
\begin{equation}\label{eq:weighted_error}
    P_\mathrm{err} = \frac{1}{2}\big(\Pr(1|0) + \Pr(0|1)\big) .
\end{equation}
The minimum of~\cref{eq:weighted_error} is given by the Helstrom bound~\cite{Helstrom1976,Holevo1973,BarnettCroke2009},
\begin{equation}
    P_\mathrm{err}^\mathrm{Hel} = \frac{1}{2}\left(1-\frac{1}{2}\norm{\rho_0-\rho_1}_1\right),\label{eq:helstrom_error}
\end{equation}
and the corresponding success probability is
\begin{equation}
    P_\mathrm{succ}^\mathrm{Hel} = \frac{1}{2} + \frac{1}{4}\norm{\rho_0 - \rho_1}_1.\label{eq:helstrom_succ}
\end{equation}
For the pure states considered above, this becomes
\begin{equation}
    P_\mathrm{succ}^\mathrm{Hel}(x_0;T) = \frac{1}{2}\left( 1 + \sqrt{1 - |\braket{\psi_\emptyset(T)}{\psi_{x_0}(T)}|^2} \right).\label{eq:helstrom_pure}
\end{equation}
For field detection, however, one may want a stronger operational guarantee: whenever the measurement declares that a field is present, it should be correct. This imposes the ``zero-false-positive'' condition
\begin{equation}
    \Pr(1|0) = \Tr(E_1\rho_0) = 0.\label{eq:zero_false_alarm}
\end{equation}
More generally, one can allow a false alarm tolerance $\epsilon$ and maximize the detection probability
\begin{equation}
    P_{\rm D}^{\max}(\epsilon) = \max_{0\le E_1\le I}\Tr(E_1\rho_1)\quad \text{subject to} \quad \Tr(E_1\rho_0)\le \epsilon .\label{eq:neyman_pearson}
\end{equation}
This is known as the quantum Neyman-Pearson formulation of asymmetric hypothesis testing~\cite{Holevo1973,HayashiNagaoka2003}. The quantity used in the main text corresponds to the special case $\epsilon = 0$. If the no-field state is pure, then~\cref{eq:zero_false_alarm} implies that $E_1$ can only have support orthogonal to $\ket{\psi_{\emptyset}(T)}$. The optimal certified-detection measurement is therefore
\begin{equation}
    E_1^{\star} = I - \ketbra{\psi_{\emptyset}}.
\end{equation}
For a fixed $x_0$, this gives
\begin{equation}
    P_\mathrm{D}^{\max}(0) = \mel{\psi_{x_0}(T)}{E_1^\star}{\psi_{x_0}(T)} = 1 - \abs{\braket{\psi_\emptyset(T)}{\psi_{x_0}(T)}}^2.\label{eq:certified_detection}
\end{equation}
This is precisely the quantity denoted $p_{x_0}(T)$ in the main text. Finally, the Helstrom and certified-detection figures of merit are closely related. For pure states, they both increase as $|\!\braket{\psi_{\emptyset}(T)}{\psi_{x_0}(T)}$| decreases, and the two quantities are exactly related by
\begin{equation}
    P_\mathrm{succ}^\mathrm{Hel} = \frac{1}{2}\left( 1 + \sqrt{P_\mathrm{D}^{\max}(0)} \right),\label{eq:helstrom_certified_relation_pure}
\end{equation}
which ensures that $P_\mathrm{succ}^\mathrm{Hel} \geq P_\mathrm{D}^{\max}(0)$. 

\section{Proof of~\texorpdfstring{\cref{thm:classical_opt}}{Thm. 1} (optimal classical motion)}\label{app:optimal_classical}
In this appendix, we prove~\cref{thm:classical_opt}. We proceed in three steps:\\
\begin{enumerate}
    \item We solve the discretized problem on $N = L/\Delta$ equally spaced sites, where the field occupies exactly one site.
    \item We show the same optimal value holds on any refined lattice of $rN$ equally spaced sites where the field spans $r$ consecutive sites.
    \item We take $r \to \infty$, which yields the continuous formulation in~\cref{eq:Pclassical_general}.\\
\end{enumerate}
We begin with the discretized problem.

\subsection{Discrete \texorpdfstring{$N$}{N}-site problem (field occupies one site)}\label{app:optimal_classical_discreteN}
Consider $N$ sites labeled by index $j \in \{1, \dots, N\}$ and a field located at an unknown site $x_0 \in \{1, \dots, N\}$, chosen uniformly at random. A classical protocol (with negligible travel time) is specified by dwell times $\{\tau_j\}_{j=1}^{N}$ satisfying
\begin{equation}
    \tau_j \ge 0,\qquad \sum_{j=1}^{N}\tau_j = T.\label{eq:disc_constraints}
\end{equation}
If the field is at $x_0$, the sensor interacts for time $\tau_{x_0}$ and the success probability is $f(\tau_{x_0}) \coloneqq \sin^2(B\tau_{x_0})$. Averaging over $x_0$ gives
\begin{equation}
    P_\mathrm{cl}(T) = \frac{1}{N}\sum_{j=1}^{N} f(\tau_j) = \frac{1}{N}\sum_{j=1}^{N}\sin^2(B\tau_j).\label{eq:Pcl_discrete_app}
\end{equation}
Here, we assume that $T < \pi N/(2B)$. For $T \geq \pi N/(2B)$, we can achieve $P_\mathrm{cl}(T) = 1$ by choosing $\tau_j = \pi/(2B)$ for all $j$, and then measuring the sensor at time $\pi N/(2B)$.

\begin{lemma}[No wasted time beyond the first maximum]\label{lem:clip}
    Fix $T\le \frac{\pi N}{2B}$. There exists an optimal solution of~\cref{eq:disc_constraints,eq:Pcl_discrete_app} with
    \begin{equation}
        0 \le \tau_j \le \frac{\pi}{2B}\qquad \forall j \in \{1, \dots, N\}.\label{eq:clip}
    \end{equation}
\end{lemma}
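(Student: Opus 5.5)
We argue by existence and then repair. The feasible set $\{\tau\in\mathbb{R}^N_{\ge0}:\sum_j\tau_j=T\}$ is a compact simplex, and the objective in \cref{eq:Pcl_discrete_app} is continuous, so a maximizer $\tau^\star$ exists. Among all maximizers, we choose one that minimizes the total excess
\begin{equation*}
E(\tau)\coloneqq\sum_j\max\{0,\tau_j-\pi/(2B)\}.
\end{equation*}
This choice is possible because $E$ is continuous and the set of maximizers is compact. We then show that $E(\tau^\star)=0$, which is exactly \cref{eq:clip}.

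Suppose instead that some $\tau^\star_i>\pi/(2B)$. Because $T\le \pi N/(2B)$, the average dwell time is at most $\pi/(2B)$. Since one coordinate exceeds $\pi/(2B)$, some other coordinate $\tau^\star_\ell<\pi/(2B)$ must exist. Now use the periodicity and symmetry of $f(t)=\sin^2(Bt)$ about $t=\pi/(2B)$: every value $\tau_i^\star$ can be replaced by a point $s_i\in[0,\pi/(2B)]$ with $f(s_i)=f(\tau^\star_i)$ and $s_i<\tau_i^\star$. Concretely, reduce $\tau_i^\star$ modulo $\pi/B$ and then reflect about $\pi/(2B)$ if needed. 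The freed time is $\delta=\tau^\star_i-s_i>0$.

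We pour this freed time into the sites lying below $\pi/(2B)$. Raise $\tau^\star_\ell$ toward $\pi/(2B)$, and if $\delta$ exceeds the headroom $\pi/(2B)-\tau^\star_\ell$, continue with further sites below $\pi/(2B)$. Enough total headroom exists: after replacing $\tau^\star_i$ by $s_i$, the sum of the capped values is at most $T-\delta\le N\pi/(2B)-\delta$. Since $f$ is nondecreasing on $[0,\pi/(2B)]$, raising these coordinates does not decrease any $f(\tau_j)$. The new vector is therefore feasible and achieves an objective at least as large, so it is also optimal. Its excess is strictly smaller than $E(\tau^\star)$, because coordinate $i$ no longer contributes and the raised coordinates stay at or below $\pi/(2B)$. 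This contradicts the minimality of $E(\tau^\star)$.

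The main obstacle is the bookkeeping in this redistribution. We must check that the headroom suffices, which is where the hypothesis $T\le \pi N/(2B)$ enters, and that no raised coordinate overshoots $\pi/(2B)$. An alternative that avoids the extremal choice is to apply the replacement once for every site simultaneously. Map every $\tau_j$ to $s_j\in[0,\pi/(2B)]$ with $f(s_j)=f(\tau_j)$. This gives $\sum_j s_j\le T\le N\pi/(2B)$, so the deficit $T-\sum_j s_j$ can be distributed greedily over the capped coordinates without exceeding $\pi/(2B)$. Monotonicity of $f$ on $[0,\pi/(2B)]$ then shows that the objective does not decrease, which yields the lemma directly.
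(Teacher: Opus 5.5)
Your proof is correct and uses the same idea as the paper: replace an over-long dwell time by the point in $[0,\pi/(2B)]$ with the same value of $\sin^2(B\tau)$, then move the freed time $\delta$ onto sites below $\pi/(2B)$, where $f$ is monotone. The compactness/minimal-excess argument (or the one-shot clipping) and your explicit count showing the headroom is at least $\delta$ make rigorous what the paper leaves implicit: that an optimizer exists and that ``repeating this operation'' terminates. The paper instead uses strict monotonicity, $f'>0$ on $(0,\pi/(2B))$, to get a strict increase, which shows that in fact every optimizer is clipped.
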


\begin{proof}
    Suppose a feasible schedule has some $\tau_j > \pi/(2B)$. Find $\tau_j' \in  [0,\frac{\pi}{2B}]$ such that 
    \begin{equation}
        \sin^2(B\tau_j') = \sin^2(B\tau_j),\qquad \delta \coloneqq \tau_j-\tau_j' > 0.
    \end{equation}
    Thus, we have $f(\tau_j') = f(\tau_j)$, while the total used time decreases by $\delta$. Because $T \le \frac{\pi N}{2B}$, there must exist at least one index $i$ with $\tau_i < \pi/(2B)$ (otherwise $\sum_j \tau_j \ge N\pi/(2B)$). Allocate the freed time $\delta$ to such an index (or split it among several indices that remain below $\pi/(2B)$). Since $f'(u) = B\sin(2Bu) > 0$ for $u \in (0,\pi/(2B))$, this reallocation strictly increases the objective while maintaining the constraint~\cref{eq:disc_constraints}. Repeating this operation eliminates all $\tau_j > \pi/(2B)$.
\end{proof}

Next, we have the following:

\begin{lemma}[Two inequalities]\label{lem:two_ineq}
    The following inequalities hold for $u,v \in [0,\pi/2]$:
    \begin{alignat}{2}
        \sin^2 u + \sin^2 v &\le \sin^2(u+v),& u+v &\le \frac{\pi}{2},\label{eq:ineq_merge}\\
        \sin^2 u + \sin^2 v &\le 2\sin^2\left(\frac{u+v}{2}\right),& \quad u+v &\ge \frac{\pi}{2}.\label{eq:ineq_average}
    \end{alignat}
\end{lemma}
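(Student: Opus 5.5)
Both inequalities follow from the product-to-sum identity
\[
\sin^2 a - \sin^2 b = \sin(a+b)\sin(a-b),
\]
so the plan is to rewrite each difference with it and then read off signs on $[0,\pi/2]$.

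\emph{First inequality, \cref{eq:ineq_merge}.} I will show $\sin^2(u+v) - \sin^2 u \ge \sin^2 v$. The identity gives
\[
\sin^2(u+v)-\sin^2 u = \sin(2u+v)\sin v .
\]
It therefore suffices to show $\sin(2u+v) \ge \sin v$, since $\sin v \ge 0$ on $[0,\pi/2]$. Apply the identity for a difference of sines:
\[
\sin(2u+v)-\sin v = 2\cos(u+v)\sin u .
\]
This is nonnegative, because $u\in[0,\pi/2]$ gives $\sin u \ge 0$, and the hypothesis $u+v\le\pi/2$ gives $\cos(u+v)\ge 0$. The first inequality follows.

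\emph{Second inequality, \cref{eq:ineq_average}.} Here I will work with $\cos$ instead. Using $\sin^2 x = (1-\cos 2x)/2$, the left-hand side becomes
\[
\sin^2 u + \sin^2 v = 1 - \tfrac{1}{2}(\cos 2u + \cos 2v) = 1 - \cos(u+v)\cos(u-v).
\]
The right-hand side is $2\sin^2\!\bigl(\tfrac{u+v}{2}\bigr) = 1-\cos(u+v)$. The difference is
\[
\text{RHS}-\text{LHS} = \cos(u+v)\bigl(\cos(u-v)-1\bigr).
\]
Since $u,v\in[0,\pi/2]$, we have $u+v\in[\pi/2,\pi]$, so $\cos(u+v)\le 0$. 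Also $\cos(u-v)-1\le 0$ always. The product of two nonpositive factors is nonnegative, which gives the second inequality.

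There is no real obstacle here. The only care needed is tracking which hypothesis fixes the sign of $\cos(u+v)$ in each case: $\cos(u+v)\ge 0$ in the merge case and $\cos(u+v)\le 0$ in the averaging case. Geometrically, these are the concavity of $\sin^2$ on $[\pi/4,\pi/2]$-type ranges and its superadditivity near the origin, but the direct trigonometric computation is cleaner.
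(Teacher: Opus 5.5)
Your proof is correct and follows essentially the paper's route: the second inequality is argued exactly as in the paper, via $\sin^2 u+\sin^2 v = 1-\cos(u+v)\cos(u-v)$ and the sign of $\cos(u+v)$. For the first inequality you use $\sin^2 a-\sin^2 b=\sin(a+b)\sin(a-b)$ where the paper uses the bound $\cos(u-v)\ge\cos(u+v)$, but both reduce to the exact identity $\sin^2(u+v)-\sin^2 u-\sin^2 v = 2\sin u\,\sin v\,\cos(u+v)\ge 0$.
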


\begin{proof}
    We use the trigonometric identity
    \begin{equation}
        \sin^2 u+\sin^2 v = 1-\cos(u+v)\cos(u-v).
    \end{equation}
    First, suppose $u + v \le {\pi}/{2}$. Then, $\abs{u - v} \le u + v \le {\pi}/{2}$, so $\cos(u-v) \ge \cos(u+v)$. Since $\cos(u+v) \ge 0$, we get
    \begin{equation}
        \sin^2{u} + \sin^2{v} = 1 - \cos(u + v)\cos(u - v) \leq 1 - \cos^2(u+v) = \sin^2(u+v),
    \end{equation}
    which proves~\cref{eq:ineq_merge}. Next, suppose $u + v \ge {\pi}/{2}$. Then, $\cos(u+v) \le 0$, while $\cos(u-v) \le 1$. Therefore
    \begin{equation}
        \sin^2{u} + \sin^2{v} = 1 - \cos(u+v)\cos(u-v) \leq 1 - \cos(u+v) = 2\sin^2\left( \frac{u+v}{2} \right),
    \end{equation}
    which proves~\cref{eq:ineq_average} and proves the claim in the lemma.
\end{proof}

Finally, we have the following:

\begin{proposition}[Optimal value on $N$ sites]\label{prop:Nsites}
    For $0 \le T \le \frac{\pi N}{2B}$, the optimal value of~\cref{eq:Pcl_discrete_app} is
    \begin{equation}
        P_\mathrm{cl}^{\star}(T) = \max_{1 \le k \le N}\frac{k}{N}\sin^2\left(\frac{BT}{k}\right).
    \end{equation}
    For $T \ge \frac{\pi N}{2B}$, one can achieve $P_\mathrm{cl}^{\star}(T)=1$.
\end{proposition}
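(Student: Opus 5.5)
The plan is to show separately that the $k$-stop value is achievable and that no schedule does better.

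\emph{Achievability.} For each $k\in\{1,\dots,N\}$, set $\tau_j=T/k$ for $k$ chosen sites and $\tau_j=0$ for the others. This is feasible and gives $\frac{k}{N}\sin^2(BT/k)$, so the maximum over $k$ is achievable. The case $T\ge \pi N/(2B)$ was already handled: take $\tau_j=\pi/(2B)$ for all $j$ (any surplus time is simply wasted).

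\emph{Upper bound.} By \cref{lem:clip}, we may restrict to an optimal schedule with all $\tau_j\in[0,\pi/(2B)]$. Write $u_j=B\tau_j\in[0,\pi/2]$. The feasible set is compact and the objective is continuous, so a maximizer exists. Among all maximizers with clipped entries, choose one that has the fewest nonzero entries, and among those, the smallest spread $\max u_j-\min_{u_j>0}u_j$. Let $k$ be the number of nonzero entries. I then argue by exchange using \cref{lem:two_ineq}.
\begin{itemize}
\item If two nonzero entries satisfy $u_i+u_j\le\pi/2$, then \cref{eq:ineq_merge} shows that replacing them by $(u_i+u_j,0)$ does not decrease the objective. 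The new entry $u_i+u_j$ still lies in $[0,\pi/2]$, and the support shrinks, which contradicts minimality.
\item Otherwise every pair of nonzero entries has $u_i+u_j\ge\pi/2$. Then \cref{eq:ineq_average} shows that replacing any unequal pair by its average does not decrease the objective. The averaged values remain in $[0,\pi/2]$, the support is unchanged, and the spread does not increase. This is where the minimal-spread choice matters: I will average the max and min entries. That leaves the max weakly lower and the min weakly higher, and strictly improves the spread whenever there are only two distinct extreme values. To avoid fiddly tie cases, an alternative is the concavity route: on $u\ge\pi/4$ the function $\sin^2 u$ is concave, since its second derivative is $2\cos 2u\le 0$. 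At most one nonzero entry can be $<\pi/4$, because two such entries would have sum $<\pi/2$. If one entry $u_1<\pi/4$ exists, it pairs with every other entry to sum $\ge\pi/2$; averaging it with another entry yields values $\ge\pi/4$, so after one such step all nonzero entries lie in the concave region. Jensen's inequality then gives
\[
\sum_j\sin^2 u_j\le k\sin^2\!\Bigl(\tfrac{\sum_j u_j}{k}\Bigr)=k\sin^2(BT/k).
\]
\end{itemize}
Hence the objective is at most $\frac{k}{N}\sin^2(BT/k)$ for some $1\le k\le N$ (the case $T=0$ is trivial). This matches the achievable value.

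The main obstacle is making the second case rigorous without circularity in the extremal choice. I expect the concavity/Jensen argument to be cleanest: after the merge step, at most one entry lies below $\pi/4$, one averaging step removes it, and Jensen finishes.
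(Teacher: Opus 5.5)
Your proof is correct. Its first half is the same as the paper's: you clip via \cref{lem:clip}, then merge pairs with $u_i+u_j\le\pi/2$ via \cref{eq:ineq_merge} until the support is minimal. The second half takes a different route.

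The paper finishes by using that \cref{eq:ineq_average} is \emph{strict} when $u_i\neq u_j$ and $u_i+u_j>\pi/2$, since then $-\cos(u_i+u_j)>0$ and $\cos(u_i-u_j)<1$. After merging, every pairwise sum is strictly above $\pi/2$, because a sum equal to $\pi/2$ would still be merged. So any unequal pair contradicts optimality, and all nonzero dwell times of the optimizer coincide. This strictness is exactly what removes the tie cases that made your minimal-spread bookkeeping fiddly, so you can drop that detour entirely.

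Your concavity route instead uses that pairwise sums $\ge\pi/2$ leave at most one entry below $\pi/4$. You remove it with one non-strict averaging step and then apply Jensen on $[\pi/4,\pi/2]$, where $\sin^2$ is concave. For $k=1$ the bound is trivial, so an averaging partner exists whenever one is needed.

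The trade-off is as follows. The paper's argument gives the structural fact that a support-minimal optimizer is itself a $k$-stop schedule, but it relies on strictness and on the existence of an optimizer. Your argument only bounds the value, but it needs neither: you can run clip, merge, one averaging step and Jensen on an arbitrary feasible schedule. That makes even your compactness step optional.
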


\begin{proof}
Let $\{\tau_j\}_{j=1}^N$ be an optimizer of~\cref{eq:Pcl_discrete_app} subject to~\cref{eq:disc_constraints}. By~\cref{lem:clip}, we assume that $0 \le \tau_j \le \frac{\pi}{2B}$ for all $j$. Next, suppose that two nonzero dwell times $\tau_i$ and $\tau_\ell$ satisfy $\tau_i + \tau_\ell \le {\pi}/{2B}$. Then, we can replace the pair $(\tau_i,\tau_\ell)$ by
\begin{equation}
    (\tau_i + \tau_\ell, 0).
\end{equation}
This preserves the total time $T$, and by~\cref{eq:ineq_merge} it does not decrease the objective. Using~\cref{lem:clip}, and repeating the same argument, we conclude that there exists an optimizer $\{\tau_j\}_{j=1}^N$ such that every two distinct nonzero dwell times satisfy
\begin{equation}
    \tau_i+\tau_\ell > \frac{\pi}{2B}.
\end{equation}
Now suppose that two nonzero dwell times $\tau_i$ and $\tau_\ell$ are not equal. Since their sum is larger than $\frac{\pi}{2B}$,~\cref{eq:ineq_average} implies that replacing them by
\begin{equation}
    \left( \frac{\tau_i + \tau_\ell}{2}, \frac{\tau_i + \tau_\ell}{2} \right)
\end{equation}
does not decrease the objective. In fact, because $\tau_i \neq \tau_\ell$, the inequality is strict, so this replacement strictly increases the objective. This contradicts optimality. Hence, all nonzero dwell times must be equal. Therefore, there exists an optimal set of times with exactly $k$ nonzero elements, all equal to $T/k$, for some $1 \le k\le N$. Thus,
\begin{equation}
    P_\mathrm{cl}^{\star}(T) = \max_{1 \le k \le N} \frac{k}{N}\sin^2\left(\frac{BT}{k}\right),
\end{equation}
which completes the proof.
\end{proof}

\subsection{Refined lattice: \texorpdfstring{$rN$}{rN} sites with a field spanning \texorpdfstring{$r$}{r} consecutive sites}\label{app:optimal_classical_refine}
Now, we discretize the ring into $rN$ equally spaced sites labeled $i \in\{0, 1, \dots, rN-1\}$, where $r$ is a positive integer. The field occupies $r$ consecutive sites starting at an unknown $i_0$, and we assume no prior information is available about $i_0$ so that it satisfies a probability distribution uniform among all $rN$ sites. A classical schedule specifies dwell times $\{t_i\}_{i=0}^{rN-1}$ with $t_i \ge 0$ and $\sum_i t_i = T$. If the field starts at $i_0$, the total interaction time is
\begin{equation}
    s_{i_0} \coloneqq \sum_{j=0}^{r-1} t_{i_0+j},
\end{equation}
where the indices are taken modulo $rN$. Hence the averaged success probability is
\begin{equation}
    P_\mathrm{cl}^{(r)}(T) = \frac{1}{rN}\sum_{i_0=0}^{rN-1}\sin^2\left(B s_{i_0}\right).\label{eq:Pcl_rN}
\end{equation}
With this, we have the following:

\begin{proposition}
\label{prop:refinement_invariance}
    For every integer $r \ge 1$, the optimal value of~\cref{eq:Pcl_rN} equals the $N$-site optimum:
    \begin{equation}
        \max_{\{t_i\}} P_\mathrm{cl}^{(r)}(T) = \max_{1\le k\le N}\frac{k}{N}\sin^2\!\left(\frac{BT}{k}\right),\qquad 0\le T\le \frac{\pi N}{2B}.\label{eq:refinement_value}
    \end{equation}
\end{proposition}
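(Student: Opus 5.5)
Both inequalities in \cref{eq:refinement_value} are needed: achievability ($\ge$) and optimality ($\le$). The upper bound follows the main-text sketch by grouping the field's starting positions into residue classes modulo $r$.

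\emph{Achievability.} Take the optimal $k$-stop schedule from \cref{prop:Nsites}. Pick $k$ distinct blocks $b_1,\dots,b_k\in\{0,\dots,N-1\}$ and put dwell time $T/k$ on the single fine site $rb_\ell$ of each, with zero elsewhere. A window of $r$ consecutive fine sites contains exactly one site of the form $rb$. So for every $i_0$ the interaction time $s_{i_0}$ is $T/k$ if that site is one of the chosen $rb_\ell$, and $0$ otherwise. For each residue class of $i_0$ modulo $r$, exactly $k$ of the $N$ starting points in the class pick up time $T/k$. Averaging over all $rN$ starting points gives $\frac{k}{N}\sin^2(BT/k)$; maximizing over $k$ gives the lower bound.

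\emph{Upper bound.} Fix any schedule $\{t_i\}$ and a residue $\rho\in\{0,\dots,r-1\}$. The $N$ starting points $i_0=\rho+rm$, $m=0,\dots,N-1$, give windows $\{\rho+rm,\dots,\rho+rm+r-1\}$ that partition the ring into $N$ disjoint blocks of $r$ consecutive sites. Define $\tau^{(\rho)}_m := s_{\rho+rm}$. Then $\tau^{(\rho)}_m\ge 0$ and $\sum_m \tau^{(\rho)}_m=\sum_i t_i=T$, so $\{\tau^{(\rho)}_m\}$ is a feasible $N$-site schedule in the sense of \cref{eq:disc_constraints}. By \cref{prop:Nsites},
\begin{equation}
\frac{1}{N}\sum_{m=0}^{N-1}\sin^2\!\big(B s_{\rho+rm}\big)\le \max_{1\le k\le N}\frac{k}{N}\sin^2\!\left(\frac{BT}{k}\right).
\end{equation}
\Cref{eq:Pcl_rN} equals the average over $\rho$ of these $r$ class averages, since $\frac{1}{rN}\sum_{i_0}=\frac{1}{r}\sum_\rho\frac{1}{N}\sum_m$. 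Each class average obeys the same bound, so their average does too.

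\emph{Main obstacle.} The step to verify carefully is that each residue class really yields a valid $N$-site instance. The windows for a fixed $\rho$ must be pairwise disjoint and must cover all $rN$ sites, including wraparound modulo $rN$, so that the block sums total exactly $T$. They do, because consecutive starting points in a class differ by exactly $r$ and there are $N$ of them. Once this holds, \cref{prop:Nsites} applies verbatim to each class and no new optimization is needed. The constraint $T\le \pi N/(2B)$ is the same in both problems, so the hypothesis of \cref{prop:Nsites} is satisfied.
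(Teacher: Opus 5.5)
Your proof is correct and takes essentially the same approach as the paper. For the upper bound, you group the starting positions by residue class modulo $r$, so that each class's block sums form a feasible $N$-site schedule bounded by \cref{prop:Nsites}. For achievability, you place dwell time $T/k$ on one fine site in each of $k$ blocks; you use sites $rb_\ell$ where the paper uses $qr-1$, which makes no difference.
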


\begin{proof}
    Group the starting positions $i_0$ according to their residue class modulo $r$:
    \begin{equation}
        P_\mathrm{cl}^{(r)}(T) = \frac{1}{r}\sum_{l=0}^{r-1}\left[\frac{1}{N}\sum_{m=0}^{N-1}\sin^2\left(Bs_{l+mr}\right)\right].
    \end{equation}
    For a fixed $l$, we can write
    \begin{equation}
        \sum_{m=0}^{N-1}s_{l+mr} = \sum_{i=0}^{rN-1} t_i = T.
    \end{equation}
    Thus, the variables $\{s_{l+mr}\}_{m=0}^{N-1}$ form a feasible $N$-site schedule with total time $T$. By~\cref{prop:Nsites},
    \begin{equation}
        \frac{1}{N}\sum_{m=0}^{N-1}\sin^2\left(Bs_{l+mr}\right)\le \max_{1 \le k \le N}\frac{k}{N}\sin^2\left(\frac{BT}{k}\right).
    \end{equation}
    Next, averaging over $l$ gives
    \begin{equation}
        P_\mathrm{cl}^{(r)}(T) \le \max_{1 \le k \le N}\frac{k}{N}\sin^2\left(\frac{BT}{k}\right).
    \end{equation}
    For the matching lower bound, fix $k \in\{1, \dots, N\}$ and choose
    \begin{equation}
        t_{qr-1} = \frac{T}{k},\qquad q = 1, \dots, k,
    \end{equation}
    with all other $t_i = 0$. For each residue class $l$, the site $qr-1$ belongs to the block starting at $l+(q-1)r$. Hence, for every $l$, exactly $k$ of the block-sums $\{s_{l+mr}\}_{m=0}^{N-1}$ are equal to $T/k$, and the rest are zero. Therefore,
    \begin{equation}
        P_\mathrm{cl}^{(r)}(T) = \frac{k}{N}\sin^2\left(\frac{BT}{k}\right).
    \end{equation}
    Maximizing over $k$ proves the claim.
\end{proof}

\subsection{Continuum limit \texorpdfstring{($r\to\infty$)}{}}\label{app:optimal_classical_continuum}
We now connect the refined discrete model to the continuous occupation-time density $\tau(x)$ used in~\cref{eq:Pclassical_general}. For a refinement factor $r$, partition the ring into $rN$ intervals (cells) of length $\delta = \Delta/r$. Given a density $\tau(x)$, define the cell times as
\begin{equation}
    t_i^{(r)} \coloneqq \int_{i\delta}^{(i+1)\delta}\tau(x)\mathop{dx},\qquad i = 0, \dots, rN-1.\label{eq:celltimes}
\end{equation}
Then, $t_i^{(r)} \ge 0$ and $\sum_i t_i^{(r)} = \int_0^L\tau(x)\mathop{dx} = T$, so $\{t_i^{(r)}\}$ is a feasible schedule for the refined lattice. Moreover, the refined-lattice interaction time when the field
starts at cell $i_0$ is
\begin{equation}
    s_{i_0}^{(r)} = \sum_{j=0}^{r-1} t_{i_0+j}^{(r)} = \int_{i_0\delta}^{i_0\delta+\Delta}\tau(x)\mathop{dx},
\end{equation}
and the refined-lattice objective~\cref{eq:Pcl_rN} is the approximation of the continuous average in~\cref{eq:Pclassical_general}:
\begin{equation}
    P_\mathrm{cl}^{(r)}(T) = \frac{1}{rN}\sum_{i_0=0}^{rN-1}\sin^2\left(B s_{i_0}^{(r)}\right) \xrightarrow{r\to\infty} \frac{1}{L}\int_0^L \sin^2\left(B\int_{x_0}^{x_0+\Delta}\tau(x)\mathop{dx}\right)\mathop{dx_0}.\label{eq:Riemann_conv}
\end{equation}
Since~\cref{prop:refinement_invariance} shows that the optimum of $P_\mathrm{cl}^{(r)}(T)$ is the same for all $r$, taking $r \to \infty$ implies the continuous optimum equals the discrete optimum, establishing~\cref{thm:classical_opt} for the model of~\cref{eq:Pclassical_general}.

\subsection{Large-\texorpdfstring{$N$}{N} limit}\label{app:optimal_classical_linear}
For large $N$, it is useful to treat $k$ as approximately continuous and maximize
\begin{equation}
    P_k(T) = \frac{k}{N}\sin^2\left(\frac{BT}{k}\right).
\end{equation}
Setting $x \coloneqq BT/k$ (so $k = BT/x$) gives
\begin{equation}
    P_k(T) = \frac{BT}{N}\frac{\sin^2 x}{x}.
\end{equation}
The maximum of this expression occurs at $x^* \simeq 
1.1656$. In the intermediate-time window ($1 \lesssim k^\star \lesssim N$, i.e., $x_0/B \lesssim T \lesssim x^* N/B$), we obtain
\begin{equation}
    k^\star \simeq \frac{BT}{x^*},\qquad P_\mathrm{cl}^\star(T) \simeq \frac{\sin^2 x^*}{x^*}\frac{BT}{N} \approx 0.7246\frac{BT}{N},
\end{equation}
where $P_\mathrm{cl}^\star(T)$ scales linearly with time $T$. This result can also be interpreted in terms of an effective sensor velocity. The sensor spends time ${T}/{k^\star}$ at each visited site before moving a distance $\Delta$ to the next one. Hence, the corresponding optimal velocity is
\begin{equation}
    v^\star \simeq \frac{\Delta}{T/k^\star} \approx 0.8580B\Delta.
\end{equation}
Therefore, in the large-$N$ limit and within the intermediate-time regime, the
optimal strategy is well-approximated by moving the sensor at a constant
velocity.

\section{Constant-velocity protocol}\label{app:const_velocity}
In this appendix, we analyze the protocol in which the sensor moves at a \emph{constant} velocity $v \ge 0$ around the ring for a total time $T$, after which we measure the internal qubit in the $\{\ket{\pm}\}$ basis. We derive an analytic expression for the averaged success probability $P_\mathrm{cv}(v,T)$ and explain how~\cref{fig:cv_opt} (the optimal probability and optimal velocity versus time) is obtained by maximizing over $v$. We also show that while constant-velocity scanning is generally suboptimal at finite $N$, its optimal performance converges to the $k$-stop protocol as we take $N \to\infty$.

\subsection{Average success probability}\label{app-subsec:avg-succ-prob}
Fix the field segment to be $[0,\Delta]$ and average over the sensor's initial position $s \in [0,L)$ uniformly; this is equivalent to fixing the sensor start and averaging over the unknown field position $x_0$. Under constant velocity $v$,
\begin{equation}
    x(t) = (s+vt)\pmod{L},\qquad t \in[0,T].
\end{equation}
Define the total distance traversed $a \coloneqq vT$ so that the sensor completes $q \coloneqq \left\lfloor\frac{a}{L}\right\rfloor$ full laps and then travels a remaining arc of length $\ell \coloneqq a -qL \in [0,L)$. Each full lap contributes a deterministic interaction time $\Delta/v$, hence
\begin{equation}
    \tau(s;v,T) = \frac{q\Delta}{v} + \frac{1}{v}\Lambda(s;\ell),\label{eq:tau_decomp}
\end{equation}
where $\Lambda(s;\ell) \in [0,\Delta]$ is the \emph{spatial overlap length} between the field interval $[0,\Delta]$ and the arc segment of length $\ell$ traversed in the final partial lap starting at $s$. Equivalently, $\Lambda(s;\ell)$ is the overlap length of two arcs of lengths $\Delta$ and $\ell$ on a circle of circumference $L$, averaged over their relative shift. The success probability conditioned on $s$ is therefore
\begin{equation}
    p(s;v,T) = \sin^2\left(B\,\tau(s;v,T)\right) = \sin^2\left(\frac{B}{v}\left(q\Delta+\Lambda(s;\ell)\right)\right),
\end{equation}
and the averaged success probability is
\begin{equation}
    P_\mathrm{cv}(v,T) = \frac{1}{L}\int_{0}^{L} p(s;v,T)\mathop{ds}.\label{eq:Pcv_def}
\end{equation}

\subsection{Closed form expression for \texorpdfstring{$P_\mathrm{cv}(v,T)$}{P(v,T)}}
It is convenient to write $\sin^2{x} =\frac{1}{2}\left(1-\cos 2x\right)$. Let
\begin{equation}
    \gamma \coloneqq \frac{2B}{v},\qquad \phi \coloneqq \gamma q\Delta = \frac{2Bq\Delta}{v}.\label{eq:gamma_phi}
\end{equation}
Then, \cref{eq:Pcv_def} becomes
\begin{equation}
    P_\mathrm{cv}(v,T) = \frac{1}{2}-\frac{1}{2L}\int_{0}^{L}\cos\left(\phi+\gamma\,\Lambda(s;\ell)\right)\mathop{ds} = \frac{1}{2} - \frac{1}{2L}\left[\cos\phi C(\ell) - \sin\phi S(\ell)\right],\label{eq:Pcv_cos_sin}
\end{equation}
where we defined the overlap integrals
\begin{equation}
    C(\ell) \coloneqq \int_{0}^{L}{\cos(\gamma\Lambda(s;\ell))\mathop{ds}},\qquad S(\ell) \coloneqq \int_{0}^{L}{\sin(\gamma\Lambda(s;\ell))\mathop{ds}}.\label{eq:CS_def}
\end{equation}
We now evaluate $C(\ell)$ and $S(\ell)$.

\paragraph*{Case I: $\ell\le L-\Delta$ (non-forced overlap).} Define
\begin{equation}
    m \coloneqq \min\{\ell,\Delta\},\qquad d \coloneqq \abs{\ell-\Delta}.
\end{equation}
As $s$ varies uniformly, the overlap length $\Lambda(s;\ell)$ (i) equals $0$ over a total shift-length $L - \ell - \Delta$, (ii) increases linearly from $0$ to $m$ over a shift-length $m$, (iii) stays at $m$ over a shift-length $d$, and (iv) decreases linearly back to $0$ over another shift-length $m$. Thus,
\begin{equation}
    C(\ell) = (L - \ell - \Delta) \cdot 1 + d\cos(\gamma m) + 2\defint{0}{m}{\cos(\gamma x)\mathop{dx}} = (L - \ell - \Delta) + d\cos(\gamma m) + \frac{2}{\gamma}\sin(\gamma m),\label{eq:C_case1}
\end{equation}
and
\begin{equation}
    S(\ell) = d\sin(\gamma m) + 2\defint{0}{m}{\sin(\gamma x)\mathop{dx}} = d\sin(\gamma m) + \frac{2}{\gamma}(1 - \cos(\gamma m)).\label{eq:S_case1}
\end{equation}
Substituting~\cref{eq:C_case1,eq:S_case1} into~\cref{eq:Pcv_cos_sin} then gives a closed form for $P_\mathrm{cv}(v,T)$ whenever $\ell \le L - \Delta$.

\paragraph*{Case II: $\ell > L-\Delta$ (forced overlap).}
In this regime, the traversed arc and the field arc must overlap for every shift $s$. Define the minimal overlap length
\begin{equation}
    c \coloneqq \ell + \Delta - L > 0,
\end{equation}
and the complementary arc lengths
\begin{equation}
    \ell' \coloneqq L - \ell,\qquad \Delta' \coloneqq L - \Delta.
\end{equation}
Let $G(s)$ be the overlap length of two arcs of lengths $\ell'$ and $\Delta'$; since
$\ell' + \Delta' = 2L - (\ell + \Delta) < L$, $G(s)$ has the same triangle--plateau--triangle structure as in Case I. We verify the identity
\begin{equation}
    \Lambda(s;\ell) = c + G(s),
\end{equation}
and therefore
\begin{align}
    C(\ell) &= \defint{0}{L}{\cos(\gamma(c + G(s)))\mathop{ds}} = \cos(\gamma c)C_G - \sin(\gamma c)S_G,\label{eq:C_case2}\\
    S(\ell) &= \defint{0}{L}{\sin(\gamma(c + G(s)))\mathop{ds}} = \sin(\gamma c)C_G + \cos(\gamma c)S_G,\label{eq:S_case2}
\end{align}
where
\begin{align}
    C_G &\coloneqq \defint{0}{L}{\cos(\gamma G(s))\mathop{ds}} = (L - \ell' - \Delta') + d'\cos(\gamma m') + \frac{2}{\gamma}\sin(\gamma m'),\\
    S_G &\coloneqq \defint{0}{L}{\sin(\gamma G(s))\mathop{ds}} = d'\sin(\gamma m') + \frac{2}{\gamma}(1 - \cos(\gamma m')),
\end{align}
with $m' \coloneqq \min\{\ell', \Delta'\}$ and $d' \coloneqq \abs{\ell'-\Delta'}$. Substituting~\cref{eq:C_case2,eq:S_case2} into~\cref{eq:Pcv_cos_sin} yields $P_\mathrm{cv}(v,T)$ for $\ell > L-\Delta$.

\subsection{Optimizing over \texorpdfstring{$v$}{v}}\label{app:cv_optimize}
For each interrogation time $T$, we define
\begin{equation}
    P_\mathrm{cv}^\star(T) \coloneqq \max_{v\ge 0} P_\mathrm{cv}(v,T),\qquad v^\star(T) \in \arg\max_{v \ge 0}P_\mathrm{cv}(v,T),\label{eq:optdefs}
\end{equation}
where $P_\mathrm{cv}(v,T)$ is the constant-velocity-averaged success probability in~\cref{eq:Pcv_def} (equivalently, the closed-form expressions obtained from~\cref{eq:Pcv_cos_sin,eq:C_case1,eq:S_case2}). In~\cref{fig:cv_opt}, we plot (a) $P_\mathrm{cv}^\star(T)$ and (b) the corresponding optimizer $v^\star(T)$ as a function of time $T$. The maximized curves in~\cref{fig:cv_opt} show three different regimes that mirror the $k$-stop protocol of~\cref{thm:classical_opt}. Note that constant velocity over time $T$ sweeps a total distance $vT$, which corresponds to scanning $k \approx {vT}/{\Delta}$ field-sized cells. Thus, the stationary strategy ($v=0$) corresponds to $k=1$, while scanning one full loop corresponds to $k=N$.

\begin{enumerate}
    \item \emph{Stationary regime (short times).} For small $T$, the optimal constant-velocity strategy is to remain stationary,
    \begin{equation}
        v^\star(T) = 0,\qquad P_\mathrm{cv}^\star(T) = \frac{1}{N}\sin^2(BT),
    \end{equation}
    which is the $k=1$ of the $k$-stop protocol. Intuitively, when $BT \lesssim \pi/2$, it is best to concentrate the interaction time rather than spread it over multiple locations.
    
    \item \emph{Intermediate regime (linear $P^\star$ and constant $v^\star$).} As $T$ increases, the optimal strategy begins to scan multiple candidate locations. As shown in~\cref{app:optimal_classical_linear}, in the large-$N$ limit, we can treat $k/N$ as a continuous variable. Maximizing $\frac{k}{N}\sin^2(BT/k)$ yields
    \begin{equation}
        v^\star(T) \simeq 0.8580B\Delta,\qquad P_\mathrm{cv}^\star(T) \simeq 0.7246\frac{BT}{N},\label{eq:cv_linear_plateau}
    \end{equation}
    that is, a linear-in-$T$ growth of the optimal success probability and an approximately constant optimal velocity.
    
    \item \emph{One-loop regime (long times).} For long times, the optimum occurs at $k= N$, corresponding to scanning the entire ring. Any velocity satisfying $vT \in L\mathbb{Z}$ yields a deterministic interaction time $T/N$ independent of the unknown field position, and therefore
    \begin{equation}
        v^\star(T) \approx \frac{L}{T},\qquad P_\mathrm{cv}^\star(T) = \sin^2\left(\frac{BT}{N}\right),
    \end{equation}
    where we choose the minimal velocity (which traverses one full lap) such that $vT \in L\mathbb{Z}$ as an optimal velocity. This explains the $v^\star(T) \sim 1/T$ decay in~\cref{fig:cv_opt}(b) and the approach of~\cref{fig:cv_opt}(a) to the uniform-coverage curve.
\end{enumerate}

\begin{figure}[t]
    \centering
    \subfloat[]{\includegraphics[width=0.48\linewidth]{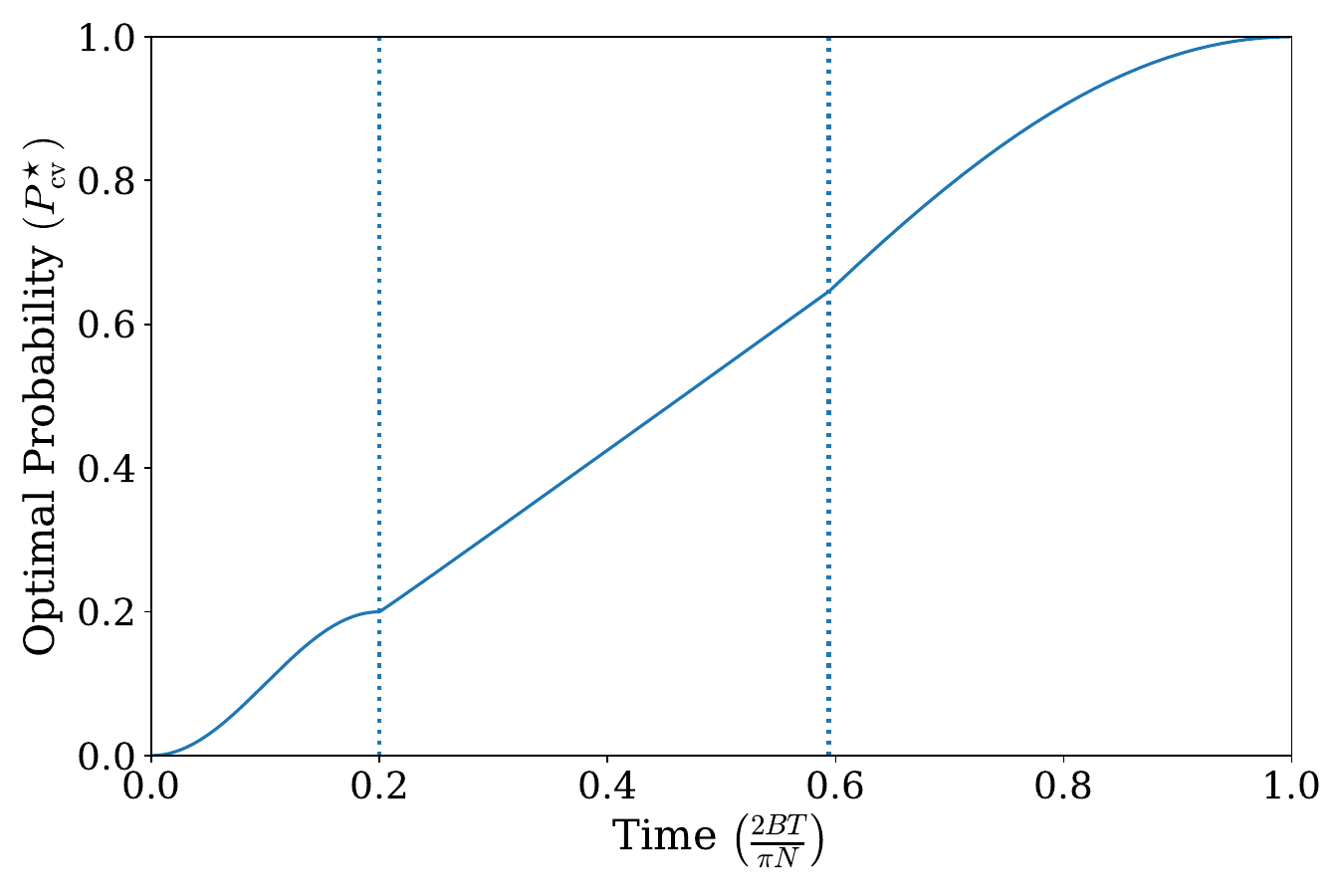}\label{fig:cv_opt_prob}}
    \hfill
    \subfloat[]{\includegraphics[width=0.48\linewidth]{v_const_v.pdf}\label{fig:cv_opt_vel}}
    \caption{Constant-velocity protocol for $N=5$. (a) Optimal success probability $P_\mathrm{cv}^\star(T) = \max_{v} P_\mathrm{cv}(v,T)$ versus interrogation time $T$. (b) Corresponding optimal velocity $v^\star(T)$. The curve exhibits three regimes: a short-time stationary regime ($v^\star = 0$), an intermediate regime with approximately constant optimal speed, and a long-time uniform-coverage regime with $v^\star(T) \approx L/T$.}
    \label{fig:cv_opt}
\end{figure}

\section{Proof of the lower bound for the moving sensor problem }\label{app: grover_proof}
In this final appendix, we provide proofs for the lower bounds in~\cref{tab:scalings} for the moving sensor problem when including sensors with quantum motion and/or entanglement. We first note that, classically, using $M$ sensors is approximately equivalent to extending the total sensing time to $MT$ in the regime $BT \gg 1$. In the quantum regime, however, the sensors may be in a quantum superposition of different spatial basis states, and multiple sensors can be in an entangled state, which may change the optimal $T$ scaling. We discuss those distinct cases in turn below.

Our proofs use techniques closely related to the hybrid method used to prove lower bounds for discrete-~\cite{bennett1997strengths} and continuous-time \cite{farhi1998analog} unstructured search. Those results lower bound the query complexity by bounding the Euclidean distance, which upper-bounds the trace distance and hence the Helstrom bound on single-shot distinguishability, between the evolution under any given oracle and a reference evolution without an oracle, averaged over all possible oracles. 
Our proofs differ in a crucial technical aspect: in contrast to unstructured search, where the oracle acts on a single query register, here the field acts on all $M$ sensors simultaneously, and the sensors' basis states can naturally be labeled by states of $N$ bosonic modes with at most $M$ excitations, where each sensor in the internal state $\ket{1}$ contributes one excitation on its site. As a result, the oracle couples to all sensors at the oracle site rather than to a single query register, which modifies the key step of the hybrid argument quantitatively and enables an additional entanglement advantage when multiple sensors can occupy the same site.

In what follows, we consider arbitrary $M$-sensor protocols. However, all our proofs can be directly generalized to strategies involving ancillas and arbitrary sensor-ancilla controls, because the key steps in the proof are only concerned with the action of the oracle Hamiltonian on the sensor(-ancilla) state and do not depend on whether the sensors are entangled with other ancillas or not. For protocols without spatial coherence and/or entanglement, these constraints are imposed on the joint state of the sensors and their ancillas [see discussion below \cref{appeq:T.lb.prod} for detail].

To simplify notation, we use the following labeling scheme for the state space of the sensors. First, we introduce a fictitious bosonic mode on each site. Then, we map an $M$-qubit sensor basis state with $n_{x}$ excitations on the $x$th site to the bosonic Fock state $\ket{\{n_{x}\}}$. $\mathcal{H}_{M}$ describes the subspace spanned by states with at most $M$ excitations. If we allow multiple sensors to stack on the same site, then the subspace spanned by all the possible sensor states is $\mathcal{H}_{M}$. If there is a field, then we denote the location of that field as site $x_{0}$ with $x_{0} \in \{1, 2, \ldots, N\} $. The corresponding Hamiltonian in the original basis reads $H_{x_{0}} = B\sum_{m=1}^{M}\sigma^{(m)}_{z} \otimes \ketbra{x_0}^{(m)}$, which maps to $H ^{(\text{eff})}_{x_{0}} = 2B a_{x_{0}}^{\dag}a_{x_{0}}$ on the sensor mode on the $x_{0}$th site up to a spin-independent spatial potential. Note that all the bounds below also hold for the full Hamiltonian, since in the original basis its magnitude is likewise bounded by $B$ times the number of sensors on site $x_0$. As such, we focus on the given bosonic Hamiltonian to simplify notation. We do not know where $x_{0}$ is before we start the protocol. 
We divide the total evolution into $K$ equal partitions, with total evolution time $T = K \Delta t$ and we take the $K \to \infty$, $\Delta t \to 0$ limit. We can generally describe a sensing strategy (conditioned on the field being located at an unknown site $x_{0}$) as 
\begin{equation}
    \ket{\Psi_{x_{0}}} = V_{K+1} \mathcal{T} \prod_{j=1} ^{K} e ^{-i H ^{(\text{eff})}_{x_{0}} \Delta t } V_{j} \ket{\psi_{0}},\qquad \ket{\Psi_{0}} = V_{K+1} \mathcal{T} \prod_{j=1} ^{K}  V_{j} \ket{\psi_{0}},
\end{equation}
where
\begin{equation}
    \mathcal{T} \prod_{j=1} ^{K}  V_{j} = \hat{V}_K\cdots\hat{V}_1.
\end{equation}
Thus, a generic strategy can be related to solving the following decision problem:
\begin{align}
    \text{distinguish between } \ket{\Psi_{x}}\text{ and }\ket{\Psi_{0}}.
\end{align}
The maximal probability of successfully distinguishing between two states is given by the Holevo-Helstrom bound (see~\cref{eq:helstrom_pure} in~\cref{app:state_discrimination}):
\begin{equation}
    p_{x_0} = \frac{1 + \frac{1}{2}\norm{\ketbra{\Psi_{x_0}} - \ketbra{\Psi_0}}_1}{2}.
\end{equation}
Suppose we have the optimal strategy, such that $p_{x_{0}} > 2/3$ for all $x_{0}$. Then, we have the following necessary condition:
\begin{align}
    \frac{1}{N}\defsum{x_0=1}{N}{p_{x_0}} = \frac{1}{N}\defsum{x_0=1}{N}{\frac{1 + \frac{1}{2}\norm{\ketbra{\Psi_{x_0}} - \ketbra{\Psi_0}}_1}{2}} &> \frac{2}{3}\\
    \implies \frac{1}{N}\defsum{x_0=1}{N}{\norm{\ketbra{\Psi_{x_0}} - \ketbra{\Psi_0}}_1} &> \frac{2}{3}.
\end{align}
Note that for a continuous target coordinate $x_{0} \in [0,L)$, we can similarly rewrite the above condition as
\begin{align}
\int ^{L}_{0} \frac{d x_{0} }{L} 
{\norm{\ketbra{\Psi_{ {x_{0}}}} - \ketbra{\Psi_0}}_1} &> \frac{2}{3}
.
\end{align}
In the continuous case, the projector $\ketbra{x_0}$ in $H_{x_0}$ is replaced by $\Pi_{x_0}=\int_{x_0}^{x_0+\Delta}\ketbra{x}dx$, which satisfies $\int_0^L\Pi_{x_0}\,dx_0=\Delta\,\mathbb{I}$, the continuous analog of $\sum_{x_0}\ketbra{x_0}=\mathbb{I}$. All the derivations below can therefore be directly generalized to the continuous regime (with ``no stacking'' understood as a minimum separation $\Delta$ between sensors).
Thus, to simplify notation, we focus on the discrete case in what follows, with the understanding that $1/N$ should be replaced by $\Delta / L $ in the final lower bounds when $x$ is continuous.

From the triangle inequality, we have
\begin{align}
    \frac{1}{N}\defsum{x_0=1}{N}{\norm{\ketbra{\Psi_{x_0}} - \ketbra{\Psi_0}}_1} &\leq \frac{2}{N}\defsum{x_0=1}{N}{\norm{\mathcal{T}\prod_{j=1}^{K}{e^{-iH_{x_0}^{(\mathrm{eff})}\Delta t}V_j\ket{\psi_0}} - \mathcal{T}\prod_{j=1}^{K}{e^{-i\langle H_{x_0}^{(\mathrm{eff})}(j\Delta t)\rangle\Delta t}V_j\ket{\psi_0}}}_2}\\
    &\leq \frac{2}{N}\defsum{x_0=1}{N}{\defsum{\ell=0}{K-1}{\norm{(e^{-iH_{x_0}^{(\mathrm{eff})}\Delta t} - e^{-i\langle H_{x_0}^{(\mathrm{eff})}(\ell\Delta t)\rangle\Delta t})\mathcal{T}\prod_{j=1}^{\ell}{V_j\ket{\psi_0}}}_2}}\label{neq:Hbound.time.gen}\\
    &\leq \frac{2K}{N}\max_{\ket{\Tilde{\psi}} \in \mathcal{H}_M}{\defsum{x_0=1}{N}{\norm{(e^{-i(H_{x_0}^{(\mathrm{eff})}-\langle\Tilde{\psi}| H_{x_0}^{(\mathrm{eff})}|\Tilde{\psi}\rangle)\Delta t} - \mathbb{I})\ket{\Tilde{\psi}}}_2}},
\end{align}
where we denote $\langle H_{x_0}^{(\mathrm{eff})}(\ell\Delta t)\rangle$ as the expectation value of $H_{x_0}^{(\mathrm{eff})}$ in the instantaneous reference state $\mathcal{T}\prod_{j=1}^{\ell}{V_j\ket{\psi_0}}$.
In the $\Delta t \to 0$ limit, $B \Delta t \ll 1 $, so we can write the above expression as
\begin{align}
&{}\frac{2 K }{N} \max_{\ket{\tilde{\psi} } \in \mathcal{H}_{M} }
\sum_{x_{0} =1} ^{N}  
\left \| ( e ^{-i (H ^{(\text{eff})}_{x_{0}}-\langle\Tilde{\psi}| H_{x_0}^{(\mathrm{eff})}|\Tilde{\psi}\rangle) \Delta t } - \mathbb{I} )  \ket{ \tilde{\psi}} \right  \|_{2}
\nonumber\\
&= \frac{2 \sqrt{2}K }{N} \max_{\ket{\tilde{\psi} } \in \mathcal{H}_{M} }
\sum_{x_{0} =1} ^{N}  
\sqrt{1- \left | 
% \braket{\tilde{\psi} |  e ^{-i H ^{(\text{eff})}_{x_{0}} \Delta t }  | \tilde{\psi}  } 
\mel{\tilde{\psi} }{e ^{-i (H ^{(\text{eff})}_{x_{0}}-\langle\Tilde{\psi}| H_{x_0}^{(\mathrm{eff})}|\Tilde{\psi}\rangle) \Delta t } }{\tilde{\psi} }
\right |
} 
\\
&\le \frac{2 \sqrt{2}K }{N} \max_{\ket{\tilde{\psi} } \in \mathcal{H}_{M} }
\sqrt{N \sum_{x_{0} =1} ^{N} \left ( 1-  \left | 
% \braket{\tilde{\psi} |  e ^{-i H ^{(\text{eff})}_{x_{0}} \Delta t }  | \tilde{\psi}  } 
\mel{\tilde{\psi}}{e ^{-i (H ^{(\text{eff})}_{x_{0}}-\langle\Tilde{\psi}| H_{x_0}^{(\mathrm{eff})}|\Tilde{\psi}\rangle) \Delta t } }{\tilde{\psi}}
\right | \right ) }
\\
&= 2 \sqrt{2} K 
\sqrt{\frac{\max_{\ket{\tilde{\psi} } \in \mathcal{H}_{M} } \sum_{x_{0} =1} ^{N} \left ( 1- \left | 
% \braket{\tilde{\psi} |  e ^{-i H ^{(\text{eff})}_{x_{0}} \Delta t }  | \tilde{\psi}  } 
\mel{\tilde{\psi} }{e ^{-i (H ^{(\text{eff})}_{x_{0}}-\langle\Tilde{\psi}| H_{x_0}^{(\mathrm{eff})}|\Tilde{\psi}\rangle) \Delta t } }{\tilde{\psi}}
\right | \right )
}{N}   }
\\
&\le 2 K \Delta t 
\sqrt{\frac{\max_{\ket{\tilde{\psi} } \in \mathcal{H}_{M} } \sum_{x_{0} =1} ^{N} \left [ 
% \braket{\tilde{\psi} |  (H ^{(\text{eff})}_{x_{0}} ) ^{2}  | \tilde{\psi}  }
\mel{\tilde{\psi}}{(H ^{(\text{eff})}_{x_{0}} ) ^{2} }{\tilde{\psi} }
- 
% \braket{\tilde{\psi} | H ^{(\text{eff})}_{x_{0}}  | \tilde{\psi}  }
\mel{\tilde{\psi} }{H ^{(\text{eff})}_{x_{0}} }{\tilde{\psi}}^{2}  \right ] 
}{N}   }
\\
&\le 4 BT \frac{M  }
{\sqrt{N}}
. 
\end{align}
If we are allowed to place multiple sensors at a single site, then the optimal state is a $W$ state made of an effective qudit. Each $M$-dimensional qudit is formed by putting $M$ qubit sensors on the same site and taking the permutationally invariant basis. This gives  
\begin{align}
    \ket{\tilde{\psi}} \in \mathcal{H}_{M}^{\mathrm{stack}}:\quad T = \Theta\left(\frac{\sqrt{N}}{M}\right).
\end{align}
If we do not allow stacking multiple sensors on a single site, then we can use a similar argument to show that  
\begin{align}
    \ket{\tilde{\psi}} \in \mathcal{H}_{M}^{\mathrm{dist}}:\quad T = \Omega\left(\sqrt{\frac{N}{M}}\right).
\end{align}
An $M$-qubit GHZ state concatenated with an $(N/M)$-branch $W$ state achieves this bound. The same bound holds for unentangled sensor-ancilla states even if the sensors may share a site, since the variance is then additive over sensors, giving $\sum_{x_0}\mathrm{Var}(H_{x_0})\le 4B^2\sum_{x_0}\langle n_{x_0}\rangle\le 4B^2M$. We thus obtain the optimal scaling for quantum sensors without stacking as
\begin{align}
    \ket{\tilde{\psi}} \in \mathcal{H}_{M} ^{\text{dist}}: \quad T = \Theta(\sqrt{N/M}).
\end{align}
Here, we assume that $N > M$. In the regime $M > N$, adding more sensors does not further reduce the sensing time, because we have assumed that there is at most a single qubit on each site.

We can use a similar argument as above to derive a query complexity lower bound for strategies without spatial coherence. By this, we mean protocols where each sensor occupies a definite site \emph{throughout} the sensor's evolution. In contrast, if we begin with a fixed-position initial state but allow coherent superposition between different position-basis states during the sensing protocol, then we would be able to saturate the general lower bound in protocol time $T$ discussed above.

As the trace norm is convex, it suffices to prove a lower bound for all pure probe states. We focus on strategies without spatial coherence, where each sensor has a definite position during the entire protocol, while the internal and ancilla state can be arbitrary.
For any given sensor initial state and control protocol, the upper bound on success probability in~\cref{neq:Hbound.time.gen} still applies:
\begin{equation}
    \frac{1}{N}\defsum{x_0=1}{N}{\norm{\ketbra{\Psi_{x_0}} - \ketbra{\Psi_0}}_1} \leq \frac{2}{N}\defsum{x_0=1}{N}{\defsum{\ell=0}{K-1}{\norm{(e^{-iH_{x_0}^{(\mathrm{eff})}\Delta t} - e^{-i \langle H_{x_0}^{(\mathrm{eff})}(\ell\Delta t)\rangle \Delta t})\mathcal{T}\prod_{j=1}^{\ell}{V_j\ket{\Tilde{\psi}}}}_2}}.
\end{equation}
For sensor states that only involve classical motion, we can further simplify the right-hand side as
\begin{align}
    \frac{1}{N} \sum_{x_{0} =1} ^{N} 
    \left\|\ketbra{\Psi_{x_{0}}} -  \ketbra{\Psi_{0}} \right  \|_{1}
    &\le \frac{2 K }{N} \max_{\ket{\Tilde{\psi}}  \in \mathcal{H}_{M}^{\mathrm{cl}} }
    \sum_{x_{0} =1} ^{N}  
    \left \| ( e ^{-i H ^{(\text{eff})}_{x_{0}} \Delta t } - e^{-i \mel{\Tilde{\psi}}{H ^{(\text{eff})}_{x_{0}}}{ \Tilde{\psi}} \Delta t} )  \ket{ \Tilde{\psi}   } \right  \|_{2}\\
    &\le \frac{2 K \Delta t }{N} \max_{\ket{\Tilde{\psi}}  \in \mathcal{H}_{M}^{\mathrm{cl}} }
    \sum_{x_{0} =1} ^{N}  
    \sqrt{
    % \braket{ \phi_{ x_{0} } |  (H ^{(\text{eff})}_{x_{0}} ) ^{2}  | \phi_{ x_{0} } }
    \mel{\Tilde{\psi} }{(H ^{(\text{eff})}_{x_{0}} ) ^{2} }{\Tilde{\psi}}
    - 
    % \braket{\phi_{ x_{0} } | H ^{(\text{eff})}_{x_{0}}  | \phi_{ x_{0} }  }
    \mel{\Tilde{\psi}}{H ^{(\text{eff})}_{x_{0}} }{\Tilde{\psi} }
    ^{2}}\\ 
    &\le \frac{4KB\Delta t}{N} M\\
    &= \frac{4MBT}{N},
\end{align}
where we denote by $\mathcal{H}_{M}^{\mathrm{cl}}$ the subset of sensor(-ancilla) states without spatial coherence.
Thus, we obtain the optimal time complexity for protocols without spatial coherence as
\begin{align}
    \label{appeq:T.lb.prod}
    \ket{\tilde{\psi} } \in \mathcal{H}_{M}^{\mathrm{cl}}: \quad T = \Theta\left(\frac{N}{M}\right).
\end{align}
The bound is tight, as we can trivially saturate it by placing $M$ sensors on $M$ sites in each round of the protocol. We note that the resource constraints must be imposed on the joint sensor-ancilla state; otherwise, spatial coherence or entanglement can be hidden in sensor-ancilla correlations. For example, the state $\sum_x\ket{n_{x}=M}\ket{a_x}/\sqrt{N}$ with orthogonal ancilla states $\ket{a_x}$ has a separable reduced sensor state without spatial coherence, yet it achieves the optimal search-time scaling, $T=\Theta(\sqrt{N}/(MB))$.

\begin{thebibliography}{56}%
\makeatletter
\providecommand \@ifxundefined [1]{%
 \@ifx{#1\undefined}
}%
\providecommand \@ifnum [1]{%
 \ifnum #1\expandafter \@firstoftwo
 \else \expandafter \@secondoftwo
 \fi
}%
\providecommand \@ifx [1]{%
 \ifx #1\expandafter \@firstoftwo
 \else \expandafter \@secondoftwo
 \fi
}%
\providecommand \natexlab [1]{#1}%
\providecommand \enquote  [1]{``#1''}%
\providecommand \bibnamefont  [1]{#1}%
\providecommand \bibfnamefont [1]{#1}%
\providecommand \citenamefont [1]{#1}%
\providecommand \href@noop [0]{\@secondoftwo}%
\providecommand \href [0]{\begingroup \@sanitize@url \@href}%
\providecommand \@href[1]{\@@startlink{#1}\@@href}%
\providecommand \@@href[1]{\endgroup#1\@@endlink}%
\providecommand \@sanitize@url [0]{\catcode `\\12\catcode `\$12\catcode
  `\&12\catcode `\#12\catcode `\^12\catcode `\_12\catcode `\%12\relax}%
\providecommand \@@startlink[1]{}%
\providecommand \@@endlink[0]{}%
\providecommand \url  [0]{\begingroup\@sanitize@url \@url }%
\providecommand \@url [1]{\endgroup\@href {#1}{\urlprefix }}%
\providecommand \urlprefix  [0]{URL }%
\providecommand \Eprint [0]{\href }%
\providecommand \doibase [0]{https://doi.org/}%
\providecommand \selectlanguage [0]{\@gobble}%
\providecommand \bibinfo  [0]{\@secondoftwo}%
\providecommand \bibfield  [0]{\@secondoftwo}%
\providecommand \translation [1]{[#1]}%
\providecommand \BibitemOpen [0]{}%
\providecommand \bibitemStop [0]{}%
\providecommand \bibitemNoStop [0]{.\EOS\space}%
\providecommand \EOS [0]{\spacefactor3000\relax}%
\providecommand \BibitemShut  [1]{\csname bibitem#1\endcsname}%
\let\auto@bib@innerbib\@empty
%</preamble>
\bibitem [{\citenamefont {Degen}\ \emph {et~al.}(2017)\citenamefont {Degen},
  \citenamefont {Reinhard},\ and\ \citenamefont {Cappellaro}}]{degen2017}%
  \BibitemOpen
  \bibfield  {author} {\bibinfo {author} {\bibfnamefont {C.}~\bibnamefont
  {Degen}}, \bibinfo {author} {\bibfnamefont {F.}~\bibnamefont {Reinhard}},\
  and\ \bibinfo {author} {\bibfnamefont {P.}~\bibnamefont {Cappellaro}},\
  }\bibfield  {title} {\bibinfo {title} {Quantum sensing},\ }\href
  {https://doi.org/10.1103/RevModPhys.89.035002} {\bibfield  {journal}
  {\bibinfo  {journal} {Reviews of Modern Physics}\ }\textbf {\bibinfo {volume}
  {89}},\ \bibinfo {pages} {035002} (\bibinfo {year} {2017})}\BibitemShut
  {NoStop}%
\bibitem [{\citenamefont {Giovannetti}\ \emph {et~al.}(2011)\citenamefont
  {Giovannetti}, \citenamefont {Lloyd},\ and\ \citenamefont
  {Maccone}}]{Giovannetti2011AdvancesQuantumMetrology}%
  \BibitemOpen
  \bibfield  {author} {\bibinfo {author} {\bibfnamefont {V.}~\bibnamefont
  {Giovannetti}}, \bibinfo {author} {\bibfnamefont {S.}~\bibnamefont {Lloyd}},\
  and\ \bibinfo {author} {\bibfnamefont {L.}~\bibnamefont {Maccone}},\
  }\bibfield  {title} {\bibinfo {title} {Advances in quantum metrology},\
  }\href {https://doi.org/10.1038/nphoton.2011.35} {\bibfield  {journal}
  {\bibinfo  {journal} {Nature Photonics}\ }\textbf {\bibinfo {volume} {5}},\
  \bibinfo {pages} {222} (\bibinfo {year} {2011})}\BibitemShut {NoStop}%
\bibitem [{\citenamefont {Boixo}\ \emph {et~al.}(2007)\citenamefont {Boixo},
  \citenamefont {Flammia}, \citenamefont {Caves},\ and\ \citenamefont
  {Geremia}}]{boixo2007generalized}%
  \BibitemOpen
  \bibfield  {author} {\bibinfo {author} {\bibfnamefont {S.}~\bibnamefont
  {Boixo}}, \bibinfo {author} {\bibfnamefont {S.~T.}\ \bibnamefont {Flammia}},
  \bibinfo {author} {\bibfnamefont {C.~M.}\ \bibnamefont {Caves}},\ and\
  \bibinfo {author} {\bibfnamefont {J.~M.}\ \bibnamefont {Geremia}},\
  }\bibfield  {title} {\bibinfo {title} {Generalized limits for
  single-parameter quantum estimation},\ }\href
  {https://doi.org/10.1103/PhysRevLett.98.090401} {\bibfield  {journal}
  {\bibinfo  {journal} {Physical Review Letters}\ }\textbf {\bibinfo {volume}
  {98}},\ \bibinfo {pages} {090401} (\bibinfo {year} {2007})}\BibitemShut
  {NoStop}%
\bibitem [{\citenamefont {Giovannetti}\ \emph {et~al.}(2006)\citenamefont
  {Giovannetti}, \citenamefont {Lloyd},\ and\ \citenamefont
  {Maccone}}]{Giovannetti2006QuantumMetrology}%
  \BibitemOpen
  \bibfield  {author} {\bibinfo {author} {\bibfnamefont {V.}~\bibnamefont
  {Giovannetti}}, \bibinfo {author} {\bibfnamefont {S.}~\bibnamefont {Lloyd}},\
  and\ \bibinfo {author} {\bibfnamefont {L.}~\bibnamefont {Maccone}},\
  }\bibfield  {title} {\bibinfo {title} {Quantum metrology},\ }\href
  {https://doi.org/10.1103/PhysRevLett.96.010401} {\bibfield  {journal}
  {\bibinfo  {journal} {Physical Review Letters}\ }\textbf {\bibinfo {volume}
  {96}},\ \bibinfo {pages} {010401} (\bibinfo {year} {2006})}\BibitemShut
  {NoStop}%
\bibitem [{\citenamefont {Khan}\ \emph {et~al.}(2025)\citenamefont {Khan},
  \citenamefont {Prabhu}, \citenamefont {Wright},\ and\ \citenamefont
  {McMahon}}]{khan2025quantum}%
  \BibitemOpen
  \bibfield  {author} {\bibinfo {author} {\bibfnamefont {S.~A.}\ \bibnamefont
  {Khan}}, \bibinfo {author} {\bibfnamefont {S.}~\bibnamefont {Prabhu}},
  \bibinfo {author} {\bibfnamefont {L.~G.}\ \bibnamefont {Wright}},\ and\
  \bibinfo {author} {\bibfnamefont {P.~L.}\ \bibnamefont {McMahon}},\
  }\bibfield  {title} {\bibinfo {title} {Quantum computational-sensing
  advantage},\ }\href {https://arxiv.org/abs/2507.16918} {\bibfield  {journal}
  {\bibinfo  {journal} {arXiv preprint arXiv:2507.16918}\ } (\bibinfo {year}
  {2025})}\BibitemShut {NoStop}%
\bibitem [{\citenamefont {Aigner}\ and\ \citenamefont
  {D{\"u}r}(2025)}]{aigner2025power}%
  \BibitemOpen
  \bibfield  {author} {\bibinfo {author} {\bibfnamefont {P.}~\bibnamefont
  {Aigner}}\ and\ \bibinfo {author} {\bibfnamefont {W.}~\bibnamefont
  {D{\"u}r}},\ }\bibfield  {title} {\bibinfo {title} {On the power of moving
  quantum sensors: fully flexible and noise-resilient sensing},\ }\href
  {https://arxiv.org/abs/2512.15876} {\bibfield  {journal} {\bibinfo  {journal}
  {arXiv preprint arXiv:2512.15876}\ } (\bibinfo {year} {2025})}\BibitemShut
  {NoStop}%
\bibitem [{\citenamefont {Eldredge}\ \emph {et~al.}(2018)\citenamefont
  {Eldredge}, \citenamefont {Foss-Feig}, \citenamefont {Gross}, \citenamefont
  {Rolston},\ and\ \citenamefont {Gorshkov}}]{eldredge2018}%
  \BibitemOpen
  \bibfield  {author} {\bibinfo {author} {\bibfnamefont {Z.}~\bibnamefont
  {Eldredge}}, \bibinfo {author} {\bibfnamefont {M.}~\bibnamefont {Foss-Feig}},
  \bibinfo {author} {\bibfnamefont {J.~A.}\ \bibnamefont {Gross}}, \bibinfo
  {author} {\bibfnamefont {S.~L.}\ \bibnamefont {Rolston}},\ and\ \bibinfo
  {author} {\bibfnamefont {A.~V.}\ \bibnamefont {Gorshkov}},\ }\bibfield
  {title} {\bibinfo {title} {Optimal and secure measurement protocols for
  quantum sensor networks},\ }\bibfield  {journal} {\bibinfo  {journal}
  {Physical Review A}\ }\textbf {\bibinfo {volume} {97}},\ \href
  {https://doi.org/10.1103/PhysRevA.97.042337} {10.1103/PhysRevA.97.042337}
  (\bibinfo {year} {2018})\BibitemShut {NoStop}%
\bibitem [{\citenamefont {Abbasgholinejad}\ \emph {et~al.}(2025)\citenamefont
  {Abbasgholinejad}, \citenamefont {Muleady}, \citenamefont {Bringewatt},
  \citenamefont {Brady}, \citenamefont {Wang}, \citenamefont {Fahimniya},\ and\
  \citenamefont {Gorshkov}}]{abbasgholinejad2025optimally}%
  \BibitemOpen
  \bibfield  {author} {\bibinfo {author} {\bibfnamefont {E.}~\bibnamefont
  {Abbasgholinejad}}, \bibinfo {author} {\bibfnamefont {S.~R.}\ \bibnamefont
  {Muleady}}, \bibinfo {author} {\bibfnamefont {J.}~\bibnamefont {Bringewatt}},
  \bibinfo {author} {\bibfnamefont {A.~J.}\ \bibnamefont {Brady}}, \bibinfo
  {author} {\bibfnamefont {Y.-X.}\ \bibnamefont {Wang}}, \bibinfo {author}
  {\bibfnamefont {A.}~\bibnamefont {Fahimniya}},\ and\ \bibinfo {author}
  {\bibfnamefont {A.~V.}\ \bibnamefont {Gorshkov}},\ }\bibfield  {title}
  {\bibinfo {title} {Optimally learning functions in interacting quantum sensor
  networks},\ }\href {https://arxiv.org/abs/2510.06360} {\bibfield  {journal}
  {\bibinfo  {journal} {arXiv preprint arXiv:2510.06360}\ } (\bibinfo {year}
  {2025})}\BibitemShut {NoStop}%
\bibitem [{\citenamefont {Maletinsky}\ \emph {et~al.}(2012)\citenamefont
  {Maletinsky}, \citenamefont {Hong}, \citenamefont {Grinolds}, \citenamefont
  {Hausmann}, \citenamefont {Lukin}, \citenamefont {Walsworth}, \citenamefont
  {Lon{\v{c}}ar},\ and\ \citenamefont {Yacoby}}]{Maletinsky2012ScanningNV}%
  \BibitemOpen
  \bibfield  {author} {\bibinfo {author} {\bibfnamefont {P.}~\bibnamefont
  {Maletinsky}}, \bibinfo {author} {\bibfnamefont {S.}~\bibnamefont {Hong}},
  \bibinfo {author} {\bibfnamefont {M.~S.}\ \bibnamefont {Grinolds}}, \bibinfo
  {author} {\bibfnamefont {B.}~\bibnamefont {Hausmann}}, \bibinfo {author}
  {\bibfnamefont {M.~D.}\ \bibnamefont {Lukin}}, \bibinfo {author}
  {\bibfnamefont {R.~L.}\ \bibnamefont {Walsworth}}, \bibinfo {author}
  {\bibfnamefont {M.}~\bibnamefont {Lon{\v{c}}ar}},\ and\ \bibinfo {author}
  {\bibfnamefont {A.}~\bibnamefont {Yacoby}},\ }\bibfield  {title} {\bibinfo
  {title} {A robust scanning diamond sensor for nanoscale imaging with single
  nitrogen-vacancy centres},\ }\href {https://doi.org/10.1038/nnano.2012.50}
  {\bibfield  {journal} {\bibinfo  {journal} {Nature Nanotechnology}\ }\textbf
  {\bibinfo {volume} {7}},\ \bibinfo {pages} {320} (\bibinfo {year}
  {2012})}\BibitemShut {NoStop}%
\bibitem [{\citenamefont {Schirhagl}\ \emph {et~al.}(2014)\citenamefont
  {Schirhagl}, \citenamefont {Chang}, \citenamefont {Loretz},\ and\
  \citenamefont {Degen}}]{Schirhagl2014NVReview}%
  \BibitemOpen
  \bibfield  {author} {\bibinfo {author} {\bibfnamefont {R.}~\bibnamefont
  {Schirhagl}}, \bibinfo {author} {\bibfnamefont {K.}~\bibnamefont {Chang}},
  \bibinfo {author} {\bibfnamefont {M.}~\bibnamefont {Loretz}},\ and\ \bibinfo
  {author} {\bibfnamefont {C.~L.}\ \bibnamefont {Degen}},\ }\bibfield  {title}
  {\bibinfo {title} {Nitrogen-vacancy centers in diamond: Nanoscale sensors for
  physics and biology},\ }\href
  {https://doi.org/10.1146/annurev-physchem-040513-103659} {\bibfield
  {journal} {\bibinfo  {journal} {Annual Review of Physical Chemistry}\
  }\textbf {\bibinfo {volume} {65}},\ \bibinfo {pages} {83} (\bibinfo {year}
  {2014})}\BibitemShut {NoStop}%
\bibitem [{\citenamefont {Rondin}\ \emph {et~al.}(2014)\citenamefont {Rondin},
  \citenamefont {T{\'e}tienne}, \citenamefont {Hingant}, \citenamefont {Roch},
  \citenamefont {Maletinsky},\ and\ \citenamefont
  {Jacques}}]{Rondin2014NVReview}%
  \BibitemOpen
  \bibfield  {author} {\bibinfo {author} {\bibfnamefont {L.}~\bibnamefont
  {Rondin}}, \bibinfo {author} {\bibfnamefont {J.-P.}\ \bibnamefont
  {T{\'e}tienne}}, \bibinfo {author} {\bibfnamefont {T.}~\bibnamefont
  {Hingant}}, \bibinfo {author} {\bibfnamefont {J.-F.}\ \bibnamefont {Roch}},
  \bibinfo {author} {\bibfnamefont {P.}~\bibnamefont {Maletinsky}},\ and\
  \bibinfo {author} {\bibfnamefont {V.}~\bibnamefont {Jacques}},\ }\bibfield
  {title} {\bibinfo {title} {Magnetometry with nitrogen-vacancy defects in
  diamond},\ }\href {https://doi.org/10.1088/0034-4885/77/5/056503} {\bibfield
  {journal} {\bibinfo  {journal} {Reports on Progress in Physics}\ }\textbf
  {\bibinfo {volume} {77}},\ \bibinfo {pages} {056503} (\bibinfo {year}
  {2014})}\BibitemShut {NoStop}%
\bibitem [{\citenamefont {Kielpinski}\ \emph {et~al.}(2002)\citenamefont
  {Kielpinski}, \citenamefont {Monroe},\ and\ \citenamefont
  {Wineland}}]{Kielpinski2002IonTrapArchitecture}%
  \BibitemOpen
  \bibfield  {author} {\bibinfo {author} {\bibfnamefont {D.}~\bibnamefont
  {Kielpinski}}, \bibinfo {author} {\bibfnamefont {C.}~\bibnamefont {Monroe}},\
  and\ \bibinfo {author} {\bibfnamefont {D.~J.}\ \bibnamefont {Wineland}},\
  }\bibfield  {title} {\bibinfo {title} {Architecture for a large-scale
  ion-trap quantum computer},\ }\href {https://doi.org/10.1038/nature00784}
  {\bibfield  {journal} {\bibinfo  {journal} {Nature}\ }\textbf {\bibinfo
  {volume} {417}},\ \bibinfo {pages} {709} (\bibinfo {year}
  {2002})}\BibitemShut {NoStop}%
\bibitem [{\citenamefont {Blakestad}\ \emph {et~al.}(2009)\citenamefont
  {Blakestad}, \citenamefont {Ospelkaus}, \citenamefont {VanDevender},
  \citenamefont {Amini}, \citenamefont {Britton}, \citenamefont {Leibfried},\
  and\ \citenamefont {Wineland}}]{Blakestad2009XJunction}%
  \BibitemOpen
  \bibfield  {author} {\bibinfo {author} {\bibfnamefont {R.~B.}\ \bibnamefont
  {Blakestad}}, \bibinfo {author} {\bibfnamefont {C.}~\bibnamefont
  {Ospelkaus}}, \bibinfo {author} {\bibfnamefont {A.~P.}\ \bibnamefont
  {VanDevender}}, \bibinfo {author} {\bibfnamefont {J.~M.}\ \bibnamefont
  {Amini}}, \bibinfo {author} {\bibfnamefont {J.}~\bibnamefont {Britton}},
  \bibinfo {author} {\bibfnamefont {D.}~\bibnamefont {Leibfried}},\ and\
  \bibinfo {author} {\bibfnamefont {D.~J.}\ \bibnamefont {Wineland}},\
  }\bibfield  {title} {\bibinfo {title} {High-fidelity transport of trapped-ion
  qubits through an x-junction trap array},\ }\href
  {https://doi.org/10.1103/PhysRevLett.102.153002} {\bibfield  {journal}
  {\bibinfo  {journal} {Physical Review Letters}\ }\textbf {\bibinfo {volume}
  {102}},\ \bibinfo {pages} {153002} (\bibinfo {year} {2009})}\BibitemShut
  {NoStop}%
\bibitem [{\citenamefont {Bowler}\ \emph {et~al.}(2012)\citenamefont {Bowler},
  \citenamefont {Gaebler}, \citenamefont {Lin}, \citenamefont {Tan},
  \citenamefont {Hanneke}, \citenamefont {Jost}, \citenamefont {Home},
  \citenamefont {Leibfried},\ and\ \citenamefont
  {Wineland}}]{Bowler2012DiabaticTransport}%
  \BibitemOpen
  \bibfield  {author} {\bibinfo {author} {\bibfnamefont {R.}~\bibnamefont
  {Bowler}}, \bibinfo {author} {\bibfnamefont {J.}~\bibnamefont {Gaebler}},
  \bibinfo {author} {\bibfnamefont {Y.}~\bibnamefont {Lin}}, \bibinfo {author}
  {\bibfnamefont {T.~R.}\ \bibnamefont {Tan}}, \bibinfo {author} {\bibfnamefont
  {D.}~\bibnamefont {Hanneke}}, \bibinfo {author} {\bibfnamefont {J.~D.}\
  \bibnamefont {Jost}}, \bibinfo {author} {\bibfnamefont {J.~P.}\ \bibnamefont
  {Home}}, \bibinfo {author} {\bibfnamefont {D.}~\bibnamefont {Leibfried}},\
  and\ \bibinfo {author} {\bibfnamefont {D.~J.}\ \bibnamefont {Wineland}},\
  }\bibfield  {title} {\bibinfo {title} {Coherent diabatic ion transport and
  separation in a multizone trap array},\ }\href
  {https://doi.org/10.1103/PhysRevLett.109.080502} {\bibfield  {journal}
  {\bibinfo  {journal} {Physical Review Letters}\ }\textbf {\bibinfo {volume}
  {109}},\ \bibinfo {pages} {080502} (\bibinfo {year} {2012})}\BibitemShut
  {NoStop}%
\bibitem [{\citenamefont {Parazzoli}\ \emph {et~al.}(2022)\citenamefont
  {Parazzoli}, \citenamefont {Van Der~Wall},\ and\ \citenamefont
  {Stick}}]{Parazzoli2022Shuttling}%
  \BibitemOpen
  \bibfield  {author} {\bibinfo {author} {\bibfnamefont {P.}~\bibnamefont
  {Parazzoli}}, \bibinfo {author} {\bibfnamefont {J.}~\bibnamefont {Van
  Der~Wall}},\ and\ \bibinfo {author} {\bibfnamefont {D.}~\bibnamefont
  {Stick}},\ }\bibfield  {title} {\bibinfo {title} {Closed-loop optimization of
  fast trapped-ion shuttling with sub-quanta excitation},\ }\href
  {https://doi.org/10.1038/s41534-022-00579-3} {\bibfield  {journal} {\bibinfo
  {journal} {npj Quantum Information}\ }\textbf {\bibinfo {volume} {8}},\
  \bibinfo {pages} {68} (\bibinfo {year} {2022})}\BibitemShut {NoStop}%
\bibitem [{\citenamefont {Ransford}\ \emph {et~al.}(2026)\citenamefont
  {Ransford}, \citenamefont {Allman}, \citenamefont {Arkinstall}, \citenamefont
  {Campora}, \citenamefont {Cooper}, \citenamefont {Delaney}, \citenamefont
  {Dreiling}, \citenamefont {Estey}, \citenamefont {Figgatt}, \citenamefont
  {Hall}, \citenamefont {Husain}, \citenamefont {Isanaka}, \citenamefont
  {Kennedy}, \citenamefont {Kotibhaskar}, \citenamefont {Madjarov},
  \citenamefont {Mayer}, \citenamefont {Milne}, \citenamefont {Park},
  \citenamefont {Reed}, \citenamefont {Ancona}, \citenamefont {Andersen},
  \citenamefont {{Andres-Martinez}}, \citenamefont {Angenent}, \citenamefont
  {Argueta}, \citenamefont {Arkin}, \citenamefont {Ascarrunz}, \citenamefont
  {Baker}, \citenamefont {Barnes}, \citenamefont {Bartolotta}, \citenamefont
  {Berg}, \citenamefont {Besand}, \citenamefont {Bjork}, \citenamefont {Blain},
  \citenamefont {Blanchard}, \citenamefont {{Blume-Kohout}}, \citenamefont
  {Bohn}, \citenamefont {Borgna}, \citenamefont {Botamanenko}, \citenamefont
  {Boutelle}, \citenamefont {Brown}, \citenamefont {Buckingham}, \citenamefont
  {Burdick}, \citenamefont {Burton}, \citenamefont {Carey}, \citenamefont
  {Carron}, \citenamefont {Chambers}, \citenamefont {Chan}, \citenamefont
  {Children}, \citenamefont {Colussi}, \citenamefont {Crepinsek}, \citenamefont
  {Cureton}, \citenamefont {Davies}, \citenamefont {Davis}, \citenamefont
  {DeCross}, \citenamefont {Deen}, \citenamefont {Delaney}, \citenamefont
  {DelVento}, \citenamefont {DeSalvo}, \citenamefont {Dominy}, \citenamefont
  {Drotar}, \citenamefont {Duncan}, \citenamefont {Eccles}, \citenamefont
  {Edgington}, \citenamefont {Erickson}, \citenamefont {Erickson},
  \citenamefont {Ertsgaard}, \citenamefont {Esposito}, \citenamefont {Evans},
  \citenamefont {Evans}, \citenamefont {Fabrikant}, \citenamefont {Fischer},
  \citenamefont {Foltz}, \citenamefont {{Foss-Feig}}, \citenamefont {Francois},
  \citenamefont {Freyberg}, \citenamefont {Gao}, \citenamefont {Garay},
  \citenamefont {Garvin}, \citenamefont {Gaudiosi}, \citenamefont {Gilbreth},
  \citenamefont {Giles}, \citenamefont {Glynn}, \citenamefont {Graves},
  \citenamefont {Hansen}, \citenamefont {Hayes}, \citenamefont {Heidemann},
  \citenamefont {Higashi}, \citenamefont {Hilbun}, \citenamefont {Hines},
  \citenamefont {Hlavaty}, \citenamefont {Hoffman}, \citenamefont {Hoffman},
  \citenamefont {Holliman}, \citenamefont {Hooper}, \citenamefont {Horning},
  \citenamefont {Hostetter}, \citenamefont {Hothem}, \citenamefont {Houlton},
  \citenamefont {Hout}, \citenamefont {Hutson}, \citenamefont {Jacobs},
  \citenamefont {Jacobs}, \citenamefont {Johannsen}, \citenamefont {Johansen},
  \citenamefont {Jones}, \citenamefont {Julian}, \citenamefont {Jung},
  \citenamefont {Keay}, \citenamefont {Klein}, \citenamefont {Koch},
  \citenamefont {Kondo}, \citenamefont {Kong}, \citenamefont {Kosto},
  \citenamefont {Lawrence}, \citenamefont {Liefer}, \citenamefont {Lollie},
  \citenamefont {Lucchetti}, \citenamefont {Lysne}, \citenamefont {Lytle},
  \citenamefont {MacPherson}, \citenamefont {Malm}, \citenamefont {Mather},
  \citenamefont {Mathewson}, \citenamefont {Maxwell}, \citenamefont
  {McCaffrey}, \citenamefont {McDougall}, \citenamefont {Mendoza},
  \citenamefont {Miller}, \citenamefont {Mills}, \citenamefont {Morrison},
  \citenamefont {Narmour}, \citenamefont {Nguyen}, \citenamefont {Nugent},
  \citenamefont {Olson}, \citenamefont {Ouellette}, \citenamefont {Parks},
  \citenamefont {Peters}, \citenamefont {Peterson}, \citenamefont {Petricka},
  \citenamefont {Pino}, \citenamefont {Polito}, \citenamefont {Potter},
  \citenamefont {Preidl}, \citenamefont {Price}, \citenamefont {Proctor},
  \citenamefont {Pugh}, \citenamefont {Ratcliff}, \citenamefont {Raymondson},
  \citenamefont {Rhodes}, \citenamefont {Roman}, \citenamefont {Roy},
  \citenamefont {{Ryan-Anderson}}, \citenamefont {Sanchez}, \citenamefont
  {Sangiolo}, \citenamefont {Sawadski}, \citenamefont {Schaffer}, \citenamefont
  {Schow}, \citenamefont {Sedlacek}, \citenamefont {Semenenko}, \citenamefont
  {Shevchuk}, \citenamefont {Shore}, \citenamefont {Siegfried}, \citenamefont
  {Singhal}, \citenamefont {Sivarajah}, \citenamefont {Skripka}, \citenamefont
  {Sletten}, \citenamefont {Spaun}, \citenamefont {Sprenkle}, \citenamefont
  {Stoufer}, \citenamefont {Tader}, \citenamefont {Taylor}, \citenamefont
  {Thompson}, \citenamefont {Tobey}, \citenamefont {Tran}, \citenamefont
  {Tran}, \citenamefont {Vittorini}, \citenamefont {Volin}, \citenamefont
  {Walker}, \citenamefont {White}, \citenamefont {Williams}, \citenamefont
  {Wilson}, \citenamefont {Wolf}, \citenamefont {Wringe}, \citenamefont
  {Young}, \citenamefont {Zheng}, \citenamefont {Zuraski}, \citenamefont
  {Baldwin}, \citenamefont {Chernoguzov}, \citenamefont {Gaebler},
  \citenamefont {Sanders}, \citenamefont {Neyenhuis}, \citenamefont {Stutz},\
  and\ \citenamefont {Bohnet}}]{ransford_98-qubit_2026}%
  \BibitemOpen
  \bibfield  {author} {\bibinfo {author} {\bibfnamefont {A.}~\bibnamefont
  {Ransford}}, \bibinfo {author} {\bibfnamefont {M.~S.}\ \bibnamefont
  {Allman}}, \bibinfo {author} {\bibfnamefont {J.}~\bibnamefont {Arkinstall}},
  \bibinfo {author} {\bibfnamefont {J.~P.}\ \bibnamefont {Campora}}, \bibinfo
  {author} {\bibfnamefont {S.~F.}\ \bibnamefont {Cooper}}, \bibinfo {author}
  {\bibfnamefont {R.~D.}\ \bibnamefont {Delaney}}, \bibinfo {author}
  {\bibfnamefont {J.~M.}\ \bibnamefont {Dreiling}}, \bibinfo {author}
  {\bibfnamefont {B.}~\bibnamefont {Estey}}, \bibinfo {author} {\bibfnamefont
  {C.}~\bibnamefont {Figgatt}}, \bibinfo {author} {\bibfnamefont
  {A.}~\bibnamefont {Hall}}, \bibinfo {author} {\bibfnamefont {A.~A.}\
  \bibnamefont {Husain}}, \bibinfo {author} {\bibfnamefont {A.}~\bibnamefont
  {Isanaka}}, \bibinfo {author} {\bibfnamefont {C.~J.}\ \bibnamefont
  {Kennedy}}, \bibinfo {author} {\bibfnamefont {N.}~\bibnamefont
  {Kotibhaskar}}, \bibinfo {author} {\bibfnamefont {I.~S.}\ \bibnamefont
  {Madjarov}}, \bibinfo {author} {\bibfnamefont {K.}~\bibnamefont {Mayer}},
  \bibinfo {author} {\bibfnamefont {A.~R.}\ \bibnamefont {Milne}}, \bibinfo
  {author} {\bibfnamefont {A.~J.}\ \bibnamefont {Park}}, \bibinfo {author}
  {\bibfnamefont {A.~P.}\ \bibnamefont {Reed}}, \bibinfo {author}
  {\bibfnamefont {R.}~\bibnamefont {Ancona}}, \bibinfo {author} {\bibfnamefont
  {M.~P.}\ \bibnamefont {Andersen}}, \bibinfo {author} {\bibfnamefont
  {P.}~\bibnamefont {{Andres-Martinez}}}, \bibinfo {author} {\bibfnamefont
  {W.}~\bibnamefont {Angenent}}, \bibinfo {author} {\bibfnamefont
  {L.}~\bibnamefont {Argueta}}, \bibinfo {author} {\bibfnamefont
  {B.}~\bibnamefont {Arkin}}, \bibinfo {author} {\bibfnamefont
  {L.}~\bibnamefont {Ascarrunz}}, \bibinfo {author} {\bibfnamefont
  {W.}~\bibnamefont {Baker}}, \bibinfo {author} {\bibfnamefont
  {C.}~\bibnamefont {Barnes}}, \bibinfo {author} {\bibfnamefont
  {J.}~\bibnamefont {Bartolotta}}, \bibinfo {author} {\bibfnamefont
  {J.}~\bibnamefont {Berg}}, \bibinfo {author} {\bibfnamefont {R.}~\bibnamefont
  {Besand}}, \bibinfo {author} {\bibfnamefont {B.}~\bibnamefont {Bjork}},
  \bibinfo {author} {\bibfnamefont {M.}~\bibnamefont {Blain}}, \bibinfo
  {author} {\bibfnamefont {P.}~\bibnamefont {Blanchard}}, \bibinfo {author}
  {\bibfnamefont {R.}~\bibnamefont {{Blume-Kohout}}}, \bibinfo {author}
  {\bibfnamefont {M.}~\bibnamefont {Bohn}}, \bibinfo {author} {\bibfnamefont
  {A.}~\bibnamefont {Borgna}}, \bibinfo {author} {\bibfnamefont {D.~Y.}\
  \bibnamefont {Botamanenko}}, \bibinfo {author} {\bibfnamefont
  {R.}~\bibnamefont {Boutelle}}, \bibinfo {author} {\bibfnamefont
  {N.}~\bibnamefont {Brown}}, \bibinfo {author} {\bibfnamefont {G.~T.}\
  \bibnamefont {Buckingham}}, \bibinfo {author} {\bibfnamefont {N.~Q.}\
  \bibnamefont {Burdick}}, \bibinfo {author} {\bibfnamefont {W.~C.}\
  \bibnamefont {Burton}}, \bibinfo {author} {\bibfnamefont {V.}~\bibnamefont
  {Carey}}, \bibinfo {author} {\bibfnamefont {C.~J.}\ \bibnamefont {Carron}},
  \bibinfo {author} {\bibfnamefont {J.}~\bibnamefont {Chambers}}, \bibinfo
  {author} {\bibfnamefont {J.~W.}\ \bibnamefont {Chan}}, \bibinfo {author}
  {\bibfnamefont {J.}~\bibnamefont {Children}}, \bibinfo {author}
  {\bibfnamefont {V.~E.}\ \bibnamefont {Colussi}}, \bibinfo {author}
  {\bibfnamefont {S.}~\bibnamefont {Crepinsek}}, \bibinfo {author}
  {\bibfnamefont {A.}~\bibnamefont {Cureton}}, \bibinfo {author} {\bibfnamefont
  {J.}~\bibnamefont {Davies}}, \bibinfo {author} {\bibfnamefont
  {D.}~\bibnamefont {Davis}}, \bibinfo {author} {\bibfnamefont
  {M.}~\bibnamefont {DeCross}}, \bibinfo {author} {\bibfnamefont
  {D.}~\bibnamefont {Deen}}, \bibinfo {author} {\bibfnamefont {C.}~\bibnamefont
  {Delaney}}, \bibinfo {author} {\bibfnamefont {D.}~\bibnamefont {DelVento}},
  \bibinfo {author} {\bibfnamefont {B.~J.}\ \bibnamefont {DeSalvo}}, \bibinfo
  {author} {\bibfnamefont {J.}~\bibnamefont {Dominy}}, \bibinfo {author}
  {\bibfnamefont {S.}~\bibnamefont {Drotar}}, \bibinfo {author} {\bibfnamefont
  {R.}~\bibnamefont {Duncan}}, \bibinfo {author} {\bibfnamefont
  {V.}~\bibnamefont {Eccles}}, \bibinfo {author} {\bibfnamefont
  {A.}~\bibnamefont {Edgington}}, \bibinfo {author} {\bibfnamefont
  {N.}~\bibnamefont {Erickson}}, \bibinfo {author} {\bibfnamefont
  {S.}~\bibnamefont {Erickson}}, \bibinfo {author} {\bibfnamefont {C.~T.}\
  \bibnamefont {Ertsgaard}}, \bibinfo {author} {\bibfnamefont {J.}~\bibnamefont
  {Esposito}}, \bibinfo {author} {\bibfnamefont {B.}~\bibnamefont {Evans}},
  \bibinfo {author} {\bibfnamefont {T.}~\bibnamefont {Evans}}, \bibinfo
  {author} {\bibfnamefont {M.~I.}\ \bibnamefont {Fabrikant}}, \bibinfo {author}
  {\bibfnamefont {A.}~\bibnamefont {Fischer}}, \bibinfo {author} {\bibfnamefont
  {C.}~\bibnamefont {Foltz}}, \bibinfo {author} {\bibfnamefont
  {M.}~\bibnamefont {{Foss-Feig}}}, \bibinfo {author} {\bibfnamefont
  {D.}~\bibnamefont {Francois}}, \bibinfo {author} {\bibfnamefont
  {B.}~\bibnamefont {Freyberg}}, \bibinfo {author} {\bibfnamefont
  {C.}~\bibnamefont {Gao}}, \bibinfo {author} {\bibfnamefont {R.}~\bibnamefont
  {Garay}}, \bibinfo {author} {\bibfnamefont {J.}~\bibnamefont {Garvin}},
  \bibinfo {author} {\bibfnamefont {D.~M.}\ \bibnamefont {Gaudiosi}}, \bibinfo
  {author} {\bibfnamefont {C.~N.}\ \bibnamefont {Gilbreth}}, \bibinfo {author}
  {\bibfnamefont {J.}~\bibnamefont {Giles}}, \bibinfo {author} {\bibfnamefont
  {E.}~\bibnamefont {Glynn}}, \bibinfo {author} {\bibfnamefont
  {J.}~\bibnamefont {Graves}}, \bibinfo {author} {\bibfnamefont
  {A.}~\bibnamefont {Hansen}}, \bibinfo {author} {\bibfnamefont
  {D.}~\bibnamefont {Hayes}}, \bibinfo {author} {\bibfnamefont
  {L.}~\bibnamefont {Heidemann}}, \bibinfo {author} {\bibfnamefont
  {B.}~\bibnamefont {Higashi}}, \bibinfo {author} {\bibfnamefont
  {T.}~\bibnamefont {Hilbun}}, \bibinfo {author} {\bibfnamefont
  {J.}~\bibnamefont {Hines}}, \bibinfo {author} {\bibfnamefont
  {A.}~\bibnamefont {Hlavaty}}, \bibinfo {author} {\bibfnamefont
  {K.}~\bibnamefont {Hoffman}}, \bibinfo {author} {\bibfnamefont {I.~M.}\
  \bibnamefont {Hoffman}}, \bibinfo {author} {\bibfnamefont {C.}~\bibnamefont
  {Holliman}}, \bibinfo {author} {\bibfnamefont {I.}~\bibnamefont {Hooper}},
  \bibinfo {author} {\bibfnamefont {B.}~\bibnamefont {Horning}}, \bibinfo
  {author} {\bibfnamefont {J.}~\bibnamefont {Hostetter}}, \bibinfo {author}
  {\bibfnamefont {D.}~\bibnamefont {Hothem}}, \bibinfo {author} {\bibfnamefont
  {J.}~\bibnamefont {Houlton}}, \bibinfo {author} {\bibfnamefont
  {J.}~\bibnamefont {Hout}}, \bibinfo {author} {\bibfnamefont {R.}~\bibnamefont
  {Hutson}}, \bibinfo {author} {\bibfnamefont {R.~T.}\ \bibnamefont {Jacobs}},
  \bibinfo {author} {\bibfnamefont {T.}~\bibnamefont {Jacobs}}, \bibinfo
  {author} {\bibfnamefont {M.}~\bibnamefont {Johannsen}}, \bibinfo {author}
  {\bibfnamefont {J.}~\bibnamefont {Johansen}}, \bibinfo {author}
  {\bibfnamefont {L.}~\bibnamefont {Jones}}, \bibinfo {author} {\bibfnamefont
  {S.}~\bibnamefont {Julian}}, \bibinfo {author} {\bibfnamefont
  {R.}~\bibnamefont {Jung}}, \bibinfo {author} {\bibfnamefont {A.}~\bibnamefont
  {Keay}}, \bibinfo {author} {\bibfnamefont {T.}~\bibnamefont {Klein}},
  \bibinfo {author} {\bibfnamefont {M.}~\bibnamefont {Koch}}, \bibinfo {author}
  {\bibfnamefont {R.}~\bibnamefont {Kondo}}, \bibinfo {author} {\bibfnamefont
  {C.}~\bibnamefont {Kong}}, \bibinfo {author} {\bibfnamefont {A.}~\bibnamefont
  {Kosto}}, \bibinfo {author} {\bibfnamefont {A.}~\bibnamefont {Lawrence}},
  \bibinfo {author} {\bibfnamefont {D.}~\bibnamefont {Liefer}}, \bibinfo
  {author} {\bibfnamefont {M.}~\bibnamefont {Lollie}}, \bibinfo {author}
  {\bibfnamefont {D.}~\bibnamefont {Lucchetti}}, \bibinfo {author}
  {\bibfnamefont {N.~K.}\ \bibnamefont {Lysne}}, \bibinfo {author}
  {\bibfnamefont {C.}~\bibnamefont {Lytle}}, \bibinfo {author} {\bibfnamefont
  {C.}~\bibnamefont {MacPherson}}, \bibinfo {author} {\bibfnamefont
  {A.}~\bibnamefont {Malm}}, \bibinfo {author} {\bibfnamefont {S.}~\bibnamefont
  {Mather}}, \bibinfo {author} {\bibfnamefont {B.}~\bibnamefont {Mathewson}},
  \bibinfo {author} {\bibfnamefont {D.}~\bibnamefont {Maxwell}}, \bibinfo
  {author} {\bibfnamefont {L.}~\bibnamefont {McCaffrey}}, \bibinfo {author}
  {\bibfnamefont {H.}~\bibnamefont {McDougall}}, \bibinfo {author}
  {\bibfnamefont {R.}~\bibnamefont {Mendoza}}, \bibinfo {author} {\bibfnamefont
  {D.~B.}\ \bibnamefont {Miller}}, \bibinfo {author} {\bibfnamefont
  {M.}~\bibnamefont {Mills}}, \bibinfo {author} {\bibfnamefont
  {R.}~\bibnamefont {Morrison}}, \bibinfo {author} {\bibfnamefont
  {L.}~\bibnamefont {Narmour}}, \bibinfo {author} {\bibfnamefont
  {N.}~\bibnamefont {Nguyen}}, \bibinfo {author} {\bibfnamefont
  {L.}~\bibnamefont {Nugent}}, \bibinfo {author} {\bibfnamefont
  {S.}~\bibnamefont {Olson}}, \bibinfo {author} {\bibfnamefont
  {D.}~\bibnamefont {Ouellette}}, \bibinfo {author} {\bibfnamefont
  {J.}~\bibnamefont {Parks}}, \bibinfo {author} {\bibfnamefont
  {Z.}~\bibnamefont {Peters}}, \bibinfo {author} {\bibfnamefont {T.~A.}\
  \bibnamefont {Peterson}}, \bibinfo {author} {\bibfnamefont {J.}~\bibnamefont
  {Petricka}}, \bibinfo {author} {\bibfnamefont {J.~M.}\ \bibnamefont {Pino}},
  \bibinfo {author} {\bibfnamefont {F.}~\bibnamefont {Polito}}, \bibinfo
  {author} {\bibfnamefont {A.~C.}\ \bibnamefont {Potter}}, \bibinfo {author}
  {\bibfnamefont {M.}~\bibnamefont {Preidl}}, \bibinfo {author} {\bibfnamefont
  {G.}~\bibnamefont {Price}}, \bibinfo {author} {\bibfnamefont
  {T.}~\bibnamefont {Proctor}}, \bibinfo {author} {\bibfnamefont
  {M.}~\bibnamefont {Pugh}}, \bibinfo {author} {\bibfnamefont {N.}~\bibnamefont
  {Ratcliff}}, \bibinfo {author} {\bibfnamefont {D.}~\bibnamefont
  {Raymondson}}, \bibinfo {author} {\bibfnamefont {P.}~\bibnamefont {Rhodes}},
  \bibinfo {author} {\bibfnamefont {C.}~\bibnamefont {Roman}}, \bibinfo
  {author} {\bibfnamefont {C.}~\bibnamefont {Roy}}, \bibinfo {author}
  {\bibfnamefont {C.}~\bibnamefont {{Ryan-Anderson}}}, \bibinfo {author}
  {\bibfnamefont {F.~B.}\ \bibnamefont {Sanchez}}, \bibinfo {author}
  {\bibfnamefont {G.}~\bibnamefont {Sangiolo}}, \bibinfo {author}
  {\bibfnamefont {T.}~\bibnamefont {Sawadski}}, \bibinfo {author}
  {\bibfnamefont {A.}~\bibnamefont {Schaffer}}, \bibinfo {author}
  {\bibfnamefont {P.}~\bibnamefont {Schow}}, \bibinfo {author} {\bibfnamefont
  {J.}~\bibnamefont {Sedlacek}}, \bibinfo {author} {\bibfnamefont
  {H.}~\bibnamefont {Semenenko}}, \bibinfo {author} {\bibfnamefont
  {P.}~\bibnamefont {Shevchuk}}, \bibinfo {author} {\bibfnamefont
  {S.}~\bibnamefont {Shore}}, \bibinfo {author} {\bibfnamefont
  {P.}~\bibnamefont {Siegfried}}, \bibinfo {author} {\bibfnamefont
  {K.}~\bibnamefont {Singhal}}, \bibinfo {author} {\bibfnamefont
  {S.}~\bibnamefont {Sivarajah}}, \bibinfo {author} {\bibfnamefont
  {T.}~\bibnamefont {Skripka}}, \bibinfo {author} {\bibfnamefont
  {L.}~\bibnamefont {Sletten}}, \bibinfo {author} {\bibfnamefont
  {B.}~\bibnamefont {Spaun}}, \bibinfo {author} {\bibfnamefont {R.~T.}\
  \bibnamefont {Sprenkle}}, \bibinfo {author} {\bibfnamefont {P.}~\bibnamefont
  {Stoufer}}, \bibinfo {author} {\bibfnamefont {M.}~\bibnamefont {Tader}},
  \bibinfo {author} {\bibfnamefont {S.~F.}\ \bibnamefont {Taylor}}, \bibinfo
  {author} {\bibfnamefont {T.~H.}\ \bibnamefont {Thompson}}, \bibinfo {author}
  {\bibfnamefont {R.}~\bibnamefont {Tobey}}, \bibinfo {author} {\bibfnamefont
  {A.}~\bibnamefont {Tran}}, \bibinfo {author} {\bibfnamefont {T.}~\bibnamefont
  {Tran}}, \bibinfo {author} {\bibfnamefont {G.}~\bibnamefont {Vittorini}},
  \bibinfo {author} {\bibfnamefont {C.}~\bibnamefont {Volin}}, \bibinfo
  {author} {\bibfnamefont {J.}~\bibnamefont {Walker}}, \bibinfo {author}
  {\bibfnamefont {S.}~\bibnamefont {White}}, \bibinfo {author} {\bibfnamefont
  {G.~R.}\ \bibnamefont {Williams}}, \bibinfo {author} {\bibfnamefont
  {D.}~\bibnamefont {Wilson}}, \bibinfo {author} {\bibfnamefont
  {Q.}~\bibnamefont {Wolf}}, \bibinfo {author} {\bibfnamefont {C.}~\bibnamefont
  {Wringe}}, \bibinfo {author} {\bibfnamefont {K.}~\bibnamefont {Young}},
  \bibinfo {author} {\bibfnamefont {J.}~\bibnamefont {Zheng}}, \bibinfo
  {author} {\bibfnamefont {K.}~\bibnamefont {Zuraski}}, \bibinfo {author}
  {\bibfnamefont {C.~H.}\ \bibnamefont {Baldwin}}, \bibinfo {author}
  {\bibfnamefont {A.}~\bibnamefont {Chernoguzov}}, \bibinfo {author}
  {\bibfnamefont {J.~P.}\ \bibnamefont {Gaebler}}, \bibinfo {author}
  {\bibfnamefont {S.~J.}\ \bibnamefont {Sanders}}, \bibinfo {author}
  {\bibfnamefont {B.}~\bibnamefont {Neyenhuis}}, \bibinfo {author}
  {\bibfnamefont {R.}~\bibnamefont {Stutz}},\ and\ \bibinfo {author}
  {\bibfnamefont {J.~G.}\ \bibnamefont {Bohnet}},\ }\bibfield  {title}
  {\bibinfo {title} {A 98-qubit trapped-ion quantum computer with all-to-all
  connectivity},\ }\href {https://doi.org/10.1038/s41586-026-10676-4}
  {\bibfield  {journal} {\bibinfo  {journal} {Nature}\ }\textbf {\bibinfo
  {volume} {655}},\ \bibinfo {pages} {81} (\bibinfo {year} {2026})}\BibitemShut
  {NoStop}%
\bibitem [{\citenamefont {Beugnon}\ \emph {et~al.}(2007)\citenamefont
  {Beugnon}, \citenamefont {Tuchendler}, \citenamefont {Marion}, \citenamefont
  {Ga{\"e}tan}, \citenamefont {Miroshnychenko}, \citenamefont {Sortais},
  \citenamefont {Lance}, \citenamefont {Jones}, \citenamefont {Messin},
  \citenamefont {Browaeys},\ and\ \citenamefont
  {Grangier}}]{Beugnon2007TweezerTransport}%
  \BibitemOpen
  \bibfield  {author} {\bibinfo {author} {\bibfnamefont {J.}~\bibnamefont
  {Beugnon}}, \bibinfo {author} {\bibfnamefont {C.}~\bibnamefont {Tuchendler}},
  \bibinfo {author} {\bibfnamefont {H.}~\bibnamefont {Marion}}, \bibinfo
  {author} {\bibfnamefont {A.}~\bibnamefont {Ga{\"e}tan}}, \bibinfo {author}
  {\bibfnamefont {Y.}~\bibnamefont {Miroshnychenko}}, \bibinfo {author}
  {\bibfnamefont {Y.~R.~P.}\ \bibnamefont {Sortais}}, \bibinfo {author}
  {\bibfnamefont {A.~M.}\ \bibnamefont {Lance}}, \bibinfo {author}
  {\bibfnamefont {M.~P.~A.}\ \bibnamefont {Jones}}, \bibinfo {author}
  {\bibfnamefont {G.}~\bibnamefont {Messin}}, \bibinfo {author} {\bibfnamefont
  {A.}~\bibnamefont {Browaeys}},\ and\ \bibinfo {author} {\bibfnamefont
  {P.}~\bibnamefont {Grangier}},\ }\bibfield  {title} {\bibinfo {title}
  {Two-dimensional transport and transfer of a single atomic qubit in optical
  tweezers},\ }\href {https://doi.org/10.1038/nphys698} {\bibfield  {journal}
  {\bibinfo  {journal} {Nature Physics}\ }\textbf {\bibinfo {volume} {3}},\
  \bibinfo {pages} {696} (\bibinfo {year} {2007})}\BibitemShut {NoStop}%
\bibitem [{\citenamefont {Barredo}\ \emph {et~al.}(2016)\citenamefont
  {Barredo}, \citenamefont {de~L{\'e}s{\'e}leuc}, \citenamefont {Lienhard},
  \citenamefont {Lahaye},\ and\ \citenamefont
  {Browaeys}}]{Barredo2016Assembler}%
  \BibitemOpen
  \bibfield  {author} {\bibinfo {author} {\bibfnamefont {D.}~\bibnamefont
  {Barredo}}, \bibinfo {author} {\bibfnamefont {S.}~\bibnamefont
  {de~L{\'e}s{\'e}leuc}}, \bibinfo {author} {\bibfnamefont {V.}~\bibnamefont
  {Lienhard}}, \bibinfo {author} {\bibfnamefont {T.}~\bibnamefont {Lahaye}},\
  and\ \bibinfo {author} {\bibfnamefont {A.}~\bibnamefont {Browaeys}},\
  }\bibfield  {title} {\bibinfo {title} {An atom-by-atom assembler of
  defect-free arbitrary two-dimensional atomic arrays},\ }\href
  {https://doi.org/10.1126/science.aah3778} {\bibfield  {journal} {\bibinfo
  {journal} {Science}\ }\textbf {\bibinfo {volume} {354}},\ \bibinfo {pages}
  {1021} (\bibinfo {year} {2016})}\BibitemShut {NoStop}%
\bibitem [{\citenamefont {Endres}\ \emph {et~al.}(2016)\citenamefont {Endres},
  \citenamefont {Bernien}, \citenamefont {Keesling}, \citenamefont {Levine},
  \citenamefont {Anschuetz}, \citenamefont {Krajenbrink}, \citenamefont
  {Senko}, \citenamefont {Vuleti{\'c}}, \citenamefont {Greiner},\ and\
  \citenamefont {Lukin}}]{Endres2016Assembly}%
  \BibitemOpen
  \bibfield  {author} {\bibinfo {author} {\bibfnamefont {M.}~\bibnamefont
  {Endres}}, \bibinfo {author} {\bibfnamefont {H.}~\bibnamefont {Bernien}},
  \bibinfo {author} {\bibfnamefont {A.}~\bibnamefont {Keesling}}, \bibinfo
  {author} {\bibfnamefont {H.}~\bibnamefont {Levine}}, \bibinfo {author}
  {\bibfnamefont {E.~R.}\ \bibnamefont {Anschuetz}}, \bibinfo {author}
  {\bibfnamefont {A.}~\bibnamefont {Krajenbrink}}, \bibinfo {author}
  {\bibfnamefont {C.}~\bibnamefont {Senko}}, \bibinfo {author} {\bibfnamefont
  {V.}~\bibnamefont {Vuleti{\'c}}}, \bibinfo {author} {\bibfnamefont
  {M.}~\bibnamefont {Greiner}},\ and\ \bibinfo {author} {\bibfnamefont {M.~D.}\
  \bibnamefont {Lukin}},\ }\bibfield  {title} {\bibinfo {title} {Atom-by-atom
  assembly of defect-free one-dimensional cold atom arrays},\ }\href
  {https://doi.org/10.1126/science.aah3752} {\bibfield  {journal} {\bibinfo
  {journal} {Science}\ }\textbf {\bibinfo {volume} {354}},\ \bibinfo {pages}
  {1024} (\bibinfo {year} {2016})}\BibitemShut {NoStop}%
\bibitem [{\citenamefont {Browaeys}\ and\ \citenamefont
  {Lahaye}(2020)}]{BrowaeysLahaye2020RydbergReview}%
  \BibitemOpen
  \bibfield  {author} {\bibinfo {author} {\bibfnamefont {A.}~\bibnamefont
  {Browaeys}}\ and\ \bibinfo {author} {\bibfnamefont {T.}~\bibnamefont
  {Lahaye}},\ }\bibfield  {title} {\bibinfo {title} {Many-body physics with
  individually controlled rydberg atoms},\ }\href
  {https://doi.org/10.1038/s41567-019-0733-z} {\bibfield  {journal} {\bibinfo
  {journal} {Nature Physics}\ }\textbf {\bibinfo {volume} {16}},\ \bibinfo
  {pages} {132} (\bibinfo {year} {2020})}\BibitemShut {NoStop}%
\bibitem [{\citenamefont {Bluvstein}\ \emph {et~al.}(2022)\citenamefont
  {Bluvstein}, \citenamefont {Levine}, \citenamefont {Semeghini}, \citenamefont
  {Wang}, \citenamefont {Ebadi}, \citenamefont {Kalinowski}, \citenamefont
  {Keesling}, \citenamefont {Maskara}, \citenamefont {Pichler}, \citenamefont
  {Greiner}, \citenamefont {Vuleti{\'c}},\ and\ \citenamefont
  {Lukin}}]{bluvstein2022quantum}%
  \BibitemOpen
  \bibfield  {author} {\bibinfo {author} {\bibfnamefont {D.}~\bibnamefont
  {Bluvstein}}, \bibinfo {author} {\bibfnamefont {H.}~\bibnamefont {Levine}},
  \bibinfo {author} {\bibfnamefont {G.}~\bibnamefont {Semeghini}}, \bibinfo
  {author} {\bibfnamefont {T.~T.}\ \bibnamefont {Wang}}, \bibinfo {author}
  {\bibfnamefont {S.}~\bibnamefont {Ebadi}}, \bibinfo {author} {\bibfnamefont
  {M.}~\bibnamefont {Kalinowski}}, \bibinfo {author} {\bibfnamefont
  {A.}~\bibnamefont {Keesling}}, \bibinfo {author} {\bibfnamefont
  {N.}~\bibnamefont {Maskara}}, \bibinfo {author} {\bibfnamefont
  {H.}~\bibnamefont {Pichler}}, \bibinfo {author} {\bibfnamefont
  {M.}~\bibnamefont {Greiner}}, \bibinfo {author} {\bibfnamefont
  {V.}~\bibnamefont {Vuleti{\'c}}},\ and\ \bibinfo {author} {\bibfnamefont
  {M.~D.}\ \bibnamefont {Lukin}},\ }\bibfield  {title} {\bibinfo {title} {A
  quantum processor based on coherent transport of entangled atom arrays},\
  }\href {https://doi.org/10.1038/s41586-022-04592-6} {\bibfield  {journal}
  {\bibinfo  {journal} {Nature}\ }\textbf {\bibinfo {volume} {604}},\ \bibinfo
  {pages} {451} (\bibinfo {year} {2022})}\BibitemShut {NoStop}%
\bibitem [{\citenamefont {Sales~Rodriguez}\ \emph {et~al.}(2025)\citenamefont
  {Sales~Rodriguez}, \citenamefont {Robinson}, \citenamefont {Jepsen} \emph
  {et~al.}}]{salesrodriguez2025magic}%
  \BibitemOpen
  \bibfield  {author} {\bibinfo {author} {\bibfnamefont {P.}~\bibnamefont
  {Sales~Rodriguez}}, \bibinfo {author} {\bibfnamefont {J.~M.}\ \bibnamefont
  {Robinson}}, \bibinfo {author} {\bibfnamefont {P.~N.}\ \bibnamefont
  {Jepsen}}, \emph {et~al.},\ }\bibfield  {title} {\bibinfo {title}
  {Experimental demonstration of logical magic state distillation},\ }\href
  {https://doi.org/10.1038/s41586-025-09367-3} {\bibfield  {journal} {\bibinfo
  {journal} {Nature}\ }\textbf {\bibinfo {volume} {645}},\ \bibinfo {pages}
  {620} (\bibinfo {year} {2025})}\BibitemShut {NoStop}%
\bibitem [{\citenamefont {Constantinides}\ \emph {et~al.}(2024)\citenamefont
  {Constantinides}, \citenamefont {Fahimniya}, \citenamefont {Devulapalli},
  \citenamefont {Bluvstein}, \citenamefont {Gullans}, \citenamefont {Porto},
  \citenamefont {Childs},\ and\ \citenamefont
  {Gorshkov}}]{constantinides2024optimal}%
  \BibitemOpen
  \bibfield  {author} {\bibinfo {author} {\bibfnamefont {N.}~\bibnamefont
  {Constantinides}}, \bibinfo {author} {\bibfnamefont {A.}~\bibnamefont
  {Fahimniya}}, \bibinfo {author} {\bibfnamefont {D.}~\bibnamefont
  {Devulapalli}}, \bibinfo {author} {\bibfnamefont {D.}~\bibnamefont
  {Bluvstein}}, \bibinfo {author} {\bibfnamefont {M.~J.}\ \bibnamefont
  {Gullans}}, \bibinfo {author} {\bibfnamefont {J.}~\bibnamefont {Porto}},
  \bibinfo {author} {\bibfnamefont {A.~M.}\ \bibnamefont {Childs}},\ and\
  \bibinfo {author} {\bibfnamefont {A.~V.}\ \bibnamefont {Gorshkov}},\
  }\bibfield  {title} {\bibinfo {title} {Optimal routing protocols for
  reconfigurable atom arrays},\ }\href {https://arxiv.org/abs/2411.05061}
  {\bibfield  {journal} {\bibinfo  {journal} {arXiv preprint arXiv:2411.05061}\
  } (\bibinfo {year} {2024})}\BibitemShut {NoStop}%
\bibitem [{\citenamefont {Monroe}\ \emph {et~al.}(1996)\citenamefont {Monroe},
  \citenamefont {Meekhof}, \citenamefont {King},\ and\ \citenamefont
  {Wineland}}]{Monroe1996CatState}%
  \BibitemOpen
  \bibfield  {author} {\bibinfo {author} {\bibfnamefont {C.}~\bibnamefont
  {Monroe}}, \bibinfo {author} {\bibfnamefont {D.~M.}\ \bibnamefont {Meekhof}},
  \bibinfo {author} {\bibfnamefont {B.~E.}\ \bibnamefont {King}},\ and\
  \bibinfo {author} {\bibfnamefont {D.~J.}\ \bibnamefont {Wineland}},\
  }\bibfield  {title} {\bibinfo {title} {A {``Schr{\"o}dinger Cat''}
  superposition state of an atom},\ }\href
  {https://doi.org/10.1126/science.272.5265.1131} {\bibfield  {journal}
  {\bibinfo  {journal} {Science}\ }\textbf {\bibinfo {volume} {272}},\ \bibinfo
  {pages} {1131} (\bibinfo {year} {1996})}\BibitemShut {NoStop}%
\bibitem [{\citenamefont {Saner}\ \emph {et~al.}(2026)\citenamefont {Saner},
  \citenamefont {B{\u{a}}z{\u{a}}van}, \citenamefont {Webb}, \citenamefont
  {Araneda}, \citenamefont {Lucas}, \citenamefont {Ballance},\ and\
  \citenamefont {Srinivas}}]{saner2026generating}%
  \BibitemOpen
  \bibfield  {author} {\bibinfo {author} {\bibfnamefont {S.}~\bibnamefont
  {Saner}}, \bibinfo {author} {\bibfnamefont {O.}~\bibnamefont
  {B{\u{a}}z{\u{a}}van}}, \bibinfo {author} {\bibfnamefont {D.}~\bibnamefont
  {Webb}}, \bibinfo {author} {\bibfnamefont {G.}~\bibnamefont {Araneda}},
  \bibinfo {author} {\bibfnamefont {D.}~\bibnamefont {Lucas}}, \bibinfo
  {author} {\bibfnamefont {C.}~\bibnamefont {Ballance}},\ and\ \bibinfo
  {author} {\bibfnamefont {R.}~\bibnamefont {Srinivas}},\ }\bibfield  {title}
  {\bibinfo {title} {Generating arbitrary superpositions of nonclassical
  quantum harmonic oscillator states},\ }\href
  {https://doi.org/10.1103/k1xk-yt42} {\bibfield  {journal} {\bibinfo
  {journal} {Physical Review X}\ }\textbf {\bibinfo {volume} {16}},\ \bibinfo
  {pages} {021049} (\bibinfo {year} {2026})}\BibitemShut {NoStop}%
\bibitem [{\citenamefont {Leindecker}\ \emph {et~al.}(2026)\citenamefont
  {Leindecker}, \citenamefont {Milanovic}, \citenamefont {Behrle},
  \citenamefont {Brucke}, \citenamefont {Marinelli}, \citenamefont {Schmidt},
  \citenamefont {Home},\ and\ \citenamefont {Hempel}}]{leindecker2026state}%
  \BibitemOpen
  \bibfield  {author} {\bibinfo {author} {\bibfnamefont {P.}~\bibnamefont
  {Leindecker}}, \bibinfo {author} {\bibfnamefont {L.}~\bibnamefont
  {Milanovic}}, \bibinfo {author} {\bibfnamefont {T.}~\bibnamefont {Behrle}},
  \bibinfo {author} {\bibfnamefont {E.}~\bibnamefont {Brucke}}, \bibinfo
  {author} {\bibfnamefont {M.}~\bibnamefont {Marinelli}}, \bibinfo {author}
  {\bibfnamefont {J.}~\bibnamefont {Schmidt}}, \bibinfo {author} {\bibfnamefont
  {J.}~\bibnamefont {Home}},\ and\ \bibinfo {author} {\bibfnamefont
  {C.}~\bibnamefont {Hempel}},\ }\bibfield  {title} {\bibinfo {title}
  {State-dependent gaussian gate set using an optical tweezer for trapped
  ions},\ }\href {https://arxiv.org/abs/2606.31864} {\bibfield  {journal}
  {\bibinfo  {journal} {arXiv preprint arXiv:2606.31864}\ } (\bibinfo {year}
  {2026})}\BibitemShut {NoStop}%
\bibitem [{\citenamefont {Diaz}\ \emph {et~al.}(2026)\citenamefont {Diaz},
  \citenamefont {Mohammadi-Arzanagh}, \citenamefont {Zhu}, \citenamefont
  {Hafezi}, \citenamefont {Linke}, \citenamefont {Green},\ and\ \citenamefont
  {Nam}}]{diaz2026arbitrary}%
  \BibitemOpen
  \bibfield  {author} {\bibinfo {author} {\bibfnamefont {M.}~\bibnamefont
  {Diaz}}, \bibinfo {author} {\bibfnamefont {M.}~\bibnamefont
  {Mohammadi-Arzanagh}}, \bibinfo {author} {\bibfnamefont {Y.}~\bibnamefont
  {Zhu}}, \bibinfo {author} {\bibfnamefont {M.}~\bibnamefont {Hafezi}},
  \bibinfo {author} {\bibfnamefont {N.~M.}\ \bibnamefont {Linke}}, \bibinfo
  {author} {\bibfnamefont {A.~M.}\ \bibnamefont {Green}},\ and\ \bibinfo
  {author} {\bibfnamefont {A.~Y.}\ \bibnamefont {Nam}},\ }\bibfield  {title}
  {\bibinfo {title} {Arbitrary parallel entangling gates with independent
  calibration on a trapped ion quantum computer},\ }\href
  {https://arxiv.org/abs/2604.25993} {\bibfield  {journal} {\bibinfo  {journal}
  {arXiv preprint arXiv:2604.25993}\ } (\bibinfo {year} {2026})}\BibitemShut
  {NoStop}%
\bibitem [{\citenamefont {Mandel}\ \emph {et~al.}(2003)\citenamefont {Mandel},
  \citenamefont {Greiner}, \citenamefont {Widera}, \citenamefont {Rom},
  \citenamefont {H{\"a}nsch},\ and\ \citenamefont
  {Bloch}}]{Mandel2003SpinDependentTransport}%
  \BibitemOpen
  \bibfield  {author} {\bibinfo {author} {\bibfnamefont {O.}~\bibnamefont
  {Mandel}}, \bibinfo {author} {\bibfnamefont {M.}~\bibnamefont {Greiner}},
  \bibinfo {author} {\bibfnamefont {A.}~\bibnamefont {Widera}}, \bibinfo
  {author} {\bibfnamefont {T.}~\bibnamefont {Rom}}, \bibinfo {author}
  {\bibfnamefont {T.~W.}\ \bibnamefont {H{\"a}nsch}},\ and\ \bibinfo {author}
  {\bibfnamefont {I.}~\bibnamefont {Bloch}},\ }\bibfield  {title} {\bibinfo
  {title} {Coherent transport of neutral atoms in spin-dependent optical
  lattice potentials},\ }\href {https://doi.org/10.1103/PhysRevLett.91.010407}
  {\bibfield  {journal} {\bibinfo  {journal} {Physical Review Letters}\
  }\textbf {\bibinfo {volume} {91}},\ \bibinfo {pages} {010407} (\bibinfo
  {year} {2003})}\BibitemShut {NoStop}%
\bibitem [{\citenamefont {Greve}\ \emph {et~al.}(2022)\citenamefont {Greve},
  \citenamefont {Luo}, \citenamefont {Wu},\ and\ \citenamefont
  {Thompson}}]{greve_entanglement-enhanced_2022-1}%
  \BibitemOpen
  \bibfield  {author} {\bibinfo {author} {\bibfnamefont {G.~P.}\ \bibnamefont
  {Greve}}, \bibinfo {author} {\bibfnamefont {C.}~\bibnamefont {Luo}}, \bibinfo
  {author} {\bibfnamefont {B.}~\bibnamefont {Wu}},\ and\ \bibinfo {author}
  {\bibfnamefont {J.~K.}\ \bibnamefont {Thompson}},\ }\bibfield  {title}
  {\bibinfo {title} {Entanglement-enhanced matter-wave interferometry in a
  high-finesse cavity},\ }\href {https://doi.org/10.1038/s41586-022-05197-9}
  {\bibfield  {journal} {\bibinfo  {journal} {Nature}\ }\textbf {\bibinfo
  {volume} {610}},\ \bibinfo {pages} {472} (\bibinfo {year}
  {2022})}\BibitemShut {NoStop}%
\bibitem [{\citenamefont {Aeppli}\ \emph {et~al.}(2022)\citenamefont {Aeppli},
  \citenamefont {Chu}, \citenamefont {Bothwell}, \citenamefont {Kennedy},
  \citenamefont {Kedar}, \citenamefont {He}, \citenamefont {Rey},\ and\
  \citenamefont {Ye}}]{aeppli_hamiltonian_2022}%
  \BibitemOpen
  \bibfield  {author} {\bibinfo {author} {\bibfnamefont {A.}~\bibnamefont
  {Aeppli}}, \bibinfo {author} {\bibfnamefont {A.}~\bibnamefont {Chu}},
  \bibinfo {author} {\bibfnamefont {T.}~\bibnamefont {Bothwell}}, \bibinfo
  {author} {\bibfnamefont {C.~J.}\ \bibnamefont {Kennedy}}, \bibinfo {author}
  {\bibfnamefont {D.}~\bibnamefont {Kedar}}, \bibinfo {author} {\bibfnamefont
  {P.}~\bibnamefont {He}}, \bibinfo {author} {\bibfnamefont {A.~M.}\
  \bibnamefont {Rey}},\ and\ \bibinfo {author} {\bibfnamefont {J.}~\bibnamefont
  {Ye}},\ }\bibfield  {title} {\bibinfo {title} {Hamiltonian engineering of
  spin-orbit--coupled fermions in a {{Wannier-Stark}} optical lattice clock},\
  }\href {https://doi.org/10.1126/sciadv.adc9242} {\bibfield  {journal}
  {\bibinfo  {journal} {Science Advances}\ }\textbf {\bibinfo {volume} {8}},\
  \bibinfo {pages} {eadc9242} (\bibinfo {year} {2022})}\BibitemShut {NoStop}%
\bibitem [{\citenamefont {Malia}\ \emph {et~al.}(2022)\citenamefont {Malia},
  \citenamefont {Wu}, \citenamefont {{Mart{\'i}nez-Rinc{\'o}n}},\ and\
  \citenamefont {Kasevich}}]{malia_distributed_2022}%
  \BibitemOpen
  \bibfield  {author} {\bibinfo {author} {\bibfnamefont {B.~K.}\ \bibnamefont
  {Malia}}, \bibinfo {author} {\bibfnamefont {Y.}~\bibnamefont {Wu}}, \bibinfo
  {author} {\bibfnamefont {J.}~\bibnamefont {{Mart{\'i}nez-Rinc{\'o}n}}},\ and\
  \bibinfo {author} {\bibfnamefont {M.~A.}\ \bibnamefont {Kasevich}},\
  }\bibfield  {title} {\bibinfo {title} {Distributed quantum sensing with
  mode-entangled spin-squeezed atomic states},\ }\href
  {https://doi.org/10.1038/s41586-022-05363-z} {\bibfield  {journal} {\bibinfo
  {journal} {Nature}\ }\textbf {\bibinfo {volume} {612}},\ \bibinfo {pages}
  {661} (\bibinfo {year} {2022})}\BibitemShut {NoStop}%
\bibitem [{\citenamefont {Bluvstein}\ \emph {et~al.}(2026)\citenamefont
  {Bluvstein}, \citenamefont {Geim}, \citenamefont {Li}, \citenamefont
  {Evered}, \citenamefont {Bonilla~Ataides}, \citenamefont {Baranes},
  \citenamefont {Gu}, \citenamefont {Manovitz}, \citenamefont {Xu},
  \citenamefont {Kalinowski} \emph {et~al.}}]{bluvstein2026fault}%
  \BibitemOpen
  \bibfield  {author} {\bibinfo {author} {\bibfnamefont {D.}~\bibnamefont
  {Bluvstein}}, \bibinfo {author} {\bibfnamefont {A.~A.}\ \bibnamefont {Geim}},
  \bibinfo {author} {\bibfnamefont {S.~H.}\ \bibnamefont {Li}}, \bibinfo
  {author} {\bibfnamefont {S.~J.}\ \bibnamefont {Evered}}, \bibinfo {author}
  {\bibfnamefont {J.~P.}\ \bibnamefont {Bonilla~Ataides}}, \bibinfo {author}
  {\bibfnamefont {G.}~\bibnamefont {Baranes}}, \bibinfo {author} {\bibfnamefont
  {A.}~\bibnamefont {Gu}}, \bibinfo {author} {\bibfnamefont {T.}~\bibnamefont
  {Manovitz}}, \bibinfo {author} {\bibfnamefont {M.}~\bibnamefont {Xu}},
  \bibinfo {author} {\bibfnamefont {M.}~\bibnamefont {Kalinowski}}, \emph
  {et~al.},\ }\bibfield  {title} {\bibinfo {title} {A fault-tolerant
  neutral-atom architecture for universal quantum computation},\ }\href
  {https://doi.org/10.1038/s41586-025-09848-5} {\bibfield  {journal} {\bibinfo
  {journal} {Nature}\ }\textbf {\bibinfo {volume} {649}},\ \bibinfo {pages}
  {39} (\bibinfo {year} {2026})}\BibitemShut {NoStop}%
\bibitem [{\citenamefont {Pampel}\ \emph {et~al.}(2026)\citenamefont {Pampel},
  \citenamefont {Lubin}, \citenamefont {Hewatt}, \citenamefont {McCabe},
  \citenamefont {Hwang}, \citenamefont {Muleady}, \citenamefont {Xu},
  \citenamefont {Rey},\ and\ \citenamefont {Regal}}]{pampel2026motional}%
  \BibitemOpen
  \bibfield  {author} {\bibinfo {author} {\bibfnamefont {S.~K.}\ \bibnamefont
  {Pampel}}, \bibinfo {author} {\bibfnamefont {G.}~\bibnamefont {Lubin}},
  \bibinfo {author} {\bibfnamefont {D.~P.}\ \bibnamefont {Hewatt}}, \bibinfo
  {author} {\bibfnamefont {C.}~\bibnamefont {McCabe}}, \bibinfo {author}
  {\bibfnamefont {J.}~\bibnamefont {Hwang}}, \bibinfo {author} {\bibfnamefont
  {S.~R.}\ \bibnamefont {Muleady}}, \bibinfo {author} {\bibfnamefont
  {T.}~\bibnamefont {Xu}}, \bibinfo {author} {\bibfnamefont {A.~M.}\
  \bibnamefont {Rey}},\ and\ \bibinfo {author} {\bibfnamefont {C.~A.}\
  \bibnamefont {Regal}},\ }\bibfield  {title} {\bibinfo {title} {Motional
  {K}err-cat states of an atom in an optical tweezer},\ }\href
  {https://arxiv.org/abs/2607.18579} {\bibfield  {journal} {\bibinfo  {journal}
  {arXiv preprint arXiv:2607.18579}\ } (\bibinfo {year} {2026})}\BibitemShut
  {NoStop}%
\bibitem [{\citenamefont {Manetsch}\ \emph {et~al.}(2025)\citenamefont
  {Manetsch}, \citenamefont {Nomura}, \citenamefont {Bataille}, \citenamefont
  {Lv}, \citenamefont {Leung},\ and\ \citenamefont
  {Endres}}]{manetsch2025tweezer}%
  \BibitemOpen
  \bibfield  {author} {\bibinfo {author} {\bibfnamefont {H.~J.}\ \bibnamefont
  {Manetsch}}, \bibinfo {author} {\bibfnamefont {G.}~\bibnamefont {Nomura}},
  \bibinfo {author} {\bibfnamefont {E.}~\bibnamefont {Bataille}}, \bibinfo
  {author} {\bibfnamefont {X.}~\bibnamefont {Lv}}, \bibinfo {author}
  {\bibfnamefont {K.~H.}\ \bibnamefont {Leung}},\ and\ \bibinfo {author}
  {\bibfnamefont {M.}~\bibnamefont {Endres}},\ }\bibfield  {title} {\bibinfo
  {title} {A tweezer array with 6,100 highly coherent atomic qubits},\ }\href
  {https://doi.org/10.1038/s41586-025-09641-4} {\bibfield  {journal} {\bibinfo
  {journal} {Nature}\ }\textbf {\bibinfo {volume} {647}},\ \bibinfo {pages}
  {60} (\bibinfo {year} {2025})}\BibitemShut {NoStop}%
\bibitem [{\citenamefont {Karski}\ \emph {et~al.}(2009)\citenamefont {Karski},
  \citenamefont {F{\"o}rster}, \citenamefont {Choi}, \citenamefont {Steffen},
  \citenamefont {Alt}, \citenamefont {Meschede},\ and\ \citenamefont
  {Widera}}]{Karski2009QuantumWalk}%
  \BibitemOpen
  \bibfield  {author} {\bibinfo {author} {\bibfnamefont {M.}~\bibnamefont
  {Karski}}, \bibinfo {author} {\bibfnamefont {L.}~\bibnamefont {F{\"o}rster}},
  \bibinfo {author} {\bibfnamefont {J.-M.}\ \bibnamefont {Choi}}, \bibinfo
  {author} {\bibfnamefont {A.}~\bibnamefont {Steffen}}, \bibinfo {author}
  {\bibfnamefont {W.}~\bibnamefont {Alt}}, \bibinfo {author} {\bibfnamefont
  {D.}~\bibnamefont {Meschede}},\ and\ \bibinfo {author} {\bibfnamefont
  {A.}~\bibnamefont {Widera}},\ }\bibfield  {title} {\bibinfo {title} {Quantum
  walk in position space with single optically trapped atoms},\ }\href
  {https://doi.org/10.1126/science.1174436} {\bibfield  {journal} {\bibinfo
  {journal} {Science}\ }\textbf {\bibinfo {volume} {325}},\ \bibinfo {pages}
  {174} (\bibinfo {year} {2009})}\BibitemShut {NoStop}%
\bibitem [{\citenamefont {Schmitz}\ \emph {et~al.}(2009)\citenamefont
  {Schmitz}, \citenamefont {Matjeschk}, \citenamefont {Schneider},
  \citenamefont {Glueckert}, \citenamefont {Enderlein}, \citenamefont {Huber},\
  and\ \citenamefont {Schaetz}}]{Schmitz2009IonQuantumWalk}%
  \BibitemOpen
  \bibfield  {author} {\bibinfo {author} {\bibfnamefont {H.}~\bibnamefont
  {Schmitz}}, \bibinfo {author} {\bibfnamefont {R.}~\bibnamefont {Matjeschk}},
  \bibinfo {author} {\bibfnamefont {C.}~\bibnamefont {Schneider}}, \bibinfo
  {author} {\bibfnamefont {J.}~\bibnamefont {Glueckert}}, \bibinfo {author}
  {\bibfnamefont {M.}~\bibnamefont {Enderlein}}, \bibinfo {author}
  {\bibfnamefont {T.}~\bibnamefont {Huber}},\ and\ \bibinfo {author}
  {\bibfnamefont {T.}~\bibnamefont {Schaetz}},\ }\bibfield  {title} {\bibinfo
  {title} {Quantum walk of a trapped ion in phase space},\ }\href
  {https://doi.org/10.1103/PhysRevLett.103.090504} {\bibfield  {journal}
  {\bibinfo  {journal} {Physical Review Letters}\ }\textbf {\bibinfo {volume}
  {103}},\ \bibinfo {pages} {090504} (\bibinfo {year} {2009})}\BibitemShut
  {NoStop}%
\bibitem [{\citenamefont {Young}\ \emph {et~al.}(2022)\citenamefont {Young},
  \citenamefont {Eckner}, \citenamefont {Schine}, \citenamefont {Childs},\ and\
  \citenamefont {Kaufman}}]{young_tweezer-programmable_2022-1}%
  \BibitemOpen
  \bibfield  {author} {\bibinfo {author} {\bibfnamefont {A.~W.}\ \bibnamefont
  {Young}}, \bibinfo {author} {\bibfnamefont {W.~J.}\ \bibnamefont {Eckner}},
  \bibinfo {author} {\bibfnamefont {N.}~\bibnamefont {Schine}}, \bibinfo
  {author} {\bibfnamefont {A.~M.}\ \bibnamefont {Childs}},\ and\ \bibinfo
  {author} {\bibfnamefont {A.~M.}\ \bibnamefont {Kaufman}},\ }\bibfield
  {title} {\bibinfo {title} {Tweezer-programmable {{2D}} quantum walks in a
  {{Hubbard-regime}} lattice},\ }\href
  {https://doi.org/10.1126/science.abo0608} {\bibfield  {journal} {\bibinfo
  {journal} {Science}\ }\textbf {\bibinfo {volume} {377}},\ \bibinfo {pages}
  {885} (\bibinfo {year} {2022})}\BibitemShut {NoStop}%
\bibitem [{\citenamefont {Grover}(1996)}]{Grover1996STOC}%
  \BibitemOpen
  \bibfield  {author} {\bibinfo {author} {\bibfnamefont {L.~K.}\ \bibnamefont
  {Grover}},\ }\bibfield  {title} {\bibinfo {title} {A fast quantum mechanical
  algorithm for database search},\ }in\ \href
  {https://doi.org/10.1145/237814.237866} {\emph {\bibinfo {booktitle}
  {Proceedings of the 28th Annual ACM Symposium on Theory of Computing
  (STOC)}}}\ (\bibinfo  {publisher} {ACM},\ \bibinfo {year} {1996})\ pp.\
  \bibinfo {pages} {212--219}\BibitemShut {NoStop}%
\bibitem [{\citenamefont {Boyer}\ \emph {et~al.}(1998)\citenamefont {Boyer},
  \citenamefont {Brassard}, \citenamefont {H{\o}yer},\ and\ \citenamefont
  {Tapp}}]{Boyer1998TightBounds}%
  \BibitemOpen
  \bibfield  {author} {\bibinfo {author} {\bibfnamefont {M.}~\bibnamefont
  {Boyer}}, \bibinfo {author} {\bibfnamefont {G.}~\bibnamefont {Brassard}},
  \bibinfo {author} {\bibfnamefont {P.}~\bibnamefont {H{\o}yer}},\ and\
  \bibinfo {author} {\bibfnamefont {A.}~\bibnamefont {Tapp}},\ }\bibfield
  {title} {\bibinfo {title} {Tight bounds on quantum searching},\ }\href
  {https://doi.org/10.1002/(SICI)1521-3978(199806)46:4/5<493::AID-PROP493>3.0.CO;2-P}
  {\bibfield  {journal} {\bibinfo  {journal} {Fortschritte der Physik}\
  }\textbf {\bibinfo {volume} {46}},\ \bibinfo {pages} {493} (\bibinfo {year}
  {1998})}\BibitemShut {NoStop}%
\bibitem [{\citenamefont {Zalka}(1999)}]{Zalka1999Optimal}%
  \BibitemOpen
  \bibfield  {author} {\bibinfo {author} {\bibfnamefont {C.}~\bibnamefont
  {Zalka}},\ }\bibfield  {title} {\bibinfo {title} {Grover's quantum searching
  algorithm is optimal},\ }\href {https://doi.org/10.1103/PhysRevA.60.2746}
  {\bibfield  {journal} {\bibinfo  {journal} {Physical Review A}\ }\textbf
  {\bibinfo {volume} {60}},\ \bibinfo {pages} {2746} (\bibinfo {year}
  {1999})}\BibitemShut {NoStop}%
\bibitem [{\citenamefont {Farhi}\ and\ \citenamefont
  {Gutmann}(1998)}]{farhi1998analog}%
  \BibitemOpen
  \bibfield  {author} {\bibinfo {author} {\bibfnamefont {E.}~\bibnamefont
  {Farhi}}\ and\ \bibinfo {author} {\bibfnamefont {S.}~\bibnamefont
  {Gutmann}},\ }\bibfield  {title} {\bibinfo {title} {Analog analogue of a
  digital quantum computation},\ }\href
  {https://doi.org/10.1103/PhysRevA.57.2403} {\bibfield  {journal} {\bibinfo
  {journal} {Physical Review A}\ }\textbf {\bibinfo {volume} {57}},\ \bibinfo
  {pages} {2403} (\bibinfo {year} {1998})}\BibitemShut {NoStop}%
\bibitem [{\citenamefont {Childs}\ \emph {et~al.}(2000)\citenamefont {Childs},
  \citenamefont {Preskill},\ and\ \citenamefont
  {Renes}}]{Childs2000PrecisionMeasurement}%
  \BibitemOpen
  \bibfield  {author} {\bibinfo {author} {\bibfnamefont {A.~M.}\ \bibnamefont
  {Childs}}, \bibinfo {author} {\bibfnamefont {J.}~\bibnamefont {Preskill}},\
  and\ \bibinfo {author} {\bibfnamefont {J.}~\bibnamefont {Renes}},\ }\bibfield
   {title} {\bibinfo {title} {Quantum information and precision measurement},\
  }\href {https://doi.org/10.1080/095003400148123} {\bibfield  {journal}
  {\bibinfo  {journal} {Journal of Modern Optics}\ }\textbf {\bibinfo {volume}
  {47}},\ \bibinfo {pages} {155} (\bibinfo {year} {2000})}\BibitemShut
  {NoStop}%
\bibitem [{\citenamefont {Benioff}(2002)}]{benioff2002space}%
  \BibitemOpen
  \bibfield  {author} {\bibinfo {author} {\bibfnamefont {P.}~\bibnamefont
  {Benioff}},\ }\bibfield  {title} {\bibinfo {title} {Space searches with a
  quantum robot},\ }in\ \href {https://arxiv.org/abs/quant-ph/0003006} {\emph
  {\bibinfo {booktitle} {Quantum Computation and Information}}},\ \bibinfo
  {series} {Contemporary Mathematics}, Vol.\ \bibinfo {volume} {305}\ (\bibinfo
   {publisher} {American Mathematical Society},\ \bibinfo {year}
  {2002})\BibitemShut {NoStop}%
\bibitem [{\citenamefont {Aaronson}\ and\ \citenamefont
  {Ambainis}(2005)}]{aaronson2005quantum}%
  \BibitemOpen
  \bibfield  {author} {\bibinfo {author} {\bibfnamefont {S.}~\bibnamefont
  {Aaronson}}\ and\ \bibinfo {author} {\bibfnamefont {A.}~\bibnamefont
  {Ambainis}},\ }\bibfield  {title} {\bibinfo {title} {Quantum search of
  spatial regions},\ }\href {https://doi.org/10.4086/toc.2005.v001a004}
  {\bibfield  {journal} {\bibinfo  {journal} {Theory of Computing}\ }\textbf
  {\bibinfo {volume} {1}},\ \bibinfo {pages} {47} (\bibinfo {year}
  {2005})}\BibitemShut {NoStop}%
\bibitem [{\citenamefont {Childs}\ and\ \citenamefont
  {Goldstone}(2004)}]{childs2004spatial}%
  \BibitemOpen
  \bibfield  {author} {\bibinfo {author} {\bibfnamefont {A.~M.}\ \bibnamefont
  {Childs}}\ and\ \bibinfo {author} {\bibfnamefont {J.}~\bibnamefont
  {Goldstone}},\ }\bibfield  {title} {\bibinfo {title} {Spatial search by
  quantum walk},\ }\href {https://doi.org/10.1103/PhysRevA.70.022314}
  {\bibfield  {journal} {\bibinfo  {journal} {Physical Review A}\ }\textbf
  {\bibinfo {volume} {70}},\ \bibinfo {pages} {022314} (\bibinfo {year}
  {2004})}\BibitemShut {NoStop}%
\bibitem [{\citenamefont {Allen}\ \emph {et~al.}(2025)\citenamefont {Allen},
  \citenamefont {Machado}, \citenamefont {Chuang}, \citenamefont {Huang},\ and\
  \citenamefont {Choi}}]{allen2025quantum}%
  \BibitemOpen
  \bibfield  {author} {\bibinfo {author} {\bibfnamefont {R.~R.}\ \bibnamefont
  {Allen}}, \bibinfo {author} {\bibfnamefont {F.}~\bibnamefont {Machado}},
  \bibinfo {author} {\bibfnamefont {I.~L.}\ \bibnamefont {Chuang}}, \bibinfo
  {author} {\bibfnamefont {H.-Y.}\ \bibnamefont {Huang}},\ and\ \bibinfo
  {author} {\bibfnamefont {S.}~\bibnamefont {Choi}},\ }\bibfield  {title}
  {\bibinfo {title} {Quantum computing enhanced sensing},\ }\href
  {https://arxiv.org/abs/2501.07625} {\bibfield  {journal} {\bibinfo  {journal}
  {arXiv preprint arXiv:2501.07625}\ } (\bibinfo {year} {2025})}\BibitemShut
  {NoStop}%
\bibitem [{\citenamefont {Bennett}\ \emph {et~al.}(1997)\citenamefont
  {Bennett}, \citenamefont {Bernstein}, \citenamefont {Brassard},\ and\
  \citenamefont {Vazirani}}]{bennett1997strengths}%
  \BibitemOpen
  \bibfield  {author} {\bibinfo {author} {\bibfnamefont {C.~H.}\ \bibnamefont
  {Bennett}}, \bibinfo {author} {\bibfnamefont {E.}~\bibnamefont {Bernstein}},
  \bibinfo {author} {\bibfnamefont {G.}~\bibnamefont {Brassard}},\ and\
  \bibinfo {author} {\bibfnamefont {U.}~\bibnamefont {Vazirani}},\ }\bibfield
  {title} {\bibinfo {title} {Strengths and weaknesses of quantum computing},\
  }\href {https://doi.org/10.1137/S0097539796300933} {\bibfield  {journal}
  {\bibinfo  {journal} {SIAM Journal on Computing}\ }\textbf {\bibinfo {volume}
  {26}},\ \bibinfo {pages} {1510} (\bibinfo {year} {1997})}\BibitemShut
  {NoStop}%
\bibitem [{Note1()}]{Note1}%
  \BibitemOpen
  \bibinfo {note} {Note that the proof in~\protect \cref {app: grover_proof}
  also bounds the success probability for general state-discrimination, which
  upper bounds~\protect \cref {eq:P_succ}.}\BibitemShut {Stop}%
\bibitem [{\citenamefont {Leng}\ \emph {et~al.}(2023)\citenamefont {Leng},
  \citenamefont {Yang},\ and\ \citenamefont {Wang}}]{leng2023quantum}%
  \BibitemOpen
  \bibfield  {author} {\bibinfo {author} {\bibfnamefont {J.}~\bibnamefont
  {Leng}}, \bibinfo {author} {\bibfnamefont {F.}~\bibnamefont {Yang}},\ and\
  \bibinfo {author} {\bibfnamefont {X.-B.}\ \bibnamefont {Wang}},\ }\bibfield
  {title} {\bibinfo {title} {Quantum advantage of noisy grover's algorithm},\
  }\href {https://arxiv.org/abs/2306.10855} {\bibfield  {journal} {\bibinfo
  {journal} {arXiv preprint arXiv:2306.10855}\ } (\bibinfo {year}
  {2023})}\BibitemShut {NoStop}%
\bibitem [{\citenamefont {Zhou}\ \emph {et~al.}(2018)\citenamefont {Zhou},
  \citenamefont {Zhang}, \citenamefont {Preskill},\ and\ \citenamefont
  {Jiang}}]{zhou2018achieving}%
  \BibitemOpen
  \bibfield  {author} {\bibinfo {author} {\bibfnamefont {S.}~\bibnamefont
  {Zhou}}, \bibinfo {author} {\bibfnamefont {M.}~\bibnamefont {Zhang}},
  \bibinfo {author} {\bibfnamefont {J.}~\bibnamefont {Preskill}},\ and\
  \bibinfo {author} {\bibfnamefont {L.}~\bibnamefont {Jiang}},\ }\bibfield
  {title} {\bibinfo {title} {Achieving the heisenberg limit in quantum
  metrology using quantum error correction},\ }\href
  {https://doi.org/10.1038/s41467-017-02510-3} {\bibfield  {journal} {\bibinfo
  {journal} {Nature Communications}\ }\textbf {\bibinfo {volume} {9}},\
  \bibinfo {pages} {78} (\bibinfo {year} {2018})}\BibitemShut {NoStop}%
\bibitem [{\citenamefont {Conlon}\ \emph {et~al.}(2026)\citenamefont {Conlon},
  \citenamefont {Wang}, \citenamefont {Abbasgholinejad}, \citenamefont
  {Albert}, \citenamefont {Gullans},\ and\ \citenamefont
  {Gorshkov}}]{conlon2026fault}%
  \BibitemOpen
  \bibfield  {author} {\bibinfo {author} {\bibfnamefont {L.~O.}\ \bibnamefont
  {Conlon}}, \bibinfo {author} {\bibfnamefont {Y.-X.}\ \bibnamefont {Wang}},
  \bibinfo {author} {\bibfnamefont {E.}~\bibnamefont {Abbasgholinejad}},
  \bibinfo {author} {\bibfnamefont {V.~V.}\ \bibnamefont {Albert}}, \bibinfo
  {author} {\bibfnamefont {M.~J.}\ \bibnamefont {Gullans}},\ and\ \bibinfo
  {author} {\bibfnamefont {A.~V.}\ \bibnamefont {Gorshkov}},\ }\bibfield
  {title} {\bibinfo {title} {Fault-tolerant heisenberg-limited quantum
  sensing},\ }\href {https://arxiv.org/abs/2608.00171} {\bibfield  {journal}
  {\bibinfo  {journal} {arXiv preprint arXiv:2608.00171}\ } (\bibinfo {year}
  {2026})}\BibitemShut {NoStop}%
\bibitem [{\citenamefont {Demkowicz-Dobrza{\'n}ski}\ \emph
  {et~al.}(2017)\citenamefont {Demkowicz-Dobrza{\'n}ski}, \citenamefont
  {Czajkowski},\ and\ \citenamefont {Sekatski}}]{demkowicz2017adaptive}%
  \BibitemOpen
  \bibfield  {author} {\bibinfo {author} {\bibfnamefont {R.}~\bibnamefont
  {Demkowicz-Dobrza{\'n}ski}}, \bibinfo {author} {\bibfnamefont
  {J.}~\bibnamefont {Czajkowski}},\ and\ \bibinfo {author} {\bibfnamefont
  {P.}~\bibnamefont {Sekatski}},\ }\bibfield  {title} {\bibinfo {title}
  {Adaptive quantum metrology under general markovian noise},\ }\href
  {https://doi.org/10.1103/PhysRevX.7.041009} {\bibfield  {journal} {\bibinfo
  {journal} {Physical Review X}\ }\textbf {\bibinfo {volume} {7}},\ \bibinfo
  {pages} {041009} (\bibinfo {year} {2017})}\BibitemShut {NoStop}%
\bibitem [{\citenamefont {Helstrom}(1976)}]{Helstrom1976}%
  \BibitemOpen
  \bibfield  {author} {\bibinfo {author} {\bibfnamefont {C.~W.}\ \bibnamefont
  {Helstrom}},\ }\href {https://books.google.com/books?id=fv9SAAAAMAAJ} {\emph
  {\bibinfo {title} {Quantum Detection and Estimation Theory}}}\ (\bibinfo
  {publisher} {Academic Press},\ \bibinfo {address} {New York},\ \bibinfo
  {year} {1976})\BibitemShut {NoStop}%
\bibitem [{\citenamefont {Holevo}(1973)}]{Holevo1973}%
  \BibitemOpen
  \bibfield  {author} {\bibinfo {author} {\bibfnamefont {A.~S.}\ \bibnamefont
  {Holevo}},\ }\bibfield  {title} {\bibinfo {title} {Statistical decision
  theory for quantum systems},\ }\href
  {https://doi.org/10.1016/0047-259X(73)90028-6} {\bibfield  {journal}
  {\bibinfo  {journal} {Journal of Multivariate Analysis}\ }\textbf {\bibinfo
  {volume} {3}},\ \bibinfo {pages} {337} (\bibinfo {year} {1973})}\BibitemShut
  {NoStop}%
\bibitem [{\citenamefont {Barnett}\ and\ \citenamefont
  {Croke}(2009)}]{BarnettCroke2009}%
  \BibitemOpen
  \bibfield  {author} {\bibinfo {author} {\bibfnamefont {S.~M.}\ \bibnamefont
  {Barnett}}\ and\ \bibinfo {author} {\bibfnamefont {S.}~\bibnamefont
  {Croke}},\ }\bibfield  {title} {\bibinfo {title} {Quantum state
  discrimination},\ }\href {https://doi.org/10.1364/AOP.1.000238} {\bibfield
  {journal} {\bibinfo  {journal} {Advances in Optics and Photonics}\ }\textbf
  {\bibinfo {volume} {1}},\ \bibinfo {pages} {238} (\bibinfo {year}
  {2009})}\BibitemShut {NoStop}%
\bibitem [{\citenamefont {Hayashi}\ and\ \citenamefont
  {Nagaoka}(2003)}]{HayashiNagaoka2003}%
  \BibitemOpen
  \bibfield  {author} {\bibinfo {author} {\bibfnamefont {M.}~\bibnamefont
  {Hayashi}}\ and\ \bibinfo {author} {\bibfnamefont {H.}~\bibnamefont
  {Nagaoka}},\ }\bibfield  {title} {\bibinfo {title} {General formulas for
  capacity of classical-quantum channels},\ }\href
  {https://doi.org/10.1109/TIT.2003.813556} {\bibfield  {journal} {\bibinfo
  {journal} {IEEE Transactions on Information Theory}\ }\textbf {\bibinfo
  {volume} {49}},\ \bibinfo {pages} {1753} (\bibinfo {year}
  {2003})}\BibitemShut {NoStop}%
\end{thebibliography}
\end{document}